\numberwithin{equation}{section}
\title{Momentum Distribution of a Fermi Gas with Coulomb Interaction in the Random Phase Approximation}
\author[1,*,***]{Niels Benedikter}
\author[2,**]{Sascha Lill}
\author[3,*]{Diwakar Naidu}
\affil[1]{ORCID: \href{https://orcid.org/0000-0002-1071-6091}{0000-0002-1071-6091}, e--mail: \href{mailto:niels.benedikter@unimi.it}{niels.benedikter@unimi.it}}
\affil[2]{ORCID: \href{https://orcid.org/0000-0002-9474-9914}{0000-0002-9474-9914}, e--mail: \href{mailto:sali@math.ku.dk}{sali@math.ku.dk}}
\affil[3]{ORCID: \href{https://orcid.org/0009-0000-5567-4529}{0009-0000-5567-4529}, e--mail: \href{mailto:diwakar.naidu@unimi.it}{diwakar.naidu@unimi.it}}
\affil[*]{Università degli Studi di Milano, Via Cesare Saldini 50, 20133 Milano, Italy}
\affil[**]{University of Copenhagen, Universitetsparken 5, 2100 Copenhagen, Denmark}
\affil[***]{External Scientific Member of Basque Center for Applied Mathematics, Alameda de Mazarredo 14, 48009 Bilbao, Bizkaia, Spain}
\newcommand{\cE}{\mathcal{E}}
\newcommand{\cF}{\mathcal{F}}
\newcommand{\cK}{\mathcal{K}}
\newcommand{\cN}{\mathcal{N}}
\newcommand{\cO}{\mathcal{O}}
\newcommand{\fR}{\mathfrak{R}}
\newcommand{\CCC}{\mathbb{C}}
\newcommand{\NNN}{\mathbb{N}}
\newcommand{\TTT}{\mathbb{T}}
\newcommand{\ZZZ}{\mathbb{Z}}
\newcommand{\Zbb}{\mathbb{Z}}
\newcommand{\ulambda}{\underline{\lambda}}
\newcommand{\Bog}{\textnormal{Bog}}
\newcommand{\corr}{\textnormal{corr}}
\newcommand{\di}{\textnormal{d}}
\newcommand{\ex}{\mathrm{ex}}
\newcommand{\F}{\mathrm{F}}
\newcommand{\FS}{\mathrm{FS}}
\newcommand{\GS}{\mathrm{gs}}
\newcommand{\HS}{\mathrm{HS}}
\newcommand{\I}{\mathrm{I}}
\newcommand{\II}{\mathrm{II}}
\newcommand{\RPA}{\mathrm{RPA}}
\newcommand{\kF}{k_\F}
\newcommand{\BFc}{B_\F^c}
\newcommand{\fock}{\mathcal{F}}
\newcommand{\Ncal}{\mathcal{N}}
\newcommand{\Nbb}{\mathbb{N}}
\newcommand{\tagg}[1]{ \stepcounter{equation} \tag{\theequation}
\label{#1} } % add tag and label in align*-environments
\newcommand{\Zstar}{\mathbb{Z}^3} %Make this \Z^3 \setminus \{0\} if needed.
\newcommand{\Rbb}{\mathbb{R}}
\newcommand{\R}{\mathbb{R}}
\newcommand{\C}{\mathbb{C}}
\newcommand{\N}{\mathbb{N}}
\newcommand{\Z}{\mathbb{Z}}
\newcommand{\T}{\mathbb{T}}
\newcommand{\Hcal}{\mathcal{H}}
\newcommand{\NN}{\mathcal{N}}
\newcommand\Item[1][]{%
  \ifx\relax#1\relax  \item \else \item[#1] \fi
  \abovedisplayskip=0pt\abovedisplayshortskip=0pt~\vspace*{-\baselineskip}}
\newcommand{\half}{\frac{1}{2}}
\newcommand{\eva}[1]{\left\langle #1 \right\rangle}
\theoremstyle{plain}
\newtheorem{theorem}{Theorem}[section]
\newtheorem{lemma}[theorem]{Lemma}
\newtheorem{proposition}[theorem]{Proposition}
\theoremstyle{definition}
\newtheorem{definition}[theorem]{Definition}
\newtheorem*{remarks}{Remarks}
\theoremstyle{remark}
\theoremstyle{plain}
\newtheorem*{theorem*}{Theorem}
\newtheorem*{lemma*}{Lemma}
\newtheorem*{corollary*}{Corollary}
\newtheorem*{proposition*}{Proposition}
\theoremstyle{definition}
\newtheorem*{definition*}{Definition}
\newtheorem*{problem*}{Problem}
\newtheorem*{assumption*}{Assumption}
\newtheorem*{example*}{Example}
\theoremstyle{remark}
\newtheorem*{claim*}{Claim}
\newtheorem*{remark*}{Remark}
\begin{document}
\maketitle
\begin{abstract}

We analyse the momentum distribution of a three-dimensional Fermi gas in the mean-field scaling regime in a trial state that was recently proven to reproduce the Gell-Mann--Brueckner correlation energy for Coulomb potentials. For a class of potentials including the Coulomb potential we show that the momentum distribution is given by a step profile corrected by a random phase approximation contribution. Moreover, for potentials with summable Fourier transform we provide optimal error bounds for the deviation from the random phase approximation. This refines a recent analysis by two of the authors to the physically most relevant potentials and to momenta closer to the Fermi surface. The proof relies on a double bootstrap method, improved over the earlier analysis to estimate the momentum distribution both globally over all momenta and pointwise. We argue that a similar result can be expected to hold also for the ground state.
\medskip

\noindent Key words: random phase approximation, quantum liquids, bosonization

\medskip

\noindent {\textit{2020 Mathematics Subject Classification}: 81V74, 82D20, 81Q05.}

\end{abstract}

\section{Introduction and Main Result}
\label{sec:intro}

We consider a quantum system of $N$ spinless fermionic particles moving on the torus $\mathbb{T}^3\coloneq \Rbb^3/ (2\pi \Zbb^3)$ of fixed side length $2\pi$. The system is described by the Hamilton operator
\begin{equation}
	H_N := - \sum_{j=1}^{N}\Delta_{x_j} + \kF^{-1} \sum_{1\leq i < j \leq N } V(x_i - x_j) \;,
\end{equation}
where $ \kF > 0 $ is the Fermi momentum. The particle number $ N $ is fixed as the number of momenta in the Fermi ball, $ N = |\{ k \in \ZZZ^3 : |k| \leq \kF \}| $. The Hamiltonian $ H_N $ acts on wave functions in the antisymmetric tensor product $L^2_{\mathrm{a}}(\T^{3N}) = \bigwedge_{j=1}^N L^2(\T^3)$.
The choice of the coupling constant as $\kF^{-1}$ represents the mean-field scaling limit, where we are interested in the asymptotics as $\kF \to \infty$.

At zero temperature, the system will be in a ground state, that is, a vector $ \Psi_{\GS} \in L^2_{\mathrm{a}}(\T^{3N}) $ which attains the ground state energy
\begin{equation} \label{eq:EGS}
	E_{\GS}
	:= \inf_{\substack{\Psi \in L^2_{\mathrm{a}}(\T^{3N}) \\||\Psi|| = 1}} \langle \Psi, H_N \Psi \rangle \;.
\end{equation}
In this article we are interested in the momentum distribution of states $ \Psi $ which are energetically close to $ \Psi_{\GS} $. The momentum distribution is the expectation value of the operator representing the number of fermions with momentum $q \in \Zbb^3$, i.\,e.,
\begin{align}
	n(q) \coloneq \eva{\Psi, a^*_q a_q \Psi} \;,
\end{align}
where $ a_q^*$ and $a_q $ are the fermionic creation and annihilation operators. As the ground state of interacting many-body systems is very difficult to compute, we focus our attention on a trial state $ \Psi_N $ which is expected to capture the ground state's properties at least to the first non-trivial order beyond mean-field theory. This is analogous to the analysis conducted in \cite{BL25} based on the collective bosonization methods of \cite{BNP+20,BNP+21}, however we now consider the ``patchless'' trial state of~\cite{CHN23}. As long as the analysis of the true ground state remains elusive, we believe that studying two different constructions of low-energy states and obtaining consistent expressions for the momentum distribution adds plausibility to the conjecture that the obtained momentum distribution is actually close to the one of the true ground state. Both methods, that of \cite{BL25} and \cite{CHN23}, are based on the idea of considering particle--hole pairs as approximately bosonic quasiparticles. The approach of \cite{BNP+20,BNP+21}, based on collective degrees of freedom averaged over patches on the Fermi surface, leads to estimates with a more pronounced bosonic nature; the approach of \cite{CHN23} avoids the use of cutoffs which in \cite{BL25} somewhat obscured the result at momenta very close to the Fermi surface.

In the non-interacting case ($ V=0 $), the ground state is given by a Slater determinant of $ N $ plane waves (which we also call the Fermi ball state)
\begin{equation}
	\Psi_{\FS}(x_1, x_2, \ldots, x_N) \coloneq \frac{1}{\sqrt{N!}}\text{det}\left(\frac{1}{(2\pi)^{3/2}}e^{ik_j\cdot x_i}\right)^N_{j,i=1} \;.
\end{equation}
The momenta $ k_j \in \Zbb^3$ are chosen to minimize the kinetic energy $ \sum_{j=1}^N |k_j|^2 $. To avoid a degenerate ground state, we assume that they fill up a Fermi ball
\begin{equation}
	B_{\F} \coloneq \{ k \in \ZZZ^3 : |k| \leq \kF \} \;, \qquad
	|B_{\F}| = N \qquad \textnormal{for some } \kF > 0 \;.
\end{equation}
The relation between the Fermi momentum $ \kF \in \Rbb$ and the particle number is inverted as
\begin{equation}
	\kF = \left(\frac{3}{4\pi}\right)^\frac{1}{3}N^\frac{1}{3} + \mathcal{O}(1) \qquad \textnormal{for } N\to \infty \;.
 \end{equation}
In the Fermi ball state the momentum distribution is the indicator function
\begin{equation}
	\langle \Psi_{\FS}, a_q^* a_q \Psi_{\FS} \rangle
	= \mathds{1}_{B_{\F}}(q) \;.
\end{equation}

\paragraph{The Fermionic Mean-Field Scaling Limit in Context} The mean field scaling of fermionic systems, in the sense of scaling down the interaction by a coupling constant of order $N^{-1/3}$ compared to the kinetic energy, was introduced by \cite{NS81} in the context of the derivation of the Vlasov equation from many-body quantum dynamics. Their result was generalized to a large class of interaction potentials by \cite{Spo81}. Much later, in this scaling, a derivation of the time-dependent Hartree--Fock equation for short times was given by \cite{EESY04}. This line of research was continued by \cite{BPS14b} who derived the time-dependent Hartree--Fock equation for arbitrary times (that proof was later simplified by \cite{BD23}). In the following the derivation of the time-dependent Hartree--Fock equation was generalized to fermions with Coulomb interaction \cite{PRSS17}, fermions with relativistic dispersion relation \cite{BPS14a}, fermions in mixed states \cite{BJP+16}, and fermions in magnetic fields \cite{BBMN25}. This also reinvigorated research on the derivation of the Vlasov equation, yielding results admitting increasingly singular potentials \cite{BPSS16, Saf18,Saf20,Saf20a,LS23,Saf21,CLS24}. Recently results have also been obtained extending the derivation of time-dependent Hartree--Fock theory to extended Fermi gases \cite{FPS23,FPS25}, compared to the mean-field limit in fixed volume.

In parallel, also the static properties of fermionic systems were studied. The ground state energy to the precision of Hartree--Fock theory was obtained by \cite{GS94} based on correlation inequalities developed in \cite{Bac92,Bac93} -- actually for the thermodynamic limit, but the method applies equally to the simpler mean-field scaling regime. The first result beyond Hartree--Fock theory, was achieved by a rigorous version of second order perturbation theory by \cite{HPR20}. A systematic treatment of the dominant correction to Hartree--Fock theory to all perturbative orders, in agreement with the random phase approximation, was achieved as an upper bound by \cite{BNP+20}. The corresponding lower bounds were proven for small interaction potential by \cite{BNP+21} and then by \cite{BPSS23} for arbitrary interaction potential. All these results relied on collective bosonization of particle--hole pairs delocalized in momentum space over patches on the Fermi surface. An alternative approach avoiding patches and bosonizing directly individual particle--hole pairs was developed by \cite{CHN23,CHN23a} and generalized to Coulomb interaction in \cite{CHN24}. This is the approach which gives rise to the trial state in the present paper.

The progress on the derivation of the random phase approximation for the ground state energy lead also to new results for the dynamics: a Fock space norm approximation for the dynamics of bosonizable degrees of freedom was proven in \cite{BNP+22} (see also the review \cite{Ben22} for a more detailed discussion of this context). Partial results for the spectrum and excited states were obtained by \cite{Ben21,CHN22}.

Apart from the mean-field scaling regime another well-studied case of the interacting Fermi gas is the dilute regime. In this context should be mentioned the analysis of the ground state energy of the spin-balanced case \cite{Gia23,Gia24} which has been inspired by bosonization and Bogoliubov theory \cite{FGHP21a,Gia23b} and recently been pushed to the order of the Huang--Yang formula \cite{GHNS24,GHNS25}, and the study of the ground state energy in the spin-polarized case \cite{LS24,LS24a,LS24b,LS24c}. Recently also the momentum distribution in a trial state approximating the ground state to Huang--Yang precision was computed for the dilute Fermi gas \cite{BGLL26}.

\subsection{Main Result}
\label{subsec:mainresult}

We also introduce the complement of the Fermi ball
\begin{equation}
   B_\text{F}^c := \Zbb^3 \setminus B_\text{F} \;.
\end{equation}
We will prove that the momentum distribution $ n(q) $ of a specific trial state energetically very close to the ground state, for $ q \in B_{\F}^c $ is approximately given by the random phase approximation
\begin{equation} \label{eq:nqb}
	n^{\RPA}(q)
	\coloneq \sum_{\ell \in \Zstar}\mathds{1}_{L_{\ell}}(q) \; \frac{1}{\pi}\int_0^\infty \frac{g_\ell (t^2-\lambda^2_{\ell,q}) (t^2 + \lambda^2_{\ell,q})^{-2}}{1 + 2g_\ell \sum_{p \in L_{\ell}}\lambda_{\ell,p} (t^2+\lambda^2_{\ell,p})^{-1}} \mathrm{d}t \;,
\end{equation}
where the lens $ L_\ell \in \Z^3 $, the excitation energy $ \lambda_{\ell,p} > 0 $, and $ g_\ell \ge 0 $ are defined by
\begin{equation} \label{eq:Lell}
	L_\ell \coloneq B_{\F}^c \cap (B_{\F} + \ell) \;, \qquad
	\lambda_{\ell,p} \coloneq \half (|p|^2 - |p-\ell|^2) \;, \qquad
	g_\ell \coloneq \frac{\kF^{-1} \hat{V}(\ell) }{2 (2 \pi)^3} \;,
\end{equation}
and the Fourier transform of the potential follows the convention
\begin{equation} \label{eq:Fourierconvention}
	\hat{V}(\ell) := \int V(x) e^{-i k \cdot x} \di x \;.
\end{equation}
For $ q \in B_{\F} $, we analogously have
\begin{equation}\label{eq:inside}
\begin{split}
	n(q) & \approx 1 - n^{\RPA}(q) \;, \\
	n^{\RPA}(q) & 	\coloneq \sum_{\ell \in \Zstar}\mathds{1}_{L_{\ell}}(q+\ell) \; \frac{1}{\pi}\int_0^\infty \frac{g_\ell (t^2-\lambda^2_{\ell,q+\ell}) (t^2 + \lambda^2_{\ell,q+\ell})^{-2}}{1 + 2g_\ell \sum_{p \in L_{\ell}}\lambda_{\ell,p} (t^2+\lambda^2_{\ell,p})^{-1}} \mathrm{d}t \;.
\end{split}
\end{equation}
Therefore $ n^{\RPA}(q) $ represents the deviation of $ n(q) $ from the indicator function $ \mathds{1}_{B_{\F}}(q) $ we found in the non-interacting case. The scaling of $ n^{\RPA}(q) $ and the error terms will depend on the energy distance from the Fermi surface that we measure by
\begin{equation} \label{eq:eq}
	e(q) \coloneq \abs{|q|^2 - \left( \inf_{p \in B_{\F}^c} |p|^2 - \half \right)} \;.
\end{equation}
Since $ q \in \Z^3 $, the numbers $|p|^2$, $|p-\ell|^2$, and $|q|^2 $ are integer and thus
\begin{equation}
	\lambda_{\ell,p} \geq \half \;, \qquad \textnormal{and} \qquad e(q) \ge \half \;.
\end{equation}
We introduce $ e(q) $ because it gives rise to the important improved lower bound
\begin{equation}
   \lambda_{\ell,q} \ge \half e(q) \;.
\end{equation}

Our main result shows the existence of a trial state that both minimizes the energy to the precision of the random phase approximation and has a momentum distribution as expected for a Fermi liquid, with a jump at the Fermi momentum and a correction to the height of the jump as predicted in the physics literature \cite{DV60}.

\begin{theorem}[Main result] \label{thm:main}
Assume that
\begin{equation} \label{hyp:Coulomb}
\hat{V} \ge 0 \text{ is radial, decreasing, and there exists $ C > 0 $ such that }
 \hat{V}(\ell) \le C |\ell|^{-2} \quad \forall \ell \in \Z^3\;.
\end{equation}
Then, for $\kF \to \infty $, there exists a sequence of trial states  $ \Psi_N \in L^2_{\mathrm{a}}(\T^{3N}) $ such that
\begin{itemize}
\item $ \Psi_N $ is energetically close to the ground state in the sense that for any $ \delta \in (0,\frac{1}{6}) $ there exists a constant $ C_\delta > 0 $ such that for all $k_\F$ we have
\begin{equation} \label{eq:main1}
	\eva{\Psi_N, H_N \Psi_N} - E_{\GS}
	\le C_\delta \kF^{1-\frac 16 + \delta} \;,
\end{equation}
\item and there exists a family of operators $\mathcal{E}(q)$ on Fock space such that for all $ \kF $ and $ q \in \Z^3 $, we have
\begin{equation} \label{eq:main2}
	n(q) = \eva{\Psi_N, a_q^* a_q \Psi_N}
	= \begin{cases}
	n^{\RPA}(q) + \cE(q) & \quad
		\textnormal{for } |q| \ge \kF \\
	1 - n^{\RPA}(q) + \cE(q) & \quad
		\textnormal{for } |q| < \kF
	\end{cases}
\end{equation}
with $ n^{\RPA}(q) $ defined in~\eqref{eq:nqb}, and where the error term is bounded by
\begin{equation}\label{eq:main_error}
	\lvert \cE(q)\rvert \le C_\delta \kF^{-1 - \frac 16 + \delta} e(q)^{-1} \;.
\end{equation}
\end{itemize}
If additionally we have
\begin{equation}\label{hyp:ell1}
   \sum_{\ell \in \Zstar} \hat{V}(\ell) < \infty\;,
\end{equation}
then the error term is even bounded by
\begin{equation} \label{eq:main_improvederror}
	\lvert \cE(q)\rvert \le C_\delta \kF^{-2 +\delta} e(q)^{-1} \;.
\end{equation}
\end{theorem}

\begin{remarks}
\begin{enumerate}

\item In Lemma~\ref{lem:nqb_bounds}, we show that the bosonization contribution is bounded by
\begin{equation} \label{eq:nrpa}
	\lvert n^{\RPA}(q)\rvert \le C \kF^{-1} e(q)^{-1} \;,
\end{equation}
which we expect to be optimal in the mean-field scaling.
For comparison, in~\cite{BL25}, the scaling of the leading term is $ n^{\RPA}(q) \sim \kF^{-2} $ without $ e(q)^{-1} $. This is because~\cite{BL25} assumes $ |\ell| \le C $ and $ |\ell \cdot q| \ge c $ in the analogue of~\eqref{eq:nqb}, leading to $ e(q) \sim \kF $. In other words, it includes only momenta at distances $ ||q|-\kF| \sim 1 $ from the Fermi surface, while the present result admits distances as small as $ \kF^{-1} $.

\item While \eqref{eq:nqb} can formally be obtained by a rather straight-forward computation under the assumption that bosonization was exact, the proof requires a control of the error terms that is much more subtle than previously developed. This is due to the requirement of controlling the momentum distribution pointwise, for which we have developed a novel bootstrap method in \cite{BL25}, which in the present paper we further refine to a double bootstrap argument.

\item Even though our result concerns a trial state, we expect that it captures the behavior of the ground state. Our result is obtained as the vacuum expectation value of the operator expansion \eqref{eq:operatorexpansion}. The main difference in evaluating this expansion for the ground state $\Psi_\textnormal{gs}$ is that we have to add $\langle \psi, (a^*_q a_q + a^*_{-q} a_{-q})\psi \rangle$ with $\psi := e^{S} \Psi_\textnormal{gs}$, instead of $\psi = \Omega$ for the trial state. As a consequence of the results on the ground state energy, compare for example \cite[Eq.~(11.9)]{BNP+21}, one can prove estimates of the form
\[
	\langle \psi, \mathbb{H}_0 \psi \rangle \leq C \kF^{1- \frac{1}{48}}	\;.
\]
The contribution of particle excitations to this expectation value may be written as $\sum_{p \in \BFc} \lvert \lvert p\rvert^2 - \kF^2 \rvert \langle \psi, a^*_p a_p \psi\rangle$; holes have a similar non-negative contribution. \emph{Assuming} rotational symmetry was not broken (which is difficult to prove), it is plausible that $\langle \psi, a^*_p a_p \psi\rangle$ is approximately equal to its average over the spherical shell between the radii $\lvert p\rvert$ and $\lvert p\rvert + 1$; let us call this average $n^\textnormal{avg}(\lvert p\rvert)$. Then
\[
	C \kF^{1-\frac{1}{48}} \geq \langle \psi, \mathbb{H}_0 \psi \rangle \simeq \sum_{|p| \in \Nbb} \lvert \lvert p\rvert^2 - \kF^2 \rvert \, C |p|^2 \, n^\textnormal{avg}(\lvert p\rvert) \;,
\]
where we relied on the number of momenta in the shell being of order $\lvert p\rvert^2$. This could then be resolved for $n^\textnormal{avg}(\lvert p\rvert) \leq C \kF^{1-\frac{1}{48}} \left( e(p) \lvert p\rvert^{2}\right)^{-1}$. For momenta $\lvert p\rvert$ just outside the Fermi ball, so of order $\kF$, this suggests that
\[
	\langle \psi, a^*_p a_p \psi\rangle \simeq n^\textnormal{avg}(\lvert p\rvert) \leq C e(p)^{-1} \kF^{-1-\frac{1}{48}} \;,
\]
which adds a term that is subleading compared to the RPA-term \eqref{eq:nrpa}.
% , though the estimate would not be as strong as \eqref{eq:main_error} or \eqref{eq:main_improvederror}.

\item In absence of rotational symmetry, to get meaningful bounds by the approach just discussed one would have to compute the ground state energy to order $N^0$ since the gap of the kinetic energy is very small, namely $\lambda_{\ell,p} \geq \frac{1}{2}$.

\end{enumerate}
\end{remarks}

%%%%%%%%%%%%%

In analogy to~\cite[Section~1.1]{BL25} we may approximate $ \lambda_{\ell,q} \approx \kF |\ell| |\hat{\ell} \cdot \hat{q}| $ with $ \hat{q} := {q}/{|q|} $, then substitute $ \mu := t \kF |\ell| $ and formally take the continuum limit of sums to integrals.
Thus
\begin{align*}
	n^{\RPA}(q)
	&\approx \int_{\R^3} \di \ell \; \mathds{1}_{L_{\ell}}(q) \; \frac{\hat{V}(\ell) \kF^{-2}}{(2 \pi)^4 |\ell|}
		\int_0^\infty \di \mu \frac{(\mu^2-|\hat{\ell} \cdot \hat{q}|^2) (\mu^2 + |\hat{\ell} \cdot \hat{q}|^2)^{-2}}{1 + Q_\ell(\mu)} \;, \\
	Q_\ell(\mu) &:= \frac{\hat{V}(\ell)}{(2 \pi)^2} \left( 1 - \mu \arctan \left( \frac{1}{\mu} \right) \right) \;. \tagg{eq:Q}
\end{align*}
This agrees with \cite{BL25} up to a factor $ (2 \pi)^3 \kappa $ multiplying $\hat{V}$, explained by our choice of coupling constant $k_\F^{-1}$ compared to the coupling constant $N^{-1/3} = \kappa^{-1} \kF^{-1}$ in \cite{BL25}, as well as the absent $ (2 \pi)^{-3} $ in the Fourier convention~\eqref{eq:Fourierconvention}.

\medskip

Analogously to the exchange contribution to the ground state energy derived in~\cite{CHN23,CHN24}, an exchange contribution $\frac{1}{4}n^{\ex,1}(q)$ appears by normal ordering the bosonization errors, see Lemma~\eqref{lem:normalordering_errors}. However, Proposition~\ref{prop:main} shows that it is subleading.

\medskip

%%%%%%%%%%%%%%

Hypothesis~\eqref{hyp:Coulomb} on $ V $ is only needed to ensure validity of the ground state energy formula~\eqref{eq:main1}, which readily follows from~\cite{CHN23,CHN24}. Our main work is the computation of the momentum distribution, which does not require $\hat{V} $ to be radial or decreasing.

\begin{proposition}[Momentum distribution] \label{prop:main}
Let $ \Psi_N $ be the trial state defined in~\eqref{eq:Psitrial}--\eqref{eq:T} and assume that
\begin{equation}\label{hyp:alpha}
\hat{V}(\ell) = \hat{V}(-\ell) \ge 0 \quad \forall \ell \in \Z^3 \text{ and } \sum_{\ell \in \Zstar} \hat{V}(\ell)^2 |\ell|^\alpha < \infty \text{ for some } \alpha \in (0,2) \;.
\end{equation}
Let $\varepsilon > 0$. Then, there exist a family of operators $\mathcal{E}(q)$ on Fock space and $n^{\textnormal{ex},1}(q) \in \Rbb $  as well as constants $C, C_\varepsilon > 0 $, such that for all $ \kF $ and $ q \in \Z^3 $ we have
\begin{equation} \label{eq:main_prop_2}
	n(q) = \eva{\Psi_N, a_q^* a_q \Psi_N}
	= \begin{cases}
	n^{\RPA}(q) + \frac{1}{4}n^{\ex,1}(q) + \cE(q) & \quad
		\textnormal{for } |q| \ge \kF \\
	1 - n^{\RPA}(q) - \frac{1}{4}n^{\ex,1}(q) + \cE(q) & \quad
		\textnormal{for } |q| < \kF\;,
	\end{cases}
\end{equation}
with the error bounds
\begin{equation}
	\lvert \cE(q)\rvert \le C \kF^{-1 - \frac 16 \alpha} e(q)^{-1} \qquad \text{and} \qquad \lvert n^{\ex,1}(q)\rvert \leq C_\varepsilon \kF^{-1 -\frac{\alpha}{2} + \varepsilon } e(q)^{-1} \;.
\end{equation}
If the additional hypothesis~\eqref{hyp:ell1} holds, then
\begin{equation} \label{eq:main_prop_improvederror}
	\lvert \cE(q)\rvert \le C_\varepsilon \kF^{-2 +\varepsilon} e(q)^{-1} \qquad \text{and} \qquad
	\lvert n^{\ex,1}(q)\rvert \le
		C \kF^{-2} e(q)^{-2} \;.
\end{equation}
\end{proposition}

Once this proposition is established, the main theorem follows immediately.

\begin{proof}[Proof of Theorem~\ref{thm:main}]
We choose as $ \Psi_N $ the trial state defined in~\eqref{eq:Psitrial}--\eqref{eq:T}.
From~\cite[Corr.~1.3]{CHN24}, we know that $ E_{\GS} = E_{\FS} + E_{\corr} + \cO(\kF^{1 - \frac{1}{6} + \delta}) $, provided hypothesis~\eqref{hyp:Coulomb} holds, where $ E_{\FS} $ is the Hartree--Fock energy and $ E_{\corr}$ the correlation energy, see~\cite[Eqs.~(1.2) and (1.11)]{CHN24}. Moreover, \cite[Thm.~1.1]{CHN23} gives us $ \eva{\Psi_N, H_N \Psi_N} \le E_{\FS} + E_{\corr} + \cO(\kF^{1 - \frac{1}{2}}) $, hence $ \eva{\Psi_N, H_N \Psi_N} \le E_{\GS} + \cO(\kF^{1 - \frac{1}{6}+ \delta}) $.

Proposition~\ref{prop:main} yields the statement about the momentum distribution because \eqref{hyp:Coulomb} implies \eqref{hyp:alpha} for $\alpha = 1-6\delta$. Thereafter we choose $\varepsilon = \frac{\alpha}{3}$. The exchange contribution $n^\textnormal{ex}(q)$ can trivially be absorbed in the error bounds \eqref{eq:main_error} and \eqref{eq:main_improvederror}.
\end{proof}

The rest of this article is dedicated to proving Proposition~\ref{prop:main}.
In Section~\ref{sec:trialstate}, we review the construction of the trial state from \cite{CHN23}. In Section~\ref{sec:extraction}, we derive the iterated Duhamel expansion for the momentum distribution and identify the leading order. Section~\ref{sec:prelim_bounds} provides preliminary error estimates used in the later sections. In Section~\ref{subsec:manybody_estimates} we bound the error terms using $\sup_q n(q)$. In Section \ref{sec:leading_order_analysis} we compute the leading order. The proof is concluded in Section~\ref{sec:mainthmproof} by a novel double bootstrap, which first achieves strong bounds globally, i.\,e., on $\sup_{q \in \Zbb^3} n(q)$, then locally, on $n(q)$.

\section{Trial State Construction}
\label{sec:trialstate}

Here we review the trial state construction of~\cite{CHN23}. We introduce the fermionic Fock space
\begin{equation}
	\cF \coloneq \bigoplus_{N=0}^\infty L^2(\T^3)^{\bigwedge N} \;,
\end{equation}
with $ \bigwedge $ denoting the antisymmetric tensor product. The vector
\begin{equation}
\Omega = (1,0,0,\ldots) \in \cF
\end{equation}
is called the vacuum. To each momentum $ q \in \ZZZ^3 $, we assign a plane wave
\begin{equation}
	f_q \in L^2(\TTT^3) \;, \qquad
	f_q(x) \coloneq (2 \pi)^{-3/2} e^{i q \cdot x} \;,
\end{equation}
and the associated creation and annihilation operators
\begin{equation}
	a^*_q \coloneq a^*(f_q) \;, \qquad
	a_q \coloneq a(f_q) \;.
\end{equation}
These operators satisfy the canonical anticommutation relations (CAR)
\begin{equation} \label{eq:CAR}
	\{a_q, a_{q'}^*\} = \delta_{q, q'} \;, \qquad
	\{a_q, a_{q'}\} = \{a_q^*, a_{q'}^*\} = 0 \qquad \text{for all } q, q' \in \ZZZ^3\;.
\end{equation}
Moreover they are bounded in operator norm by $ \Vert a_q^* \Vert \leq 1$ and $\Vert a_q \Vert \le 1 $.

The Hamiltonian may be easily generalized to Fock space by letting $H_n$ act on the $n$-particle sectors $L^2(\T^3)^{\bigwedge n}$ independently for each $n \in  \Nbb$. The advantage of this approach is that we can now construct vectors $\Psi_N \in L^2(\T^3)^{\bigwedge N} \subset \fock$ as well as represent operators using creation and annihilation operators, which provides a convenient way of evaluating expectation values just by using the CAR and the fact that
\begin{equation}
a_q \Omega = 0 \qquad \textnormal{for all } q\in \Zbb^3 \;.
\end{equation}
The generalization of the Hamiltonian to Fock space can be written as
\begin{equation}
\Hcal_N = \sum_{p \in \Zbb^3} \lvert p\rvert^2 a^*_p a_p + \frac{\kF^{-1}}{2 (2\pi)^3}\sum_{k,p,q \in \Zbb^3} \hat{V}(k) a^*_{p+k} a^*_{q-k} a_q a_k\;.
\end{equation}
Note that $\Hcal_N$ still depends on $N$ through the coupling constant $\kF^{-1}$, which remains independent of the Fock space sector.

The Fermi ball state can be written as $ \Psi_{\FS} = R \Omega $ where $ R: \cF \to \cF $ is the unitary particle--hole transformation acting by
\begin{equation} \label{eq:R}
	R^* a_q^* R 	= \mathds{1}_{B_{\F}^c}(q) \, a_q^* 	+ \mathds{1}_{B_{\F}}(q) \, a_q \;.
\end{equation}
Note that $ R^{-1} = R = R^* $. The trial state of~\cite{CHN23} is constructed as
\begin{equation} \label{eq:Psitrial}
	\Psi_N := R T \Omega \;,
\end{equation}
where $ T: \cF \to \cF $ is another unitary operator motivated as follows: in the random phase approximation one considers the interaction to act predominantly by generating pair excitations, where a particle from inside the Fermi ball $ B_{\F} $ is moved by some momentum $ k \in \Z^3 \setminus \{ 0 \} $ to another momentum $ p \in B_{\F}^c $, leaving a ``hole'' in the Fermi ball at $ p-k \in B_{\F} $. After the particle--hole transformation $ R $, this corresponds to a creation of a pair of excitations at $ p $ and $ p-k $, described by the pair creation operator
\begin{equation} \label{eq:b}
	b^*_p(k) \coloneq a_p^* a_{p-k}^* 
	\qquad \textnormal{with adjoint} \qquad
	b_p(k) \coloneq a_{p-k} a_p \;.
\end{equation}
The constraint on $ (p,k) $ can be written as $ p \in L_k $.

The Hamiltonian can be particle-hole transformed to extract explicitly the mean-field contributions $E_{\FS}$ \cite[Eqs.~(1.2) and (1.11)]{CHN24} to its ground state energy
\begin{equation}
   R^* \Hcal_N R = E_{\FS} + H_{\Bog} + \textnormal{non-bosonizable remainders}
\end{equation}
with the bosonizable Bogoliubov-type Hamiltonian (compare \cite[(1.34)]{CHN23})
\begin{equation} \label{eq:HBog}
\begin{split}
	H_{\Bog}
	\coloneq \sum_{k \in \Zstar} \Big( & \sum_{p,q \in L_k} 2 (h(k) + P(k))_{p,q} b^*_p(k) b_q(k) \\
	& 		+ \sum_{p,q \in L_k} P(k)_{p,q} (b_p(k) b_{-q}(-k) + b^*_{-q}(-k) b^*_p(k)) \Big) \;,
\end{split}
\end{equation}
with matrices $ h(k) \in \CCC^{|L_k| \times |L_k|}$ and $P(k) \in \CCC^{|L_k| \times |L_k|} $ defined by
\begin{equation} \label{eq:HkPk}
\begin{aligned}
	h(k)_{p,q} \coloneq \delta_{p,q} \lambda_{k,p} \;, \qquad
	P(k)_{p,q} \coloneq \frac{\kF^{-1}\hat{V}(k) }{2 (2 \pi)^3} \;.
\end{aligned}
\end{equation}
The matrix $ P(k) $ is rank-one and can also be written as
\begin{equation}P(k) = \lvert v_k \rangle \langle v_k \rvert \;, \quad \textnormal{where} \quad v_{k,p} \coloneq g_k^{1/2} = \Big( \frac{\kF^{-1} \hat{V}(k) }{2 (2 \pi)^3} \Big)^\half \;.
\end{equation}
 As the pair operators satisfy approximate bosonic commutation relations (see Lemma~\ref{lem:paircomm}), $ H_{\Bog} $ can be approximately diagonalized to obtain the next order of the ground state energy (the random phase approximation to the correlation energy, required to obtain the precision \eqref{eq:main1} of the ground state energy) by an approximate Bogoliubov transformation \cite[Thm.~1.4]{CHN23}
\begin{equation} \label{eq:T}
	T \coloneq e^{-S} \;, \qquad
	S \coloneq \frac{1}{2}\sum_{\ell\in \Zstar}\sum_{r,s\in L_\ell}K(\ell)_{r,s}\left(b_r(\ell)b_{-s}(-\ell)-b^*_{-s}(-\ell)b^*_{r}(\ell)\right) \;,
\end{equation}
with the Bogoliubov kernel
\begin{equation} \label{eq:K}
	K(\ell) \coloneq - \half \log \bigg( h(\ell)^{-\half}
		\Big( h(\ell)^{\half} \big( h(\ell) + 2 P(\ell) \big) h(\ell)^{\half}\Big)^{\half}
		h(\ell)^{-\half} \bigg) \;.
\end{equation}
(The exponent $ S $ corresponds to $ R^* \cK R $ in \cite{CHN23} since we prefer to use a particle-hole transformation instead of the normal ordering with respect to the Fermi ball.)
The matrix $K(\ell)$ is symmetric and possesses the reflection invariance
\begin{equation}
  K(-\ell)_{-p,-q} = K(\ell)_{p,q} \qquad \textnormal{for all } \ell \in \Zbb^3 \textnormal{ and all }p,q \in L_\ell\;.
\end{equation}

\section{Duhamel Expansion}\label{sec:extraction}

In this section we expand the momentum distribution $ \langle \Psi_N, a_q^* a_q \Psi_N \rangle $ of the trial state $ \Psi_N = R e^{-S} \Omega $ constructed in the previous section.
The particle--hole transformation $ R $ acts in a simple way: by~\eqref{eq:R} we have
\begin{equation} \label{eq:momentum_dist_R_trafo}
	R^* a_q^* a_q R
	= \mathds{1}_{B_{\F}}(q) \big( 1 - a_q^* a_q  \big)
		+ \mathds{1}_{B_{\F}^c}(q)  a_q^* a_q \;.
\end{equation}
So it suffices to consider the excitation vector $ \xi \coloneq e^{-S} \Omega $ and  compute the excitation distribution $ \langle \xi, a_q^* a_q \xi \rangle $. The fundamental theorem of calculus implies the Duhamel formula
\begin{equation} \label{eq:duhamelexpansion_blueprint}
\begin{aligned}
	e^{S} a_q^* a_q e^{-S}
& = a_q^* a_q
		+ \int_0^1 \di \lambda_1 \,   [S, e^{\lambda_1 S} a_q^* a_q e^{-\lambda_1 S}] \;.
\end{aligned}
\end{equation}
Iterating this formula leads to the expansion in Proposition~\ref{prop:finexpan}, the main result of this section. The multicommutators are computed using the CAR~\eqref{eq:CAR}, where we extract $ n^{\RPA}(q) $ as the terms that could be expected treating the pair operators as exactly bosonic.

\subsection{Extraction of the Bosonized Contribution}
\label{sec:extraction_bos}

To compute the multicommutators in~\eqref{eq:duhamelexpansion_blueprint} we use the CAR, which will produce quadratic quasi-bosonic expressions.

\begin{definition} \label{def:Q}
Let $A=(A(\ell))_{\ell \in \Zstar} $ be a family of symmetric operators with $A(\ell): \ell^2(L_\ell)\rightarrow \ell^2(L_\ell)$. The quadratic quasi-bosonic operators are defined as
\begin{equation}
\begin{split}
	Q_1(A)&\coloneq 2 \sum_{\ell \in \Zstar}\sum_{r,s \in L_{\ell}}A(\ell)_{r,s} b^*_r(\ell)b_{s}(\ell) \;,\\
	Q_2(A)&\coloneq \sum_{\ell \in \Zstar}\sum_{r,s \in L_{\ell}}A(\ell)_{r,s} \left(b_r(\ell)b_{-s}(-\ell)+b^*_{-s}(-\ell)b^*_{r}(\ell)\right) \;.
\end{split}
\end{equation} 
\end{definition}
Our $ Q_1 $ corresponds to $ 2 \tilde Q_1 $ in~\cite{CHN22}, while the definition of $ Q_2 $ is identical to \cite{CHN22}. (Due to the particle-hole transformation, our $ a_p $ agrees with $ R^* c_p R $ in~\cite{CHN22,CHN23,CHN24}.) The pair operators satisfy the following approximately bosonic commutator relations (compare to \cite[(1.66)]{CHN22} and \cite[Lemma~4.1]{BNP+20}):

\begin{lemma}[Approximate CCR]\label{lem:paircomm}
For $k,\ell \in \Zstar$ and $p \in L_{k}$, $q\in L_{\ell}$, we have
\begin{equation}
	[b_{p}(k),b_{q}(\ell)]
	= 0 = [b^*_{p}(k),b^*_{q}(\ell)]  \;, \qquad
	[b_{p}(k),b^*_{q}(\ell)]
	= \delta_{p,q}\delta_{k,\ell} + \epsilon_{p,q}(k,\ell) \;,
\end{equation}
 with error operator
\begin{equation}
	\epsilon_{p,q}(k,\ell)
	\coloneq -\left(\delta_{p,q}a^*_{q-\ell}a_{p-k} + \delta_{p-k,q-\ell}a^*_{q}a_{p}\right) \;.
\end{equation}
The error operator satisfies
\begin{equation}
\epsilon_{p,q}(\ell,k) = \epsilon^*_{q,p}(k,\ell) \qquad \text{and} \qquad \epsilon_{p,p}(k,k)\leq 0 \;.
\end{equation}
\end{lemma}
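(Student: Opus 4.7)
The plan is to compute all commutators by repeatedly applying the CAR \eqref{eq:CAR}, exploiting the crucial momentum-support constraints $p, q \in B_\F^c$ and $p-k, q-\ell \in B_\F$ that follow from $p \in L_k$ and $q \in L_\ell$. These constraints kill several of the Kronecker deltas that would otherwise appear.

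First I would verify $[b_p(k), b_q(\ell)] = 0$ by writing it out as $[a_{p-k}a_p, a_{q-\ell}a_q]$ and using $\{a_r, a_s\} = 0$ four times to commute $a_{q-\ell}a_q$ entirely past $a_{p-k}a_p$; the four sign flips cancel, giving zero. The identity $[b_p^*(k), b_q^*(\ell)] = 0$ follows by taking the adjoint.

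Next, for the mixed commutator $[b_p(k), b_q^*(\ell)] = a_{p-k}a_p a_q^* a_{q-\ell}^* - a_q^* a_{q-\ell}^* a_{p-k} a_p$, I would commute $a_p$ past $a_q^*$ using $a_p a_q^* = \delta_{p,q} - a_q^* a_p$, then $a_{p-k}$ past $a_q^*$ (where $\delta_{p-k,q} = 0$ because $p-k \in B_\F$ and $q \in B_\F^c$), then $a_p$ past $a_{q-\ell}^*$ (where $\delta_{p,q-\ell} = 0$ because $p \in B_\F^c$ and $q-\ell \in B_\F$), and finally $a_{p-k}$ past $a_{q-\ell}^*$ using the full CAR. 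After the dust settles the quartic normal-ordered term cancels against $b_q^*(\ell) b_p(k)$, leaving
\begin{equation*}
   [b_p(k), b_q^*(\ell)] = \delta_{p,q}\delta_{p-k,q-\ell} - \delta_{p,q} a_{q-\ell}^* a_{p-k} - \delta_{p-k, q-\ell} a_q^* a_p\;,
\end{equation*}
and noting that $\delta_{p,q}\delta_{p-k,q-\ell} = \delta_{p,q}\delta_{k,\ell}$ yields the claimed formula for $\epsilon_{p,q}(k,\ell)$.

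For the two properties of $\epsilon$, the adjoint identity is a direct manipulation: using $(a_r^* a_s)^* = a_s^* a_r$ on the definition of $\epsilon_{q,p}^*(k,\ell)$ gives exactly $\epsilon_{p,q}(\ell,k)$ after relabeling the Kronecker deltas. The negativity statement is immediate: specializing $q = p$ and $\ell = k$ collapses both deltas to $1$, so $\epsilon_{p,p}(k,k) = -(a_{p-k}^* a_{p-k} + a_p^* a_p)$, which is minus a sum of number operators and hence $\le 0$. There is no real obstacle here; the only delicate point is bookkeeping the four successive anticommutations in the mixed commutator and remembering to invoke $p \in L_k$, $q \in L_\ell$ to drop the cross-deltas $\delta_{p,q-\ell}$ and $\delta_{p-k,q}$ that would otherwise spoil the clean formula.
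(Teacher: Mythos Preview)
Your proposal is correct and follows exactly the approach the paper indicates, namely a direct computation with the CAR~\eqref{eq:CAR}; the paper itself only states ``The proof is a computation with the CAR'' without spelling out the details. Your use of the support constraints $p,q \in B_\F^c$ and $p-k,q-\ell \in B_\F$ to kill $\delta_{p-k,q}$ and $\delta_{p,q-\ell}$ is precisely the point, and the remaining manipulations for the adjoint identity and the negativity are as you say.
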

The proof is a computation using the CAR. As a consequence we obtain the next lemma.

\begin{lemma}[Commutators of $S $ and $b^*_p(k)$]
Let the operator $S$ be defined as in \eqref{eq:T}. For $k \in \Zstar$ and $p \in L_k$ we have
\begin{equation} \label{eq:comm_Kb}
	[S, b^*_p(k)]
	= \sum_{s\in L_{k}}K(k)_{p,s}b_{-s}(-k)
		+ \mathcal{E}_{p}(k)
\end{equation}
with error operator
\begin{equation}\label{eq:commerrKb}
	\mathcal{E}_{p}(k)
	\coloneq \frac{1}{2}\sum_{\ell\in \Zstar}\sum_{r,s\in L_\ell}K(\ell)_{r,s}\left\{\epsilon_{r,p}(\ell,k),b_{-s}(-\ell)\right\} \;.
\end{equation}
\end{lemma}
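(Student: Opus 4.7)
The statement is a purely algebraic identity, so I would proceed by direct computation, exploiting (i) the fact that all creation-type operators mutually commute, (ii) the approximate CCR of Lemma~\ref{lem:paircomm}, and (iii) the symmetries of the kernel $K$ noted after \eqref{eq:K}, namely $K(\ell)_{r,s} = K(\ell)_{s,r}$ and $K(-\ell)_{-r,-s} = K(\ell)_{r,s}$. There is no analytic content — just careful bookkeeping.

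\textbf{Step 1: Reduce to the $bb$ part of $S$.} By Lemma~\ref{lem:paircomm} we have $[b^*_{-s}(-\ell)b^*_r(\ell), b^*_p(k)] = 0$. Hence only the $b_r(\ell) b_{-s}(-\ell)$ part of $S$ contributes to $[S, b^*_p(k)]$, and by the Leibniz rule
\begin{equation*}
	[b_r(\ell)b_{-s}(-\ell), b^*_p(k)]
	= b_r(\ell)\bigl[b_{-s}(-\ell), b^*_p(k)\bigr] + \bigl[b_r(\ell), b^*_p(k)\bigr] b_{-s}(-\ell).
\end{equation*}

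\textbf{Step 2: Apply the approximate CCR and extract delta terms.} Plugging in Lemma~\ref{lem:paircomm} yields four contributions: two with Kronecker deltas and two with error operators $\epsilon_{\cdot,\cdot}(\cdot,\cdot)$. The delta-contributions are
\begin{equation*}
	\tfrac{1}{2}\sum_{\ell,r,s} K(\ell)_{r,s}\bigl(\delta_{r,p}\delta_{\ell,k}\, b_{-s}(-\ell) + b_r(\ell)\,\delta_{-s,p}\delta_{-\ell,k}\bigr)
	= \tfrac{1}{2}\sum_{s\in L_k} K(k)_{p,s}\, b_{-s}(-k) + \tfrac{1}{2}\sum_{r\in L_{-k}} K(-k)_{r,-p}\, b_r(-k).
\end{equation*}
Substituting $s = -r$ in the second sum and using $K(-k)_{-s,-p} = K(k)_{s,p} = K(k)_{p,s}$ (from reflection invariance and symmetry of $K$) shows that it coincides with the first, producing the claimed main term $\sum_{s\in L_k} K(k)_{p,s}\, b_{-s}(-k)$.

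\textbf{Step 3: Reassemble the error terms as an anticommutator.} The remaining contributions read
\begin{equation*}
	\tfrac{1}{2}\sum_{\ell,r,s}K(\ell)_{r,s}\bigl(\epsilon_{r,p}(\ell,k)\, b_{-s}(-\ell) + b_r(\ell)\, \epsilon_{-s,p}(-\ell,k)\bigr).
\end{equation*}
On the second summand I would substitute $\ell \to -\ell$, $r \to -s$, $s \to -r$, again invoking the reflection/symmetry identities for $K$ to see that the kernel weight is unchanged; this rewrites the second summand as $b_{-s}(-\ell)\,\epsilon_{r,p}(\ell,k)$. Adding it to the first then produces exactly the anticommutator $\{\epsilon_{r,p}(\ell,k), b_{-s}(-\ell)\}$, yielding $\mathcal{E}_p(k)$ as in \eqref{eq:commerrKb}.

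\textbf{Main obstacle.} There is no genuine obstacle — the proof is a one-page computation. The only point requiring care is the index gymnastics in Step~3, where one must verify that the two symmetry properties of $K$ (invariance under $\ell \mapsto -\ell$ combined with negation of both row and column indices, and symmetry of the matrix itself) are both needed to identify the two $\epsilon$-terms as an anticommutator with a common coefficient.
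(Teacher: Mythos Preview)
Your proof is correct and follows exactly the approach implicit in the paper, which omits the proof entirely as a routine CAR computation deduced from Lemma~\ref{lem:paircomm} together with the stated symmetries $K(\ell)_{r,s}=K(\ell)_{s,r}$ and $K(-\ell)_{-r,-s}=K(\ell)_{r,s}$. Your bookkeeping in Steps~2 and~3 is accurate, including the observation that the index substitution in Step~3 requires both the reflection invariance and the matrix symmetry of $K$ to identify the coefficient.
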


For the quadratic quasi-bosonic operators, this implies the following formulas.

\begin{lemma}[Commutator of $S$ and $Q$]\label{lem:Q1Kcomm}
Let $ A = (A(\ell))_{\ell \in \Zstar} $ be a family of symmetric operators $ A(\ell) : \ell^2(L_\ell) \to \ell^2(L_\ell) $ satisfying $A(\ell)_{r,s} = A(-\ell)_{-r,-s}$. Then
\begin{equation}
\begin{aligned}
	[S, Q_1(A)] 
	&= Q_2(\{A,K\})
		+ E_{Q_1}(A) \;, \\
	[S, Q_2(A)] 
	&= Q_1\left(\{A,K\} \right) 
		+ \sum_{\ell \in \Zstar} \sum_{r \in L_{\ell}} \big\{ A(\ell), K(\ell) \big\}_{r,r}
		+ E_{Q_2}(A) \;,
\end{aligned}
\end{equation}
with the family $ \{A,K\} = (\{A(\ell),K(\ell)\})_{\ell \in \Zstar} $ and with the error operators
\begin{equation}\label{eq:errKQ}
\begin{aligned}
	E_{Q_1}(A)
	&\coloneq 2 \sum_{\ell \in \Zstar}\sum_{r,s \in L_{\ell}}A(\ell)_{r,s}\Big(\mathcal{E}_{r}(\ell)b_{s}(\ell) + b^*_{s}(\ell)\mathcal{E}^*_{r}(\ell)\Big) \;, \\
	E_{Q_2}(A)
	& \coloneq \sum_{\ell \in \Zstar}\sum_{r,s \in L_{\ell}}\Big(A(\ell)_{r,s}\big(\big\{\mathcal{E}^*_{r}(\ell), b_{-s}(-\ell)\big\}
		+ \big\{ b^*_{-s}(-\ell) , \mathcal{E}_r(\ell) \big\} \big) \\
		& \hspace{6em}
		+ \big\{A(\ell)_,K(\ell)\big\}_{r,s}\epsilon_{r,s}(\ell,\ell)\Big) \;. \\
\end{aligned} 
\end{equation}
\end{lemma}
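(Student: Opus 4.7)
The plan is to treat both identities by a direct Leibniz-rule expansion that reduces everything to the previous lemma on $[S, b_p^*(k)]$, and then to recognize the leading terms as $Q_2(\{A,K\})$ and $Q_1(\{A,K\})$ via the reflection symmetries of $A$ and $K$ together with the exact commutativity of distinct-momentum pair operators provided by Lemma~\ref{lem:paircomm}. First I would record the adjoint counterpart of \eqref{eq:comm_Kb}: since $S^* = -S$ and $K(\ell)$ is real and symmetric, taking adjoints of \eqref{eq:comm_Kb} gives
\begin{equation*}
[S, b_p(k)] = \sum_{s \in L_k} K(k)_{p,s} b_{-s}^*(-k) + \mathcal{E}_p^*(k) \;.
\end{equation*}

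For $[S, Q_1(A)]$ I would apply Leibniz's rule to $b_r^*(\ell) b_s(\ell)$ and substitute the two commutator formulas, separating leading $K$-parts from $\mathcal{E}$-remainders. The $\mathcal{E}$-remainders are collected directly into $E_{Q_1}(A)$ as defined in \eqref{eq:errKQ}. The leading $K$-parts are two operator sums whose inner index summations produce the matrix product $(AK)(\ell)$ (after using the symmetry $A(\ell)_{r,s} = A(\ell)_{s,r}$) and multiply, respectively, the commuting pair operators $b_s(\ell) b_{-s'}(-\ell)$ and $b_r^*(\ell) b_{-s'}^*(-\ell)$. The substitution $(\ell, r, s) \mapsto (-\ell, -s, -r)$ combined with the reflection invariances $A(-\ell)_{-p,-q} = A(\ell)_{p,q}$ and $K(-\ell)_{-p,-q} = K(\ell)_{p,q}$ rewrites each sum with coefficients $(KA)(\ell)$ instead of $(AK)(\ell)$. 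Since both representations describe the same operator, one identifies the coefficient as the anticommutator $\{A, K\} = AK + KA$ and hence the leading term as $Q_2(\{A, K\})$.

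For $[S, Q_2(A)]$ I would analogously expand $[S, b_r(\ell) b_{-s}(-\ell)]$ and $[S, b_{-s}^*(-\ell) b_r^*(\ell)]$ by Leibniz and insert the commutators. Half of the resulting operator products are not normal-ordered, being of the form $b_r(\ell) b_{-s'}^*(\ell)$ or $b_{-s'}(\ell) b_r^*(\ell)$; to these I would apply the quasi-bosonic CCR $b_p(k) b_q^*(\ell) = \delta_{p,q}\delta_{k,\ell} + \epsilon_{p,q}(k,\ell) + b_q^*(\ell) b_p(k)$. The $\delta$-contractions contribute a scalar of value $\sum_\ell \mathrm{Tr}(A(\ell) K(\ell))$ from the $bb$-half of $Q_2$ and an identical one from the $b^*b^*$-half, which together yield the scalar $\sum_\ell \mathrm{Tr}\{A(\ell), K(\ell)\} = \sum_\ell \sum_{r \in L_\ell} \{A(\ell), K(\ell)\}_{r,r}$ claimed in the statement. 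The $\epsilon$-parts from both halves, combined using $\epsilon_{p,q}(k,\ell)^* = \epsilon_{q,p}(\ell,k)$ together with the symmetries of $A$ and $K$, produce the term $\{A, K\}(\ell)_{r,s} \epsilon_{r,s}(\ell,\ell)$ of $E_{Q_2}(A)$. The normal-ordered parts of the form $b^*_r(\ell) b_s(\ell)$ are assembled by the same reflection symmetrization as in the first identity into $Q_1(\{A, K\})$, while the $\mathcal{E}$-remainders together with their anticommutator partners complete $E_{Q_2}(A)$ as defined in \eqref{eq:errKQ}.

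The main obstacle is purely bookkeeping: tracking signs and indices consistently under the substitution $(\ell, p, q) \mapsto (-\ell, -p, -q)$ so that partial $AK$-contributions from the two Leibniz summands combine with their $KA$-counterparts into the anticommutator $\{A, K\}$; and, in the second identity, correctly extracting the scalar and $\epsilon$-operator contributions before the symmetrization. In particular, it is the count of exactly two $\delta$-contractions (one from the $bb$-half, one from the $b^*b^*$-half of $Q_2$, each equal to $\sum_\ell \mathrm{Tr}(AK)$) that converts a single trace into the anticommutator trace $\sum_\ell\mathrm{Tr}\{A,K\}$ appearing in the statement.
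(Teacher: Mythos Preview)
Your proposal is correct and follows exactly the approach the paper intends: the paper states Lemma~\ref{lem:Q1Kcomm} as a direct consequence of the preceding commutator lemma on $[S,b_p^*(k)]$ without writing out the computation, and your Leibniz-rule expansion together with the reflection substitution $(\ell,r,s)\mapsto(-\ell,-s,-r)$ is precisely the implied derivation. Your identification of the two $\delta$-contractions, the $\epsilon$-terms, and the $\mathcal{E}$-anticommutator remainders matches the definitions in \eqref{eq:errKQ}.
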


To simplify the expansion, we introduce the $n$-fold anticommutator
\begin{equation} \label{eq:Theta}
	\Theta_K^n (A)
	\coloneq \{ K, \Theta_K^{n-1} (A) \} \qquad
	\textnormal{with} \qquad
	\Theta_K^0 (A)
	\coloneq A \;,
\end{equation}
understood pointwise for families of operators as above.
Given $q \in \Zbb^3$ we define
\begin{equation} \label{eq:Pq}
	P^q(\ell) : \ell^2(L_\ell) \to \ell^2(L_\ell) \;, \qquad
	P^q(\ell)_{r,s} \coloneq \delta_{q,r} \delta_{q,s} \qquad
	\textnormal{for } \ell \in \Zstar \;,
\end{equation}
understood as $ P^q = 0 $ if $ q \notin L_\ell $.
Moreover we define the simplex integral
\begin{equation} \label{eq:Deltan}
	\int_{\Delta^n} \di^n \ulambda
	\coloneq \int_0^1 \di \lambda_1 \int_0^{\lambda_1} \di \lambda_2 \ldots \int_0^{\lambda_{n-1}} \di \lambda_n \;, \qquad
	\ulambda \coloneq (\lambda_1, \ldots, \lambda_n) \;.
\end{equation}
 The final Duhamel expansion can then be written in the following way.

\begin{proposition}[Duhamel expansion]\label{prop:finexpan}
For $q \in B^c_{\F}$ we have
\begin{align} \label{eq:finexpan}
	\eva{\Omega, e^{S} a_q^* a_q e^{-S} \Omega} 
	&= \half\sum_{\ell\in \Zstar}\mathds{1}_{L_\ell}(q) \sum_{\substack{m=2\\m:\textnormal{ even}}}^n \frac{((2K(\ell))^m)_{q,q}}{m!}
		+ \half \sum_{m=1}^{n-1} \eva{\Omega, E_m(P^q)\Omega}\nonumber\\
	&\quad +\half \int_{\Delta^n} \di^n\underline{\lambda} \;
		\eva{\Omega, e^{\lambda_n S}Q_{\sigma(n)}(\Theta^n_{K}(P^q)) e^{-\lambda_n S} \Omega} \;,
\end{align}
where $ \sigma(n) = 1 $ if $n$ is even and $ \sigma(n) = 2 $ if $n$ is odd, and the error operator is
\begin{equation}\label{eq:errEm}
	E_m(P^q) \coloneq \int_{\Delta^{m+1}} \di^{m+1} \underline{\lambda} \;
		e^{\lambda_{m+1} S} E_{Q_{\sigma(m)}}\left(\Theta^{m}_{K}(P^q)\right) e^{-\lambda_{m+1} S} \;.
\end{equation}
\end{proposition}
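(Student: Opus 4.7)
The plan is to expand $\langle\Omega, e^S a_q^* a_q e^{-S}\Omega\rangle$ via iterated Duhamel and, at each order, extract a leading c-number contribution from the anticommutator structure of Lemma~\ref{lem:Q1Kcomm}, collecting the remainder into the error terms $E_m(P^q)$ and into the final integral.

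The starting point is a direct CAR computation yielding
\[ [S, a_q^* a_q + a_{-q}^* a_{-q}] = Q_2(\{K, P^q\}) = Q_2(\Theta^1_K(P^q)) \]
exactly, with no CCR-defect error at this initial step. Since $S$, $\Omega$, $\BF$, and $K$ are all invariant under the parity $p\mapsto -p$, the difference $[S, a_q^* a_q] - \frac{1}{2}Q_2(\Theta^1_K(P^q)) = \frac{1}{2}[S, a_q^* a_q - a_{-q}^* a_{-q}]$ is parity-antisymmetric; so is every further $[S,\cdot]$ commutator applied to it and every sandwich by $e^{\pm\lambda S}$. Because $\Omega$ is parity-invariant, the vacuum expectation of any such parity-antisymmetric operator vanishes, so the whole expansion effectively reduces to $\frac{1}{2}\langle\Omega, e^S(a_q^*a_q+a_{-q}^*a_{-q})e^{-S}\Omega\rangle$, which accounts for the overall factor $\frac{1}{2}$ in the statement.

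I would then iterate the Duhamel formula $e^SAe^{-S} = \sum_{m=0}^{n-1}\frac{1}{m!}\ad_S^m(A) + \int_{\Delta^n}\di^n\ulambda\, e^{\lambda_n S}\ad_S^n(A)e^{-\lambda_n S}$ and apply Lemma~\ref{lem:Q1Kcomm} at each successive commutator: $[S, Q_2(A)] = Q_1(\{A,K\}) + \sum_\ell\sum_r\{A,K\}(\ell)_{r,r} + E_{Q_2}(A)$ contributes a scalar surviving the vacuum, while $[S, Q_1(A)] = Q_2(\{A,K\}) + E_{Q_1}(A)$ does not. Inductively, $\ad_S^m$ applied to the symmetrized operator produces $Q_{\sigma(m)}(\Theta^m_K(P^q))$ as its leading quasi-bosonic part, together with a scalar at even $m$ and accumulated error operators $E_{Q_{\sigma(m)}}(\Theta^m_K(P^q))$ that, after integration over the extra $\lambda_{m+1}$ variable introduced by the next Duhamel iteration, assemble precisely into $E_m(P^q)$ as in~\eqref{eq:errEm}. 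Vacuum expectations annihilate $Q_1$ and $Q_2$ (since $b_p(\ell)\Omega=0$), so only the scalars, the errors, and the final $Q_{\sigma(n)}$-remainder at order $n$ survive.

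To finish, I would identify the scalars with the claimed closed form: expanding $\Theta^m_K(P^q) = \{K,\ldots,\{K, P^q\}\}$ into its $2^m$ ordered monomials $K^{a_0} P^q K^{a_1}$ with $a_0+a_1=m$ and using trace cyclicity with the rank-one projector $P^q = |q\rangle\langle q|$, each monomial contributes $(K^m)_{q,q}$, so $\sum_r\Theta^m_K(P^q)(\ell)_{r,r} = \mathds{1}_{L_\ell}(q)((2K(\ell))^m)_{q,q}$. Combined with the Duhamel prefactor $1/m! = \int_{\Delta^m}\di^m\ulambda$ and the parity factor $\frac{1}{2}$, this reproduces the main term of~\eqref{eq:finexpan}. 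The principal technical obstacle is careful simplex-integral bookkeeping to show that the accumulated errors at step $m$ organize into exactly the formula~\eqref{eq:errEm} and that the alternating indexing $\sigma(m)\in\{1,2\}$ lines up across all levels; a minor subtlety is that $\Theta^m_K(P^q)$ fails the reflection-symmetry hypothesis of Lemma~\ref{lem:Q1Kcomm}, but the parity argument above effectively replaces it by the reflection-symmetric combination $\frac{1}{2}(\Theta^m_K(P^q)+\Theta^m_K(P^{-q}))$ at the level of vacuum expectations.
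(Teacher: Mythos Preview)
Your proposal is correct and follows essentially the same route as the paper's proof: reflection-symmetrize, compute the first commutator exactly as a $Q_2$, then iterate Duhamel using Lemma~\ref{lem:Q1Kcomm}, peel off the scalar trace at even orders, and leave the $E_{Q_j}$ pieces inside the simplex integrals to form $E_m$. The only organizational difference is that the paper carries the reflection-symmetric $\tilde P^q=\tfrac12(P^q+P^{-q})$ through the entire iteration (so that the hypothesis $A(\ell)_{r,s}=A(-\ell)_{-r,-s}$ of Lemma~\ref{lem:Q1Kcomm} is satisfied verbatim at every step) and replaces $\tilde P^q$ by $P^q$ only in the final vacuum expectation, whereas you propose to work with $P^q$ and invoke parity whenever needed; the two are equivalent. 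In fact your first displayed identity $[S,a_q^*a_q+a_{-q}^*a_{-q}]=Q_2(\{K,P^q\})$ already holds as an operator identity, because $Q_2(A)=Q_2(A^\sim)$ with $A^\sim(\ell)_{r,s}:=A(-\ell)_{-r,-s}$ and $\{K,P^q\}^\sim=\{K,P^{-q}\}$, so the reflection-symmetry subtlety you flag at the end is absorbed at the $Q_2$ level.
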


In Lemma~\ref{lem:nqb_integralrecovery}, we show that the first term on the r.~h.~s. of~\eqref{eq:finexpan} converges to $ n^{\RPA}(q) $ as $ n \to \infty $. The exchange contribution $ n^{\ex}(q) $ emerges by normal ordering the operators $E_m(P^q)$.

\begin{proof}
Our trial state and excitation density are reflection symmetric, that is, if we define the spatial reflection $ \fR: \cF \to \cF $ by $ \fR^* a_q^* \fR = a^*_{-q} $ and $ \fR \Omega = \Omega $, then
\begin{equation} \label{eq:reflectionsymmetry}
	\fR e^{-S} \Omega = e^{-S} \Omega
\end{equation}
and therefore
\begin{equation}
	\eva{\Omega, e^{S} a_q^* a_q e^{-S} \Omega} = \half \eva{\Omega, e^{S} (a_q^* a_q + a_{-q}^* a_{-q}) e^{-S} \Omega} \;.
\end{equation}
The first commutator in the Duhamel expansion takes the convenient form
\begin{equation} \label{eq:firstcommutator}
	[S, a_q^* a_q] + [S, a_{-q}^* a_{-q}]
	= Q_2(\{K,\tilde{P}^q\}) \;, \qquad
	\tilde{P}^q \coloneq \half(P^q + P^{-q}) \;.
\end{equation}
We then iteratively Duhamel-expand the $ Q_1$-- and $ Q_2 $--terms using Lemma~\ref{lem:Q1Kcomm} as
\begin{align*}
	e^{\lambda S} Q_1(A) e^{-\lambda S}
	&= Q_1(A) + \int_0^{\lambda} \di \lambda' e^{\lambda' S} Q_2(\{A,K\}) e^{-\lambda' S}
		+ \int_0^{\lambda} \di \lambda' e^{\lambda' S} E_{Q_1}(A) e^{-\lambda' S} \;, \\
	e^{\lambda S} Q_2(A) e^{-\lambda S}
	&= Q_2(A) + \int_0^{\lambda} \di \lambda' e^{\lambda' S} Q_1(\{A,K\}) e^{-\lambda' S}
		+ \int_0^{\lambda} \di \lambda' e^{\lambda' S} E_{Q_2}(A) e^{-\lambda' S} \\
	&\quad + \lambda \sum_{\ell \in \Zstar} \sum_{r \in L_{\ell}} \big\{ A(\ell), K(\ell) \big\}_{r,r} \;. \tagg{eq:expand}
\end{align*}
The $ E_{Q_1}$-- and $ E_{Q_2} $--terms are not expanded further but collected in $ E_m(P^q) $. The term on the last line can be written as a trace and contributes to $n^\textnormal{RPA}(q)$. After $ n $ expansion steps
\begin{align}
	&e^{S} (a_q^* a_q + a_{-q}^* a_{-q}) e^{-S} \nonumber\\
	&= a_q^* a_q + a_{-q}^* a_{-q}
		+ \sum_{\ell\in \Zstar} \mathds{1}_{L_\ell \cup L_{-\ell}}(q) \sum_{\substack{m=2\\m:\textnormal{ even}}}^n \frac{\mathrm{Tr} \big(\Theta^m_{K(\ell)} \big( \tilde{P}^q(\ell) \big) \big)}{m!}
		+ \sum_{m=1}^{n-1} E_m(\tilde{P}^q) \nonumber\\
	&\quad+ \sum_{m=1}^{n-1}\frac{1}{m!}
		Q_{\sigma(m)}  \big( \Theta^m_{K}(\tilde{P}^q) \big)
		+\int_{\Delta^n} \di^n \underline{\lambda} \;
		e^{\lambda_n S}Q_{\sigma(n)}(\Theta^n_K (\tilde{P}^q)) e^{-\lambda_n S} \;.	\label{eq:operatorexpansion}
\end{align}
In the vacuum expectation value, $ a_q^* a_q + a_{-q}^* a_{-q} $ and the $ Q_{\sigma(m)} $--terms vanish. Thus
\begin{align}
	& \half \langle \Omega, e^{S} (a_q^* a_q + a_{-q}^* a_{-q}) e^{-S} \Omega \rangle \nonumber\\
	&= \half \sum_{\ell\in \Zstar} \mathds{1}_{L_\ell \cup L_{-\ell}}(q) \sum_{\substack{m=2\\m:\textnormal{ even}}}^n \frac{\mathrm{Tr} \big(\Theta^m_{K(\ell)} \big( \tilde{P}^q(\ell) \big) \big)}{m!}
	+ \half \sum_{m=1}^{n-1} \langle \Omega, E_m(\tilde{P}^q) \Omega \rangle \nonumber\\
	&\quad + \half \int_{\Delta^n} \di^n \underline{\lambda} \;
		\langle \Omega, e^{\lambda_n S}Q_{\sigma(n)}(\Theta^n_{K}(\tilde{P}^q)) e^{-\lambda_n S} \Omega \rangle \;.
\end{align}
Using reflection symmetry, we may replace $ \tilde{P}^q $ by $ P^q $ and restrict to $ q \in L_\ell $ because $ q \notin L_\ell$ implies $P^q(\ell) = 0 $. The result follows since by cyclicity $ \mathrm{Tr} \big(\Theta^m_{K(\ell)} \big( P^q(\ell) \big) \big) = ((2K(\ell))^m)_{q,q} $.
\end{proof}

\subsection{Normal Ordering the Error Operators}
\label{sec:extraction_ex}
As shown in \eqref{eq:errEm}, $E_m(P_q)$ depends on $E_{Q_1}$ for $m$ even and on $E_{Q_2}$ for $m$ odd. To estimate $ E_{Q_1} $ and $ E_{Q_2} $, we need to normal order them first. In this step the exchange contribution $ n^{\ex}(q) $ emerges.

\begin{lemma}[Error terms] \label{lem:normalordering_errors}
Recall definition~\eqref{eq:errKQ}.
 For $ q \in B_{\F}^c $, we may write
\begin{equation} \label{eq:EQ1EQ2extension}
\begin{split}
	E_{Q_1}(\Theta^m_{K}(P^q)) &
	= \sum_{j=1}^3 E_{Q_1}^{m,j}(q) + \mathrm{h.c.} 	\qquad (\text{for even }m \in \NNN)\;, \\
	E_{Q_2}(\Theta^m_{K}(P^q)) &
	= \bigg( \sum_{j=1}^{11} E_{Q_2}^{m,j}(q)  + \mathrm{h.c.} \bigg) + n^{\ex,m}(q) \qquad (\text{for odd }m \in \NNN)\;,
\end{split}
\end{equation}
with
\begin{align}
	E_{Q_1}^{m,1}(q)
	&\coloneq -2 \sum_{\ell, \ell_1\in \Zstar}\sum_{\substack{r\in L_{\ell} \cap L_{\ell_1}\\ s \in L_{\ell},\,s_1\in L_{\ell_1}}} \Theta^m_{K}(P^q)(\ell)_{r,s} K(\ell_1)_{r,s_1} a^*_{r-\ell_1} b^*_{s}(\ell) b^*_{-s_1}(-\ell_1) a_{r-\ell}
	\;, \nonumber\\
	E_{Q_1}^{m,2}(q)
	&\coloneq -2 \sum_{\ell, \ell_1\in \Zstar}\sum_{\substack{r\in (L_{\ell}-\ell) \cap (L_{\ell_1}-\ell_1)\\ s \in L_{\ell},\,s_1\in L_{\ell_1} }} \Theta^m_{K}(P^q)(\ell)_{r+\ell,s}K(\ell_1)_{r+\ell_1,s_1}
	a^*_{r+\ell_1}b^*_{s}(\ell) b^*_{-s_1}(-\ell_1) a_{r+\ell}
	\;, \nonumber\\
	E_{Q_1}^{m,3}(q)
	&\coloneq  2 \sum_{\ell, \ell_1\in \Zstar}\sum_{\substack{r\in L_{\ell} \cap L_{\ell_1} \cap (-L_{\ell_1}+\ell+\ell_1)\\ s \in L_{\ell}}} \Theta^m_{K}(P^q)(\ell)_{r,s}K(\ell_1)_{r,-r+\ell+\ell_1} b^*_{s}(\ell) a^*_{r-\ell_1}a^*_{r-\ell-\ell_1} \;, \label{eq:expandedEQ1}
\end{align}
and
\allowdisplaybreaks
\begin{align}
	E_{Q_2}^{m,1}(q)
	&\coloneq 2\sum_{\ell,\ell_1 \in \Zstar}\sum_{\substack{r\in L_{\ell} \cap L_{\ell_1}\\ s \in L_{\ell},\,s_1\in L_{\ell_1}}} \Theta^m_{K}(P^q)(\ell)_{r,s}K(\ell_1)_{r,s_1} a^*_{r-\ell_1}b^*_{-s_1}(-\ell_1)b_{-s}(-\ell)a_{r-\ell} \;, \nonumber\\
	E_{Q_2}^{m,2}(q)
	&\coloneq 2\sum_{\ell,\ell_1 \in \Zstar}\sum_{\substack{r\in (L_{\ell}-\ell) \cap (L_{\ell_1}-\ell_1)\\ s \in L_{\ell},\,s_1\in L_{\ell_1}}} \Theta^m_{K}(P^q)(\ell)_{r+\ell,s} K(\ell_1)_{r+\ell_1,s_1} a^*_{r+\ell_1} b^*_{-s_1}(-\ell_1) b_{-s}(-\ell) a_{r+\ell}\;, \nonumber\\
	E_{Q_2}^{m,3}(q)
	&\coloneq -2\sum_{\ell,\ell_1 \in \Zstar}\sum_{\substack{r\in L_{\ell} \cap L_{\ell_1} \cap (-L_{\ell_1}+\ell+\ell_1)\\ s \in L_{\ell}}} \Theta^m_{K}(P^q)(\ell)_{r,s} K(\ell_1)_{r,-r+\ell+\ell_1} a^*_{r-\ell_1}a^*_{r-\ell-\ell_1}b_{-s}(-\ell)\;, \nonumber\\
	E_{Q_2}^{m,4}(q)
	&\coloneq -2 \sum_{\ell,\ell_1 \in \Zstar}\sum_{\substack{r\in L_{\ell} \cap L_{\ell_1}\cap (-L_{\ell}+\ell+\ell_1)\\s_1\in L_{\ell_1}}} \Theta^m_{K}(P^q)(\ell)_{r,-r+\ell+\ell_1} K(\ell_1)_{r,s_1} b^*_{-s_1}(-\ell_1)a_{r-\ell-\ell_1}a_{r-\ell}\;, \nonumber\\
	E_{Q_2}^{m,5}(q)
	&\coloneq - 2\sum_{\ell,\ell_1 \in \Zstar}\sum_{\substack{r\in L_{\ell} \cap L_{\ell_1}\\ s \in (L_{\ell}-\ell) \cap (L_{\ell_1}-\ell_1)}} \Theta^m_{K}(P^q)(\ell)_{r,s+\ell}K(\ell_1)_{r,s+\ell_1}a^*_{r-\ell_1}a^*_{-s-\ell_1} a_{-s-\ell}a_{r-\ell}\;, \nonumber\\
	E_{Q_2}^{m,6}(q)
	&\coloneq -\sum_{\ell,\ell_1 \in \Zstar}\sum_{r,s\in L_{\ell} \cap L_{\ell_1}} \Theta^m_{K}(P^q)(\ell)_{r,s}K(\ell_1)_{r,s}a^*_{r-\ell_1}a^*_{-s+\ell_1} a_{-s+\ell}a_{r-\ell}\;, \nonumber\\
	E_{Q_2}^{m,7}(q)
	&\coloneq -\sum_{\ell,\ell_1 \in \Zstar}\sum_{\substack{r,s\in (L_{\ell}-\ell) \cap (L_{\ell_1}-\ell_1)}} \Theta^m_{K}(P^q)(\ell)_{r+\ell,s+\ell} K(\ell_1)_{r+\ell_1,s+\ell_1} a^*_{r+\ell_1}a^*_{-s-\ell_1}a_{-s-\ell}a_{r+\ell}\;, \nonumber\\
	E_{Q_2}^{m,8}(q)
	&\coloneq -2\sum_{\ell,\ell_1 \in \Zstar}\sum_{\substack{r\in L_{\ell} \cap L_{\ell_1} \cap (-L_{\ell}+\ell+\ell_1) \\\cap (-L_{\ell_1}+\ell+\ell_1)}} \Theta^m_{K}(P^q)(\ell)_{r,-r+\ell+\ell_1}K(\ell_1)_{r,-r+\ell+\ell_1} a^*_{r-\ell_1}a_{r-\ell_1}\;, \nonumber\\
	E_{Q_2}^{m,9}(q)
	&\coloneq -2\sum_{\ell,\ell_1 \in \Zstar} \sum_{\substack{r\in L_{\ell}\cap L_{\ell_1} \cap (-L_{\ell}+\ell +\ell_1) \\\cap (-L_{\ell_1}+\ell+\ell_1)}} \Theta^m_{K}(P^q)(\ell)_{r,-r+\ell+\ell_1}K(\ell_1)_{r,-r+\ell+\ell_1} a^*_{r-\ell-\ell_1}a_{r-\ell-\ell_1} \;, \nonumber\\
	E_{Q_2}^{m,10}(q)
	&\coloneq \sum_{\ell \in \Zstar} \sum_{r\in L_{\ell}}\Theta^{m+1}_{K}(P^q)(\ell)_{r,r} a^*_{r-\ell}a_{r-\ell} \;, \nonumber\\
	E_{Q_2}^{m,11}(q)
	&\coloneq \sum_{\ell \in \Zstar} \sum_{r\in L_{\ell}}\Theta^{m+1}_{K}(P^q)(\ell)_{r,r} a^*_{r}a_{r} \;, \label{eq:expandedEQ2}
\end{align}
\allowdisplaybreaks[1]
as well as
\begin{align}
	n^{\ex,m}(q)
	&\coloneq 2 \sum_{\ell,\ell_1 \in \Zstar}\sum_{r\in L_{\ell} \cap L_{\ell_1} \cap (-L_{\ell}+\ell+\ell_1) \cap (-L_{\ell_1}+\ell+\ell_1 )} \!\!\!\Theta^m_{K}(P^q)(\ell)_{r,-r+\ell+\ell_1}K(\ell_1)_{r,-r+\ell+\ell_1} \;. \label{eq:nqexm}
\end{align}
\end{lemma}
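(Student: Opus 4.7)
The strategy is direct computation using the CAR. We substitute the definition~\eqref{eq:commerrKb} of $\mathcal{E}_r(\ell)$ and the explicit expression for $\epsilon_{p,q}(k,\ell)$ from Lemma~\ref{lem:paircomm} into $E_{Q_1}(\Theta^m_K(P^q))$ and $E_{Q_2}(\Theta^m_K(P^q))$ given by~\eqref{eq:errKQ}. Each resulting summand is a product of six fermionic operators (two from $\epsilon$, and four from the flanking pair operators). We then apply the CAR to bring each product into normal order, picking up a Kronecker-delta scalar from each contraction of an $a$ past an $a^*$.

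For $E_{Q_1}$, the summand $\mathcal{E}_r(\ell) b_s(\ell)$ carries only one $a^*$ against five $a$'s, so at most one contraction is possible. Zero-contraction contributions remain six-operator and, after choosing between the two Kronecker-delta branches of $\epsilon_{r_1,r}(\ell_1,\ell)$ and reindexing, yield the $a^* b^* b^* a$-type terms $E_{Q_1}^{m,1}$ and $E_{Q_1}^{m,2}$ (the two branches differ by a translation in the $r$-sum, which produces the two distinct summation ranges $r\in L_\ell\cap L_{\ell_1}$ and $r\in(L_\ell-\ell)\cap(L_{\ell_1}-\ell_1)$ appearing in~\eqref{eq:expandedEQ1}). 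One-contraction contributions collapse to the four-operator $b^* a^* a^*$-type term $E_{Q_1}^{m,3}$. The $b_s^*(\ell)\mathcal{E}_r^*(\ell)$ half of~\eqref{eq:errKQ} gives the $\mathrm{h.c.}$ counterparts.

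For $E_{Q_2}$ the analogous substitution is applied to both $\{\mathcal{E}_r^*(\ell), b_{-s}(-\ell)\}$ and $\{b_{-s}^*(-\ell), \mathcal{E}_r(\ell)\}$; here each summand carries three $a^*$'s and three $a$'s and so admits up to three contractions. Zero contractions yield the six-operator $a^* b^* b a$-type terms $E_{Q_2}^{m,1,2}$; one contraction yields the four-operator $a^* a^* b$, $b^* a a$, and $a^* a^* a a$ terms $E_{Q_2}^{m,3,4,5,6,7}$; two contractions yield the one-body terms $E_{Q_2}^{m,8,9}$; and three contractions yield the pure scalar $n^{\ex,m}(q)$, each branch of each $\epsilon$ producing a distinct index pattern. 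The remaining contributions $E_{Q_2}^{m,10}$ and $E_{Q_2}^{m,11}$ come directly from the last summand $\{A(\ell),K(\ell)\}_{r,s}\epsilon_{r,s}(\ell,\ell)$ of~\eqref{eq:errKQ}: since $\epsilon_{r,r}(\ell,\ell) = -(a^*_{r-\ell}a_{r-\ell}+a^*_r a_r)$ and $\{A(\ell),K(\ell)\}_{r,r} = \Theta^{m+1}_K(P^q)(\ell)_{r,r}$ by definition~\eqref{eq:Theta}, we obtain precisely the two one-body terms of~\eqref{eq:expandedEQ2}.

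The main obstacle is the combinatorial bookkeeping: each contraction selects one of the two Kronecker-delta branches of the corresponding $\epsilon$, and these deltas generate the various lens-intersection constraints (such as $r\in L_\ell\cap(-L_{\ell_1}+\ell+\ell_1)$) appearing in~\eqref{eq:expandedEQ1}--\eqref{eq:expandedEQ2}. Tracking the anticommutation signs accumulated while permuting operators and reindexing sums so as to bring them to the stated form is lengthy but purely algebraic, and no cancellations beyond relabelling of summation indices occur.
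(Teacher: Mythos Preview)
Your proposal is correct and follows exactly the approach the paper takes: the paper's proof reads in its entirety ``Follows by a lengthy but straightforward computation with the CAR. (Alternatively, this can be conveniently computed using Friedrichs diagrams~\cite{BL25}.)'' Your outline of how the contractions are organised (counting $a^*$ versus $a$ in each piece, the number of possible Wick contractions, and the identification of the $\epsilon_{r,s}(\ell,\ell)$ term with $E_{Q_2}^{m,10,11}$) is in fact more explicit than what the paper records, but it is the same computation.
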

\begin{proof}
Follows by a lengthy but straightforward computation using the CAR. (Alternatively, this can be conveniently computed using Friedrichs diagrams~\cite{BL25}.)
\end{proof}

\section{Preliminary Bounds}
\label{sec:prelim_bounds}

We collect some estimates to be used for estimating the error terms in~\eqref{eq:finexpan}.

\begin{lemma}[Kinetic sums] \label{lem:lambdainverse}
Let $ \ell \in \Zstar $ and recall the definitions~\eqref{eq:Lell} and~\eqref{eq:eq} of $ \lambda_{\ell,r} $ and $ e(r) $. Consider any set $ S \subset \Z^3 $ with $ |S| \le C_0 \kF^3 $ for some $ C_0 > 0 $. Then, given $ \varepsilon > 0 $, there exist $ C, C_{C_0, \varepsilon} > 0 $ such that
\begin{equation} \label{eq:lambdainverse}
	\sum_{r \in L_\ell} \lambda_{\ell,r}^{-1} \le C \kF \;, \qquad
	\sum_{r \in S} e(r)^{-1} \le C_{\varepsilon, C_0} \kF^{1+\varepsilon} \;.
\end{equation}
\end{lemma}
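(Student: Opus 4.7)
The plan is to prove each inequality by reducing to a lattice-point count and invoking elementary number-theoretic estimates.

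For the first inequality, the key algebraic observation is that $2\lambda_{\ell,r} = |r|^2 - |r-\ell|^2 = 2 r \cdot \ell - |\ell|^2$ is a positive integer for $r \in L_\ell$, hence $\lambda_{\ell,r} \geq 1/2$. I will decompose the sum by the level sets $A_n := \{r \in L_\ell : 2\lambda_{\ell,r} = n\}$, writing $\sum_{r \in L_\ell} \lambda_{\ell,r}^{-1} = 2\sum_{n \geq 1} |A_n|/n$. Each $A_n$ lies on the affine plane $r \cdot \ell = (n + |\ell|^2)/2$ intersected with $L_\ell$. A direct computation in cylindrical coordinates along $\hat\ell := \ell/|\ell|$ shows that the cross-section of $L_\ell$ at such a plane has 2D area exactly $\pi n$ in the annular regime $r \cdot \hat\ell \in (|\ell|/2, \kF)$ (via $R_o^2 - R_i^2 = r_\parallel^2 - (r_\parallel - |\ell|)^2 = n$) and is a disk of area at most $\pi \kF^2$ otherwise. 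Integer points on this plane form a 2D sublattice of covolume $|\ell|/\gcd(\ell_1, \ell_2, \ell_3)$; a Gauss-circle count with Sierpi\'nski's $O(R^{2/3})$ boundary refinement gives $|A_n| \leq C (\gcd(\ell)/|\ell|) \min(n, \kF^2) + O(\kF^{2/3})$. Since valid values of $n$ step by $2\gcd(\ell)$ and range over $[1, O(\kF |\ell|)]$, the main-term sum telescopes to $O(\kF)$, while the boundary corrections, weighted by $1/n$, accumulate to only $O(\kF^{2/3} \log \kF) = o(\kF)$. The regime $|\ell| > 2\kF$ is handled separately using $\lambda_{\ell,r} \geq |\ell|(|\ell|/2 - \kF)$ and $|L_\ell| \leq |\BF| = O(\kF^3)$, which directly yields $\sum \leq C \kF$.

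For the second inequality, $e(r) = ||r|^2 - K + 1/2|$ with $K := \min\{|p|^2 : p \in \BFc\}$ of order $\kF^2$. Since $|r|^2, K \in \Z$, the value $e(r)$ is a half-integer $\geq 1/2$, and the level set $\{e(r) = n + 1/2\}$ consists of $r \in \Z^3$ with $|r|^2 \in \{K + n, K - n - 1\}$. Writing $r_3(N) := \#\{r \in \Z^3 : |r|^2 = N\}$, the standard bound $r_3(N) \leq C_\varepsilon N^{1/2 + \varepsilon}$ (which follows from $r_3(N) \leq C \sqrt{N}\,\sigma_0(N)$ and $\sigma_0(N) = O(N^\varepsilon)$) combined with $K = O(\kF^2)$ gives
\begin{equation*}
    \#\{r \in \Z^3 : e(r) \leq E\} = \sum_{|m| \leq E} r_3(K + m) \leq C_\varepsilon\, E\, \kF^{1+\varepsilon}\,.
\end{equation*}
The layer-cake representation $\sum_{r \in S} e(r)^{-1} = \int_{1/2}^\infty s^{-2}\, \#\{r \in S : e(r) \leq s\}\, \di s$, combined with the upper bound $\#\{r \in S : e(r) \leq s\} \leq \min(|S|, C_\varepsilon s\, \kF^{1+\varepsilon})$ and the split at the threshold $s^\ast := |S|/(C_\varepsilon \kF^{1+\varepsilon})$, produces the claimed bound $\sum_{r \in S} e(r)^{-1} \leq C_{\varepsilon, C_0}\, \kF^{1+\varepsilon}$, after absorbing a $\log \kF$ factor by slightly enlarging $\varepsilon$.

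The main technical obstacle lies in the first inequality: obtaining the clean $O(\kF)$ bound \emph{without} a $\log\kF$ factor requires the Gauss-circle boundary corrections to remain subdominant uniformly in $\ell$, which is why Sierpi\'nski's refined lattice-point estimate (rather than the trivial $O(R)$ bound) must be invoked for $|A_n|$; the matching main term is, reassuringly, already dictated by the continuum integral $\int_{L_\ell} \lambda^{-1}\,\di r$, in which the singular factor $1/\lambda$ is precisely cancelled by the cross-section area $2\pi\lambda$ on each annular slab.
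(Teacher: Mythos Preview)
The paper does not actually prove this lemma: it cites \cite[Prop.~A.2]{CHN22} for the first inequality and \cite[Lemma~3.2]{CHN24} for the second, then observes that the latter extends trivially to any $S$ with $|S|\le C_0\kF^3$ by partitioning $S$ into $O(C_0)$ pieces. Your proposal instead supplies self-contained arguments. For the second inequality your approach is correct: the bound $r_3(N)=O_\varepsilon(N^{1/2+\varepsilon})$, obtained from $r_3(N)=\sum_{a^2\le N}r_2(N-a^2)$ together with $r_2(M)\le 4\sigma_0(M)$ and the divisor bound, feeds into the layer-cake split to give the claim. (The intermediate inequality $r_3(N)\le C\sqrt{N}\,\sigma_0(N)$ you quote is not literally true---$N-a^2$ can have many more divisors than $N$---but you only use the weaker consequence $r_3(N)=O_\varepsilon(N^{1/2+\varepsilon})$, which does follow.)

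For the first inequality there is a genuine gap. Sierpi\'nski's $O(R^{2/3})$ remainder is a theorem about $\Z^2$; when transplanted to a general 2D sublattice of $\Z^3$ the implied constant depends on the lattice geometry. The sublattice on the plane $r\cdot\ell=c$ has covolume $|\ell|/\gcd(\ell)$ and successive minima $1\le\lambda_1\le\lambda_2$ with aspect ratio $\lambda_2/\lambda_1$ as large as $|\ell|/\gcd(\ell)$. After the linear map sending this lattice to $\Z^2$, your disk becomes an ellipse of that aspect ratio, and for an ellipse of semi-axes $a\gg1$ and $b\asymp1$ the $\Z^2$-point discrepancy is of order $a$, not $a^{2/3}$. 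Hence the uniform bound $|A_n|=\pi n\,\gcd(\ell)/|\ell|+O(\kF^{2/3})$ is unjustified, and without it your boundary corrections revert to $O(\kF)\sum_n n^{-1}=O(\kF\log\kF)$. The gap is repairable---for instance, the cruder per-slice error $O(\kF/\lambda_2)$ sums to $O\bigl(\kF\log\kF/\sqrt{|\ell|\gcd(\ell)}\bigr)$, which is $O(\kF)$ once $|\ell|\gcd(\ell)\gtrsim(\log\kF)^2$, and the remaining $\ell$ in a polylogarithmic ball have bounded aspect ratio so that Sierpi\'nski applies with a bounded constant---but as written the uniformity claim needs justification.
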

\begin{proof}
The first statement is~\cite[Prop.~A.2]{CHN22}, the second was proven with $ C \kF^3 $ replaced by $ |\overline{B_{2 \kF}(0)} \cap \Z^3| $ in~\cite[Lemma~3.2]{CHN24}. It generalizes to any set $S$ with $ |S| \le C_0 \kF^3 $ by writing $ S $ as a disjoint union of a uniformly bounded numbers of sets each with at most $ |\overline{B_{2 \kF}}(0) \cap \Z^3| $ points.
\end{proof}

\begin{definition}
For $ \ell \in \Zstar$ and $A(\ell) : \ell^2(L_\ell) \to \ell^2(L_\ell)$ we define the norms
\begin{equation}
\begin{aligned}
	\norm{A(\ell)}_{\max,2}
	\coloneq \bigg(\sum_{p \in L_\ell}
	\sup\limits_{q \in L_\ell}
	\abs{A(\ell)_{p,q}}^2\bigg)^\half \;, \qquad
	\norm{A(\ell)}_{\mathrm{max,1}}
	\coloneq \sum_{p \in L_\ell}
	\sup\limits_{q \in L_\ell}
	\abs{A(\ell)_{p,q}} \;.
\end{aligned}
\end{equation}
Moreover we have the Hilbert--Schmidt norm $ \norm{A(\ell)}_{\HS} := \Big( \sum_{p,q \in L_\ell} |A(\ell)_{p,q}|^2 \Big)^{\half} $.
\end{definition}

We have the following estimates for the Bogoliubov kernel \eqref{eq:K}.
\begin{lemma}[Bounds on $ K $]\label{lem:normsk}
Let $ \ell \in \Zstar $, $ m \in \mathbb{N} \setminus \{0\}$, and $ r,s \in L_\ell $. Then
\begin{equation} \label{eq:K_element_bounds}
	|(K(\ell)^m)_{r,s}|
	\le \frac{(C \hat{V}(\ell))^m \kF^{-1}}{\lambda_{\ell,r} + \lambda_{\ell,s}} \;.
\end{equation}
Moreover, we have the estimates
\begin{equation} \label{eq:K_max_bounds}
\begin{aligned}
	&\Vert K(\ell)^m \Vert_{\max,2}
	&\le \; &(C \hat{V}(\ell))^m \kF^{-\half} \;, \qquad
	&&\Vert K(\ell)^m \Vert_{\max,1}
	&&\le (C \hat{V}(\ell))^m \;, \\
	&\norm{K(\ell)^m}_{\HS}
	&\le \; &(C \hat{V}(\ell))^m \;,
\end{aligned} 
\end{equation}
as well as for $ q \in L_\ell $ the estimates
\begin{equation} \label{eq:e(q)_extraction_bounds}
\begin{split}
	|(K(\ell)^m)_{r,q}|
	& \le (C \hat{V}(\ell))^m \kF^{-1} e(q)^{-1} \;, \\
	\left( \sum_{r \in L_\ell} |(K(\ell)^m)_{r,q}|^2 \right)^{\half}
	& \le (C \hat{V}(\ell))^m \kF^{-\half} e(q)^{-\half} \;.
\end{split}
\end{equation}
\end{lemma}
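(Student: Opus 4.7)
The plan has three stages: establish the element-wise bound \eqref{eq:K_element_bounds} for $m=1$ by an explicit integral representation exploiting the rank-one structure of $P(\ell)$; iterate the element-wise bound to $m \ge 2$ using a combinatorial inequality; and finally derive the remaining bounds in \eqref{eq:K_max_bounds}--\eqref{eq:e(q)_extraction_bounds} by summing the element-wise bound with Lemma~\ref{lem:lambdainverse}.

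For the base case, $P(\ell) = |v_\ell\rangle\langle v_\ell|$ is rank one, so with $w_\ell := h(\ell)^{1/2} v_\ell$ the matrix $E(\ell) = \sqrt{h(\ell)^2 + 2|w_\ell\rangle\langle w_\ell|}$ is a positive rank-one perturbation of $h(\ell)$. Writing $\sqrt{A} = (2/\pi)\int_0^\infty A(A+t^2)^{-1}\,\mathrm{d}t$ and applying the Sherman--Morrison formula to $(h(\ell)^2 + t^2 + 2|w_\ell\rangle\langle w_\ell|)^{-1}$ yields
\[
\bigl(h(\ell)^{-1/2}(E(\ell)-h(\ell))h(\ell)^{-1/2}\bigr)_{r,s} = \frac{4 g_\ell}{\pi}\int_0^\infty \frac{t^2\,\mathrm{d}t}{\bigl(1 + 2 g_\ell \Sigma_\ell(t)\bigr)(\lambda_{\ell,r}^2+t^2)(\lambda_{\ell,s}^2+t^2)},
\]
with $\Sigma_\ell(t) := \sum_{p \in L_\ell} \lambda_{\ell,p}/(\lambda_{\ell,p}^2+t^2) \ge 0$. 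The same approach applied to $K(\ell) = -\tfrac{1}{2}\log(h(\ell)^{-1/2} E(\ell) h(\ell)^{-1/2})$ via the identity $\log(A) = \int_0^\infty(\tfrac{1}{1+s} - \tfrac{1}{A+s})\,\mathrm{d}s$ produces an analogous closed-form integral expression for $K(\ell)_{r,s}$ whose kernel is again $(\lambda_{\ell,r}^2+t^2)^{-1}(\lambda_{\ell,s}^2+t^2)^{-1}$ multiplied by a positive factor bounded by $1$; evaluating $\int_0^\infty t^2[(a^2+t^2)(b^2+t^2)]^{-1}\,\mathrm{d}t = \pi/(2(a+b))$ then yields $|K(\ell)_{r,s}| \le C g_\ell/(\lambda_{\ell,r} + \lambda_{\ell,s}) \le C\hat{V}(\ell)\kF^{-1}/(\lambda_{\ell,r} + \lambda_{\ell,s})$.

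For $m \ge 2$, expand $(K(\ell)^m)_{r,s} = \sum_{t_1,\ldots,t_{m-1} \in L_\ell} K(\ell)_{r,t_1}\cdots K(\ell)_{t_{m-1},s}$, insert the base case, and sum using the combinatorial inequality
\[
(a_0 + a_m)\prod_{i=1}^{m-1} a_i \le \prod_{i=0}^{m-1}(a_i + a_{i+1}), \qquad a_0,\ldots,a_m > 0,
\]
proven by induction on $m$ (inductive step: $(a_0 + a_{m+1})a_m \le (a_0 + a_m)(a_m + a_{m+1})$, i.e., $0 \le a_0 a_{m+1} + a_m^2$). Applied with $a_i = \lambda_{\ell,t_i}$, this gives $\prod_{i=0}^{m-1}(\lambda_{t_i} + \lambda_{t_{i+1}})^{-1} \le (\lambda_r + \lambda_s)^{-1}\prod_{i=1}^{m-1} \lambda_{t_i}^{-1}$; summing against $\sum_t \lambda_{\ell,t}^{-1} \le C\kF$ from Lemma~\ref{lem:lambdainverse} yields \eqref{eq:K_element_bounds} for all $m$.

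The remaining bounds follow by routine summations from \eqref{eq:K_element_bounds} and Lemma~\ref{lem:lambdainverse}: the Hilbert--Schmidt bound uses $(\lambda_r+\lambda_s)^{-2} \le (4\lambda_r\lambda_s)^{-1}$ so that $\sum_{r,s}(\lambda_r + \lambda_s)^{-2} \le C\kF^2$; the $\|\cdot\|_{\max,1}$ bound uses $\sup_s(\lambda_r+\lambda_s)^{-1} \le \lambda_r^{-1}$ (since $\lambda_{\ell,s} \ge 1/2$) combined with $\sum_r \lambda_r^{-1} \le C\kF$; the $\|\cdot\|_{\max,2}$ bound additionally uses $\sum_r \lambda_r^{-2} \le \max_r\lambda_r^{-1}\cdot\sum_r\lambda_r^{-1} \le C\kF$; and the $q$-dependent bounds \eqref{eq:e(q)_extraction_bounds} use $\lambda_{\ell,q} \ge e(q)/2$ together with $\sum_r(\lambda_r+\lambda_q)^{-2} \le \lambda_q^{-1}\sum_r\lambda_r^{-1} \le C\kF/\lambda_q$. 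The main obstacle I foresee is the explicit closed-form representation for $K(\ell)$ in the base case: a naive operator inequality $\log(I + X) \le X$ followed by Cauchy--Schwarz on the diagonal would only give the weaker decay $(\lambda_r\lambda_s)^{-1/2}$, and obtaining the correct $(\lambda_r + \lambda_s)^{-1}$ decay requires carefully exploiting both the rank-one structure and the fact that the factor $1 + 2g_\ell\Sigma_\ell(t) \ge 1$ can be dropped as an upper bound without losing the required $g_\ell$ factor.
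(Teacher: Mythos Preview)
Your induction step for $m\ge 2$ and your derivation of \eqref{eq:K_max_bounds}--\eqref{eq:e(q)_extraction_bounds} are essentially identical to the paper's proof. Your combinatorial inequality $(a_0+a_m)\prod_{i=1}^{m-1}a_i \le \prod_{i=0}^{m-1}(a_i+a_{i+1})$ is precisely the fully expanded form of the paper's one-step inductive estimate
\[
\frac{1}{(\lambda_{\ell,r}+\lambda_{\ell,r'})(\lambda_{\ell,r'}+\lambda_{\ell,s})}\le \frac{1}{\lambda_{\ell,r'}(\lambda_{\ell,r}+\lambda_{\ell,s})},
\]
and the summations yielding the norm bounds are the same computations using Lemma~\ref{lem:lambdainverse}.

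The genuine gap is in the base case $m=1$. The paper does not prove it here but cites \cite[Prop.~7.10]{CHN23}. Your integral-representation sketch correctly handles $e^{-2K}-1$: the square-root identity reduces matters to inverting $h^2+t^2+2|w\rangle\langle w|$, which is diagonal plus rank one, so Sherman--Morrison applies. But the claim that ``the same approach'' works for $K$ itself via $\log A=\int_0^\infty\bigl((1+s)^{-1}-(A+s)^{-1}\bigr)\,\mathrm{d}s$ is not justified. With $A=h^{-1/2}Eh^{-1/2}$ one has $(A+s)^{-1}=h^{1/2}(E+sh)^{-1}h^{1/2}$, and $E=(h^2+2|w\rangle\langle w|)^{1/2}$ is \emph{not} of the form diagonal plus rank one (nor does it commute with $h$), so Sherman--Morrison does not directly produce a kernel $(\lambda_{\ell,r}^2+t^2)^{-1}(\lambda_{\ell,s}^2+t^2)^{-1}$ as you assert. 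The argument in \cite{CHN23} exploits the rank-one structure at a different stage and requires more work than your sketch indicates; you are right to flag this as the main obstacle, but the outline you give does not yet overcome it.
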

\begin{proof}
From~\cite[Prop.~7.10]{CHN23} we retrieve \eqref{eq:K_element_bounds} for $ m = 1 $. For $ m \ge 2 $, we proceed by induction: Suppose \eqref{eq:K_element_bounds} holds until $ m-1 $. Then, using \eqref{eq:lambdainverse}, we get
\begin{align*}
		|(K(\ell)^m)_{r,s}|
		&\le \sum_{r' \in L_\ell}
		|(K(\ell)^{m-1})_{r,r'}| \;
		|K(\ell)_{r',s}| \tagg{eq:notimportant1}
		\le (C \hat{V}(\ell))^m \kF^{-2} \sum_{r' \in L_\ell}
		\frac{1}{\lambda_{\ell, r} + \lambda_{\ell, r'}}
		\frac{1}{\lambda_{\ell, r'} + \lambda_{\ell, s}} \\
		&\le (C \hat{V}(\ell))^m \kF^{-2} \sum_{r' \in L_\ell}
		\frac{1}{\lambda_{\ell, r'}} \frac{1}{\lambda_{\ell, r} + \lambda_{\ell, s}}
		\le (C \hat{V}(\ell))^m \kF^{-1}
		\frac{1}{\lambda_{\ell, r} + \lambda_{\ell, s}} \;.
\end{align*}
The first bound in \eqref{eq:e(q)_extraction_bounds} follows analogously, using $\lambda_{\ell,q}^{-1} \leq 2 e(q)^{-1}$.

The first bound in~\eqref{eq:K_max_bounds} follows using $ \lambda_{\ell,q}^{-1} \le 2 $ by
\begin{align*}
	\sum_{q \in L_\ell} \sup_{r \in L_\ell} |(K(\ell)^m)_{q,r}|^2
	&\le \sum_{q \in L_\ell} (C \hat{V}(\ell))^{2m} \kF^{-2} \sup_{r \in L_\ell} (\lambda_{\ell,q} + \lambda_{\ell,r})^{-2} \\
	&
	\le (C \hat{V}(\ell))^{2m} \kF^{-2} \sum_{q \in L_\ell} \lambda_{\ell,q}^{-1} \sup_{r \in L_\ell} \lambda_{\ell,q}^{-1} \le (C \hat{V}(\ell))^{2m} \kF^{-1} \;. \tagg{eq:notimportant}
\end{align*}
The second bound in \eqref{eq:e(q)_extraction_bounds} follows analogously using $\lambda_{\ell,q}^{-1} \leq 2 e(q)^{-1}$.
The second and third bound in~\eqref{eq:K_max_bounds} are easier.
% from
% \begin{displaymath}
% 	\norm{K(\ell)^m}_{\HS}^2
% 	\le \sum_{r,s \in L_\ell} (C \hat{V}(\ell))^{2m} \kF^{-2} (\lambda_{\ell,r} + \lambda_{\ell,s})^{-2}
% 	\le (C \hat{V}(\ell))^{2m} \kF^{-2} \Big( \sum_{r \in L_\ell} \lambda_{\ell,r}^{-1} \Big)^2
% 	\le (C \hat{V}(\ell))^{2m}
% \end{displaymath}
% and
% \begin{displaymath}
% 	\normmaxi{K(\ell)^m}
% 	\leq \sum_{r \in L_\ell} \sup_{q \in L_\ell} (C \hat{V}(\ell))^{m} \kF^{-1} (\lambda_{\ell,r} + \lambda_{\ell,q})^{-1}
% 	\le (C \hat{V}(\ell))^{m} \kF^{-1} \sum_{r \in L_\ell} \lambda_{\ell,r}^{-1} \leq (C \hat{V}(\ell))^{m} \,.\nonumber
% \end{displaymath}
% This concludes the proof.
\end{proof}

Next, we collect some elementary estimates involving the fermionic number operator
\begin{equation} \label{eq:cN}
	\cN \coloneq \sum_{q \in \Z^3} a_q^* a_q \;.
\end{equation}

\begin{lemma}[Quadratic operator estimates]\label{lem:estQ2}
Let $A = (A(\ell))_{\ell \in \Zstar}$ be a family of symmetric operators $ A(\ell) : \ell^2(L_\ell) \to \ell^2(L_\ell) $. Then for $ \Psi \in \cF $ we have
\begin{equation} \label{eq:Qest}
\begin{aligned}
	|\eva{\Psi,Q_1(A)\Psi}|
	&\leq 2\sum_{\ell\in \Zstar}\norm{A(\ell)}_{\HS}\eva{\Psi,\mathcal{N} \Psi} \;, \\
	|\eva{\Psi,Q_2(A)\Psi}|
	&\leq 2\sum_{\ell\in \Zstar}\norm{A(\ell)}_{\HS}\eva{\Psi,(\mathcal{N}+1) \Psi} \;.
\end{aligned}
\end{equation}
\end{lemma}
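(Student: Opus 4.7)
My plan is to reduce both bounds to applications of Cauchy--Schwarz in the index space $\ell^2(L_\ell)$ combined with the Hilbert--Schmidt character of $A(\ell)$, and then to control the resulting summed norms of pair operators by the fermionic number operator via the (approximate) CCR of Lemma~\ref{lem:paircomm}. The two key ingredients are the elementary matrix inequality
\[
	\Big|\sum_{r,s} A_{r,s}\langle v_r, v_s\rangle\Big| \le \|A\|_{\HS} \sum_r \|v_r\|^2
\]
(which follows by Cauchy--Schwarz in each index followed by AM--GM), together with its ``off-diagonal'' version
\[
	\Big|\sum_{r,s} A_{r,s}\langle u_r, v_s\rangle\Big| \le \|A\|_{\HS} \Big(\sum_r \|u_r\|^2\Big)^{\!1/2}\Big(\sum_s \|v_s\|^2\Big)^{\!1/2}.
\]

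For $Q_1(A)$ I first expand
\[
	\langle\Psi, Q_1(A)\Psi\rangle
	= 2\sum_{\ell\in\Zstar}\sum_{r,s\in L_\ell} A(\ell)_{r,s}\,\langle b_r(\ell)\Psi, b_s(\ell)\Psi\rangle,
\]
apply the first inequality above with $v_r = b_r(\ell)\Psi$, and reduce matters to bounding $\sum_r \|b_r(\ell)\Psi\|^2 = \langle \Psi, \sum_r b_r^*(\ell)b_r(\ell)\Psi\rangle$. A short CAR computation gives $b_r^*(\ell) b_r(\ell) = a_r^* a_{r-\ell}^* a_{r-\ell} a_r = n_r n_{r-\ell} \le n_r$, so that $\sum_{r\in L_\ell} b_r^*(\ell) b_r(\ell) \le \mathcal{N}$, which yields the first claim.

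For $Q_2(A)$, using that $Q_2(A)$ is Hermitian and that $[b_{-s}^*(-\ell), b_r^*(\ell)] = 0$ by Lemma~\ref{lem:paircomm}, I rewrite
\[
	\langle \Psi, Q_2(A)\Psi\rangle
	= 2\,\mathrm{Re}\sum_{\ell\in \Zstar}\sum_{r,s \in L_\ell} A(\ell)_{r,s} \langle b_r(\ell)\Psi, b_{-s}^*(-\ell)\Psi\rangle,
\]
and apply the off-diagonal inequality. The annihilation-side factor $\sum_r \|b_r(\ell)\Psi\|^2$ is again controlled by $\langle\Psi,\mathcal{N}\Psi\rangle$ as above. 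For the creation-side factor $\sum_s \|b_{-s}^*(-\ell)\Psi\|^2$, I invoke the approximate CCR $b_{-s}(-\ell) b_{-s}^*(-\ell) = b_{-s}^*(-\ell)b_{-s}(-\ell) + 1 + \epsilon_{-s,-s}(-\ell,-\ell)$; the sign information $\epsilon_{-s,-s}(-\ell,-\ell)\le 0$ from Lemma~\ref{lem:paircomm} is what produces the ``$+1$'' rather than any larger constant. Finally, merging the two factors via AM--GM gives the stated bound with $\langle\Psi,(\mathcal{N}+1)\Psi\rangle$.

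The main technical point I anticipate is the bookkeeping of the creation-side sum: a naive use of $\|b^*\|_{\mathrm{op}}\le 1$ would produce an unwanted $|L_\ell|$ factor, so one must exploit the sign of $\epsilon_{-s,-s}(-\ell,-\ell)$ to stay at a clean $+1$ uniformly in $\ell$. Apart from this, everything is routine CAR manipulation combined with Cauchy--Schwarz.
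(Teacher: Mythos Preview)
Your argument for $Q_1$ is correct and is essentially the standard one. The gap is in your treatment of $Q_2$: the claim that the sign of $\epsilon_{-s,-s}(-\ell,-\ell)$ lets you bound $\sum_{s\in L_\ell}\|b_{-s}^*(-\ell)\Psi\|^2$ by $\langle\Psi,(\mathcal N+1)\Psi\rangle$ is false. The approximate CCR gives, for each fixed $s$,
\[
\|b_{-s}^*(-\ell)\Psi\|^2 \le \|b_{-s}(-\ell)\Psi\|^2 + \|\Psi\|^2,
\]
and summing over $s\in L_\ell$ produces an $|L_\ell|\,\|\Psi\|^2$ term, exactly the factor you said you wanted to avoid. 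Take $\Psi=\Omega$: then $\|b_{-s}^*(-\ell)\Omega\|^2=1$ for every $s$, so the left side equals $|L_\ell|$, while $\langle\Omega,(\mathcal N+1)\Omega\rangle=1$. The sign of $\epsilon$ only prevents the per-$s$ constant from exceeding $1$; it does nothing about the sum over $s$.

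The fix is not to decouple $A$ from the creation operators in your off-diagonal Cauchy--Schwarz. Stop one step earlier:
\[
\Big|\sum_{r,s}A_{r,s}\langle b_r(\ell)\Psi,b_{-s}^*(-\ell)\Psi\rangle\Big|
\le \Big(\sum_r\|b_r(\ell)\Psi\|^2\Big)^{1/2}\Big(\sum_r\Big\|\sum_s A_{r,s}\,b_{-s}^*(-\ell)\Psi\Big\|^2\Big)^{1/2},
\]
and then use the bound (this is the content of \cite[Prop.~4.2]{CHN22}, also invoked in the proof of Lemma~\ref{lem:gronNest})
\[
\Big\|\sum_s A_{r,s}\,b_{-s}^*(-\ell)\Psi\Big\|^2 \le \Big(\sum_s|A_{r,s}|^2\Big)\,\langle\Psi,(\mathcal N+1)\Psi\rangle.
\]
Here the ``$+1$'' from the CCR is weighted by $\sum_s|A_{r,s}|^2$ rather than by $1$, so summing over $r$ gives $\|A(\ell)\|_{\HS}^2\langle\Psi,(\mathcal N+1)\Psi\rangle$ and the claimed estimate follows. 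This is the route the paper takes (via the reference to \cite[Prop.~4.7]{CHN22}).
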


\begin{proof}
For the first bound, see \cite[Prop.~4.7]{CHN22}; the second follows analogously.
\end{proof}

The next estimate generalizes \cite[Prop.~5.8]{CHN22}, conceptually going back to \cite{BJP+16,BPS14b} in the context of the derivation of the time-dependent Hartree--Fock equation. It shows that the expectation value $ \langle \Omega, e^{\lambda S} (\mathcal{N} + 1)^m e^{-\lambda S} \Omega \rangle$ does not grow with $N$.

\begin{lemma}[Gr\"onwall estimate]\label{lem:gronNest}
Let $S$ be defined as in \eqref{eq:T}. For every $ m \in \NNN $, there exists a constant $ C_m > 0 $ such that for all $ \lambda\in [0,1]$ we have
\begin{equation}\label{eq:gronest}
	e^{\lambda S} (\mathcal{N} +1)^m e^{-\lambda S}
	\leq C_m (\NN+1)^m \;.
\end{equation}
\end{lemma}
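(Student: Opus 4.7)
The plan is a differential Gr\"onwall argument in $\lambda$. For any $\Psi \in \mathcal{F}$, set $\Phi_\lambda \coloneq e^{-\lambda S}\Psi$ and
\[
   g_\Psi(\lambda) \coloneq \langle \Phi_\lambda, (\mathcal{N}+1)^m \Phi_\lambda \rangle = \langle \Psi, e^{\lambda S}(\mathcal{N}+1)^m e^{-\lambda S}\Psi \rangle,
\]
where the second equality uses $S^* = -S$ (since $K(\ell)$ is real-symmetric and $S$ combines $bb$ with $-b^*b^*$ antihermitically), so $e^{-\lambda S}$ is unitary. Differentiation gives $g_\Psi'(\lambda) = \langle \Phi_\lambda, [S,(\mathcal{N}+1)^m]\Phi_\lambda \rangle$, and it suffices to show $|g_\Psi'(\lambda)| \le C_m\, g_\Psi(\lambda)$ uniformly for $\lambda \in [0,1]$ and $\Psi$; Gr\"onwall then yields $g_\Psi(\lambda) \le e^{C_m \lambda}\, g_\Psi(0)$, which is the operator inequality~\eqref{eq:gronest}.

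To bound the commutator I would use that $b_r(\ell) b_{-s}(-\ell)$ is a product of four annihilation operators and therefore lowers $\mathcal{N}$ by $4$, while $b^*_{-s}(-\ell) b^*_r(\ell)$ raises it by $4$. The pullthrough identities
\[
   (\mathcal{N}+1)^m b_r(\ell) b_{-s}(-\ell) = b_r(\ell) b_{-s}(-\ell)(\mathcal{N}-3)^m, \qquad (\mathcal{N}+1)^m b^*_{-s}(-\ell) b^*_r(\ell) = b^*_{-s}(-\ell) b^*_r(\ell)(\mathcal{N}+5)^m
\]
give $[S,(\mathcal{N}+1)^m]$ the form of a \textquotedblleft dressed\textquotedblright{} $Q_2(K)$-operator in which a polynomial in $\mathcal{N}$ of degree $m-1$ -- and so bounded by $C_m (\mathcal{N}+1)^{m-1}$ -- is inserted to one side of each pair $bb$ or $b^*b^*$.

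Writing this polynomial as a product of two half-powers of $(\mathcal{N}+1)^{m-1}$ and migrating one factor past the pair operators via the same pullthrough symmetrically splits the weight between the two copies of $\Phi_\lambda$. Applying Cauchy--Schwarz together with an $m$-adapted version of Lemma~\ref{lem:estQ2}, and the Hilbert--Schmidt bound $\|K(\ell)\|_\HS \le C \hat V(\ell)$ from Lemma~\ref{lem:normsk}, then yields $|g_\Psi'(\lambda)| \le C_m\, g_\Psi(\lambda)$, with $C_m$ depending on $m$ and on $\sum_{\ell \in \Zstar} \|K(\ell)\|_\HS$ (finite under the summability hypotheses on $\hat V$ invoked in the main theorems). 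The main technical obstacle is this symmetric splitting: a direct use of Lemma~\ref{lem:estQ2} provides only a bound with one power of $(\mathcal{N}+1)$, so for $m \ge 2$ one must carefully absorb the degree-$(m-1)$ polynomial into half-powers of $(\mathcal{N}+1)^{m-1}$ on either side via the pullthrough identities before the quadratic estimate can close the loop.
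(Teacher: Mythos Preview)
Your proposal is correct and follows essentially the same route as the paper: set up a Gr\"onwall inequality for $\langle\Phi_\lambda,(\mathcal N+1)^m\Phi_\lambda\rangle$, compute $[S,(\mathcal N+1)^m]$ via the pull-through identities, split the resulting degree-$(m-1)$ polynomial in $\mathcal N$ symmetrically into two square-root factors, push one factor through the pair operators, and close with Cauchy--Schwarz and the Hilbert--Schmidt bound on $K(\ell)$. The only notable technical difference is that the paper runs the argument with $(\mathcal N+4)^m$ rather than $(\mathcal N+1)^m$: this guarantees that the difference $(\mathcal N+4)^m-\mathcal N^m$ arising from the pull-through is manifestly nonnegative on every Fock sector, so its square root is well defined without further comment, whereas with your shift the analogue $(\mathcal N+1)^m-(\mathcal N-3)^m$ needs the observation that it only ever acts on the range of $b^*b^*$ (sectors with $\ge 4$ particles) where positivity does hold.
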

\begin{proof}
First, by the pull-through formula $a^*_k \Ncal = (\Ncal - 1) a^*_k$, we have
\begin{align}
	& \left[(\NN+4)^m, b^*_{-s}(-\ell)b^*_{r}(\ell)\right] \nonumber\\
	&= \left( (\NN+4)^m - \NN^m \right) b^*_{-s}(-\ell)b^*_{r}(\ell) \nonumber\\
	&= \left( \left(\NN+4\right)^m - \NN^m \right)^\half b^*_{-s}(-\ell)b^*_{r}(\ell) \left( \left(\NN+8\right)^m - \left(\NN+4\right)^m \right)^\half \;.
\end{align}
For $ \Psi_0 \in \cF $ and $ \Psi_\lambda \coloneq e^{-\lambda S} \Psi_0 $, using the definition~\eqref{eq:T} of $ S $, then the Cauchy--Schwarz inequality, and then $ b_{-s}(-\ell) = a_{-s+\ell} a_{-s} $ with $ \Vert a_{-s+\ell} \Vert_{\textnormal{op}} \le 1 $, we get
\begin{align*}
	&\left|\frac{\di}{\di\lambda}\eva{\Psi_0, e^{\lambda S} (\mathcal{N}+4)^m e^{-\lambda S} \Psi_0 }\right|
	= \left| \eva{\Psi_0, e^{\lambda S} \left[S, (\NN+4)^m\right] e^{-\lambda S} \Psi_0}\right|\nonumber\\
	&\leq \sum_{\ell\in \Zstar}
		\sum_{r,s\in L_\ell} \abs{\eva{ b_{-s}(-\ell) \left( \left(\NN+4\right)^m - \NN^m \right)^\half \Psi_\lambda, K(\ell)_{r,s} b^*_{r}(\ell) \left( \left(\NN+8\right)^m - \left(\NN+4\right)^m \right)^\half \Psi_\lambda }}\nonumber\\
	&\leq \sum_{\ell\in \Zstar}
		\Bigg( \sum_{s\in L_\ell} \norm{ a_{-s} \left( \left(\NN+4\right)^m - \NN^m \right)^\half \Psi_\lambda}^2 \Bigg)^{\half} \times \nonumber\\
		&\quad \times \Bigg( \sum_{s\in L_\ell}
			\norm { \sum_{r\in L_\ell} K(\ell)_{r,s} b^*_{r}(\ell) \left( \left(\NN+8\right)^m - \left(\NN+4\right)^m \right)^\half \Psi_\lambda}^2 \Bigg)^{\half} \;. \nonumber
\end{align*}
Now, using $ \sum_{s \in L_\ell} \norm{a_{-s} \Phi}^2 \leq \Vert \cN^{\half} \Phi \Vert^2 $ for all $ \Phi \in \cF $, and \cite[Prop.~4.2]{CHN22} we recover
\begin{equation}
	\norm{\sum_{r\in L_\ell} K(\ell)_{r,s} b^*_{r}(\ell) \Phi}^2
	\le \sum_{r\in L_\ell} |K(\ell)_{r,s}|^2
		\norm{(\cN + 1)^{\half} \Phi}^2 \;.
\end{equation}
Moreover, there exists $ C > 0 $ depending on $ m $, such that
\begin{equation}
	\left( \left(\NN+4\right)^m - \NN^m \right)
	\leq \left(\NN+4\right)^{m-1} \;, \quad
	\left( \left(\NN+8\right)^m - \left(\NN+4\right)^m \right)
	\leq C \left(\NN+4\right)^{m-1} \;,
\end{equation}
so by Lemma~\ref{lem:normsk}, we get
\begin{align}
	&\left|\frac{\di}{\di\lambda}\eva{\Psi_0, e^{\lambda S} (\mathcal{N}+4)^m e^{-\lambda S} \Psi_0 }\right| \nonumber\\
	&\leq \sum_{\ell\in \Zstar}
		\norm{ \NN^\half \left( \left(\NN+4\right)^m - \NN^m \right)^\half \Psi_\lambda}
		\norm{K(\ell)}_{\HS}
		\norm{ (\NN+1)^\half \left( \left(\NN+8\right)^m - \left(\NN+4\right)^m \right)^\half \Psi_\lambda } \nonumber\\
	&\leq C \sum_{\ell\in \Zstar}
		\norm{K(\ell)}_{\HS}
		\norm{ \left(\NN+4\right)^\frac{m}{2} \Psi_\lambda}^2
	\leq C \eva{\Psi_0, e^{\lambda S} (\mathcal{N}+4)^m e^{-\lambda S} \Psi_0 } \;.
\end{align}
We conclude using Gr\"onwall's lemma and $ (\cN+1)^m \le (\cN+4)^m \le C (\cN+1)^m $.
\end{proof}

\section{Many-Body Error Estimates}
\label{subsec:manybody_estimates}

We now turn to estimating the two errors of the expansion in Proposition~\ref{prop:finexpan}, namely the bosonization error $ E_m $ and the expansion tail.

\subsection{Vanishing of the Expansion Tail}
\label{subsec:tailestimate}

The next proposition shows that the expansion tail vanishes as $ n \to \infty $.
\begin{proposition}[Tail]\label{prop:headerr}
Let $S$ be defined as in \eqref{eq:T} and recall definitions \eqref{eq:Theta}, \eqref{eq:Pq}, and \eqref{eq:Deltan}. There exists a constant $C$ such that for all $q \in B^c_{\F}$ and all $n \in \Nbb$ we have
\begin{equation}\label{eq:headest}
	\half \abs{\int_{\Delta^n} \di^n\underline{\lambda} \;
		\eva{\Omega, e^{\lambda_n S}Q_{\sigma(n)}(\Theta^n_{K}(P^q)) e^{-\lambda_n S} \Omega} }
	\leq C \frac{2^n}{n!} \sum_{\ell \in \Zstar} \norm{K(\ell)}^n_{\mathrm{op}} \;.
\end{equation}
\end{proposition}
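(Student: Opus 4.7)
The plan is to combine three ingredients: the quasi-bosonic bound of Lemma~\ref{lem:estQ2}, the Gr\"onwall estimate of Lemma~\ref{lem:gronNest}, and a direct Hilbert--Schmidt estimate on the iterated anticommutator $\Theta_K^n(P^q)$. The factor $1/n!$ in the final bound arises from the simplex volume $\int_{\Delta^n}\di^n\underline{\lambda}$.

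For any $\lambda_n \in [0,1]$, Lemma~\ref{lem:estQ2} applied to $\Psi_{\lambda_n} := e^{-\lambda_n S}\Omega$ yields, regardless of whether $\sigma(n)=1$ or $\sigma(n)=2$,
\begin{equation*}
\bigl|\langle \Omega, e^{\lambda_n S} Q_{\sigma(n)}(\Theta_K^n(P^q)) e^{-\lambda_n S}\Omega\rangle\bigr|
\le 2 \sum_{\ell\in\Zstar} \|\Theta_K^n(P^q)(\ell)\|_{\HS}\,
\langle \Omega, e^{\lambda_n S}(\cN+1)e^{-\lambda_n S}\Omega\rangle.
\end{equation*}
Lemma~\ref{lem:gronNest} with $m=1$ bounds the last factor by $C\langle\Omega,(\cN+1)\Omega\rangle = C$, uniformly in $\lambda_n$.

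The key combinatorial step is to estimate $\|\Theta_K^n(P^q)(\ell)\|_{\HS}$. Iterating the anticommutator $\{K(\ell),\cdot\}$ yields the binomial expansion
\begin{equation*}
\Theta_K^n(P^q)(\ell) = \sum_{j=0}^n \binom{n}{j} K(\ell)^j \, P^q(\ell) \, K(\ell)^{n-j}.
\end{equation*}
Since $P^q(\ell)$ is a rank-one projector (or zero, when $q \notin L_\ell$) and hence satisfies $\|P^q(\ell)\|_{\HS} \le 1$, the submultiplicativity $\|AXB\|_{\HS} \le \|A\|_{\mathrm{op}} \|X\|_{\HS} \|B\|_{\mathrm{op}}$ gives
\begin{equation*}
\|\Theta_K^n(P^q)(\ell)\|_{\HS} \le \sum_{j=0}^n \binom{n}{j} \|K(\ell)\|_{\mathrm{op}}^n = 2^n \|K(\ell)\|_{\mathrm{op}}^n.
\end{equation*}

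Assembling these bounds, multiplying by the simplex volume $1/n!$, and absorbing the numerical prefactor into $C$ yields the claim with a constant independent of $n$. The main potential obstacle is ensuring that conjugation by $e^{\lambda_n S}$ does not amplify the vacuum expectation of $\cN+1$ beyond an $n$-independent constant; this is precisely what the Gr\"onwall estimate of Lemma~\ref{lem:gronNest} guarantees, since the conjugation is by a fixed operator with $\lambda_n \in [0,1]$. Once this uniform number-operator control is in place, the entire $n$-dependence is combinatorial and is tamed by the factor $2^n/n!$ against $\|K(\ell)\|_{\mathrm{op}}^n$.
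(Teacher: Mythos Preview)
Your proof is correct and follows essentially the same approach as the paper: apply Lemma~\ref{lem:estQ2}, control the conjugated number operator uniformly via Lemma~\ref{lem:gronNest}, bound $\|\Theta_K^n(P^q)(\ell)\|_{\HS}\le 2^n\|K(\ell)\|_{\mathrm{op}}^n$, and use the simplex volume $1/n!$. The only stylistic difference is that the paper obtains the anticommutator bound by a one-line induction (its Lemma~\ref{lem:multicommest}), whereas you write out the binomial expansion of $\Theta_K^n(P^q)(\ell)$ explicitly; both arguments are equivalent and yield the identical estimate.
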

The proof uses the following lemma.
\begin{lemma}[Iterated anticommutator]\label{lem:multicommest}
Let $ \ell \in \Zstar $. For any symmetric operator $ A(\ell): \ell^2(L_\ell) \to \ell^2(L_\ell) $ and $\Theta^n_K$ the $ n $-fold anticommutator as in \eqref{eq:Theta}, we have
\begin{equation}
	\norm{\Theta^{n}_K(A)(\ell)}_{\HS}
	\leq 2^n \norm{K(\ell)}^{n}_{\mathrm{op}}\norm{A(\ell)}_{\HS} \;.
\end{equation}
\end{lemma}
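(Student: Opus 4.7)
The plan is a short induction on $n$ using the basic inequality that the Hilbert--Schmidt norm is a two-sided module over the bounded operators, that is, $\norm{KB}_{\HS} \le \norm{K}_{\mathrm{op}} \norm{B}_{\HS}$ and symmetrically $\norm{BK}_{\HS} \le \norm{K}_{\mathrm{op}} \norm{B}_{\HS}$. This is the only functional-analytic ingredient needed; the structure of $L_\ell$ and of the Bogoliubov kernel $K(\ell)$ play no further role here.

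For the base case $n=0$ the inequality reduces to $\norm{A(\ell)}_{\HS} \le \norm{A(\ell)}_{\HS}$. For the inductive step, I would unfold the recursion $\Theta^n_K(A)(\ell) = K(\ell)\,\Theta^{n-1}_K(A)(\ell) + \Theta^{n-1}_K(A)(\ell)\,K(\ell)$, apply the triangle inequality for $\norm{\cdot}_{\HS}$, and then bound each summand by the module inequality to extract one factor of $\norm{K(\ell)}_{\mathrm{op}}$. This yields
\begin{equation*}
\norm{\Theta^{n}_K(A)(\ell)}_{\HS}
\le 2\,\norm{K(\ell)}_{\mathrm{op}}\,\norm{\Theta^{n-1}_K(A)(\ell)}_{\HS},
\end{equation*}
and composing this recursion $n$ times, together with the induction hypothesis, produces exactly the factor $2^n \norm{K(\ell)}_{\mathrm{op}}^n$.

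There is no genuine obstacle: every step is a one-line estimate. The only thing worth double-checking is that $K(\ell)$ has finite operator norm so that the module inequality applies non-trivially, which is clear from Lemma~\ref{lem:normsk} (in particular $\norm{K(\ell)}_{\mathrm{op}} \le \norm{K(\ell)}_{\HS} \le C \hat V(\ell)$). Symmetry of $A(\ell)$ and of $K(\ell)$ is not actually needed for this norm bound, though it is used elsewhere to guarantee that $\Theta^n_K(A)(\ell)$ is again symmetric (so the statement stays within the class of operators considered in Definition~\ref{def:Q}).
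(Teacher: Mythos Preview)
Your proof is correct and is essentially identical to the paper's own argument: induction on $n$ via the inequality $\norm{KB}_{\HS}\le\norm{K}_{\mathrm{op}}\norm{B}_{\HS}$ applied to each summand of the anticommutator. The paper gives exactly the one-line recursion $\norm{\Theta^{n}_K(A)(\ell)}_{\HS}\le 2\norm{K(\ell)}_{\mathrm{op}}\norm{\Theta^{n-1}_K(A)(\ell)}_{\HS}$, and your additional remarks on finiteness of $\norm{K(\ell)}_{\mathrm{op}}$ and the irrelevance of symmetry are accurate side observations.
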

\begin{proof}
Using $\norm{AB}_{\HS} \leq \norm{A}_{\mathrm{op}} \norm{B}_{\HS}$, the bound follows by induction from
\[
	\norm{\Theta^{n}_K(A)(\ell)}_{\HS}
	= \norm{\left\{K(\ell),\Theta^{n-1}_K(A)(\ell)\right\}}_{\HS}
% 	\leq 2 \norm{K(\ell)\Theta^{n-1}_K(A)(\ell) }_{\HS} \\
	\leq 2 \norm{K(\ell)}_{\mathrm{op}}\norm{\Theta^{n-1}_K(A)(\ell)}_{\HS} \;. \qedhere
\]
\end{proof}

\begin{proof}[Proof of Proposition~\ref{prop:headerr}]
Combining Lemmas~\ref{lem:estQ2} and~\ref{lem:multicommest}, we have
\begin{align}
	&\half \abs{\int_{\Delta^n} \di^n \ulambda \;
		\eva{\Omega, e^{\lambda_n S} Q_{\sigma(n)}(\Theta^n_{K}(P^q)) e^{-\lambda_n S} \Omega} } \nonumber\\
	&\leq 2^n \int_{\Delta^n} \di^n \ulambda \sum_{\ell \in \Zstar} \norm{K(\ell)}^n_{\mathrm{op}} \norm{P^q(\ell)}_{\HS} 
		\abs{\eva{\Omega, e^{\lambda_n S} (\NN +1) e^{-\lambda_n S} \Omega}} \;.
\end{align}
With $ \norm{P^q}_{\HS} = 1$, Lemma~\ref{lem:gronNest}, and $ \int_{\Delta^n} \di^n\underline{\lambda} = \frac{1}{n!} $, we get
\begin{align*}
	&\half \abs{\int_{\Delta^n} \di^n \ulambda \;
		\eva{\Omega, e^{\lambda_n S} Q_{\sigma(n)}(\Theta^n_{K}(P^q)) e^{-\lambda_n S} \Omega} }
	\\
	& \leq C 2^n \int_{\Delta^n} \di^n \ulambda \sum_{\ell \in \Zstar} \norm{K(\ell)}^n_{\mathrm{op}} \eva{\Omega,(\NN+1)\Omega} =  C \frac{2^n}{n!} \sum_{\ell \in \Zstar} \norm{K(\ell)}^n_{\mathrm{op}} \;. \qedhere
\end{align*}
\end{proof}

\subsection{Bosonization Error Estimates}
\label{subsec:bos_error}

The largest part of our analysis addresses the error terms $ E_m $ in Proposition~\ref{prop:finexpan}. Following \cite{BL25}, we estimate them by a double bootstrap argument.

\begin{definition}[Bootstrap Quantity] We define the \emph{bootstrap quantity} as
\begin{equation} \label{eq:Xi}
	\Xi \coloneq \sup\limits_{q \in \Z^3} \sup\limits_{\lambda \in [0,1]}\expval{\Omega, e^{\lambda S} a^*_q a_q e^{-\lambda S} \Omega} \;.
\end{equation}
\end{definition}

The idea of our double bootstrap argument is as follows. As the starting point, $ 0 \le a_q^* a_q \le 1 $ implies the trivial estimate $ 0 \le \Xi \le 1 $. Obviously $n(q) \leq \Xi$. In the proof of the main result we will expand $n(q)$ around $n^\textnormal{RPA}(q)$ and control the deviation using $\Xi$. Then taking a supremum  of $n(q)$ over $q$ we can resolve for the improved bound $\Xi \leq C \kF^{-1}$ (this is the global bootstrap). Secondly, we return to the expansion and, by a similar argument as before, derive an improved bound proportional to $e(q)^{-1}$ on $n(q)$ at a fixed point $q \in \Zbb^3$ (this is the local bootstrap). To make this work, it is crucial that whenever possible in the error bounds given in the following propositions, we keep $\langle \Omega, e^{\lambda S} a^*_q a_q e^{-\lambda S} \Omega \rangle$ and use $\Xi$ only where momenta different from $q$ appear as arguments of the creation and annihilation operators. Finally, with the obtained bound on $n(q)$ we can once more return to estimate the error terms of the expansion of $n(q)$ around $n^\textnormal{RPA}(q)$, thereby obtaining our main result.

\begin{proposition} \label{prop:finalEmest}
Let $ \sum_{\ell \in \Zstar} \hat{V}(\ell)^2 |\ell|^\alpha < \infty $ for some $ \alpha > 0 $. Recall $E_m(P^q)$ from \eqref{eq:errEm} with $ \Theta^n_K $, $ P^q $, and $ \sigma(n) $ defined within and above Proposition~\ref{prop:finexpan}. Then, for
\[
	\xi_\lambda \coloneq e^{- \lambda S} \Omega \;,
\]
given $ \varepsilon > 0 $, there exist constants $ C, C_\varepsilon > 0 $ such that for all $ m \in \NNN $, $ q \in B_{\F}^c $, and $ \gamma \ge 0 $,
\begin{align} \label{eq:finalEmest_Coulomb}
	\abs{\half \eva{\Omega, E_m(P^q) \Omega} - \frac{\delta_{m,1}}{4} n^{\ex,1}(q)}
	&\leq C_\varepsilon \frac{C^m}{m!}
		\bigg( e(q)^{-1}\!\left( \kF^{-\frac 32 + \varepsilon}
		\!+ \kF^{-1 - \frac{\alpha \gamma}{2}}
		\!+ \kF^{-1 + \frac{3-\alpha}{2} \gamma} \Xi^\half
		\!+ \kF^{-1+\varepsilon} \Xi^\half \right) \nonumber\\
	&\hspace{4.5em} + e(q)^{-\half} \kF^{-1} \sup_{\lambda \in [0,1]} \eva{\xi_\lambda, a_q^* a_q \xi_\lambda}^{\half} \bigg) \;,
\end{align}
Moreover, if $ \sum_{\ell \in \Zstar} \hat{V}(\ell) < \infty $, we have the estimate
\begin{align} \label{eq:finalEmest}
	&\abs{\eva{\Omega, E_m(P^q) \Omega}} \nonumber\\
	&\leq C_\varepsilon \frac{C^m}{m!}
		\bigg( e(q)^{-1}
		\Big( \kF^{-2}
		+ \kF^{-\frac{3}{2}} \Xi^\half
		+ \kF^{-1} \Xi^{1-\varepsilon} \Big)
		+ e(q)^{-\half} \kF^{-1} \Xi^{\half - \varepsilon}
		\sup_{\lambda \in [0,1]} \eva{\xi_\lambda, a_q^* a_q \xi_\lambda}^{\half}  \bigg)\;.
\end{align}
\end{proposition}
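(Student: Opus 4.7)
The plan is to invoke Lemma~\ref{lem:normalordering_errors} to decompose $E_{Q_{\sigma(m)}}(\Theta^m_K(P^q))$ into normal-ordered many-body operator terms $E^{m,j}_{Q_i}(q)$ plus, when $\sigma(m)=2$ (i.e.\ $m$ odd), the scalar $n^{\ex,m}(q)$. After integration over the simplex of volume $1/(m+1)!$, the scalar piece contributes $n^{\ex,m}(q)/(2(m+1)!)$; for $m=1$ this is precisely $n^{\ex,1}(q)/4$, yielding the exact cancellation with the subtracted term. For odd $m \geq 3$, the scalar is bounded via the expansion $\Theta^m_K(P^q)(\ell)_{r,s} = \sum_{k=0}^m \binom{m}{k}(K(\ell)^k)_{r,q}(K(\ell)^{m-k})_{q,s}$ together with the $e(q)^{-1}$-enhanced bounds in Lemma~\ref{lem:normsk} and a Cauchy--Schwarz in the summation indices; the resulting $\kF^{-2}e(q)^{-1}$ scale, further damped by $1/(m+1)!$, fits comfortably into the error of \eqref{eq:finalEmest_Coulomb}--\eqref{eq:finalEmest}.

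For each many-body term I would write $\langle \Omega, E_m(P^q)\Omega\rangle = \int_{\Delta^{m+1}} d^{m+1}\underline{\lambda}\, \langle \xi_{\lambda_{m+1}}, E^{m,j}_{Q_i}(q)\xi_{\lambda_{m+1}}\rangle$ with $\xi_\lambda = e^{-\lambda S}\Omega$, and apply Cauchy--Schwarz to split the expectation into a product of two norms $\|A\xi\|\|B\xi\|$, where $A,B$ carry complementary subsets of the creation/annihilation operators of $E^{m,j}_{Q_i}$. The systematic tools are: the identity $\sum_p \|a_p\Phi\|^2 = \langle \Phi,\mathcal{N}\Phi\rangle$ (with Lemma~\ref{lem:paircomm} for pair operators) to close free summation indices, the Gr\"onwall bound Lemma~\ref{lem:gronNest} to reduce $\langle \xi_\lambda, (\mathcal{N}+1)^k\xi_\lambda\rangle$ to a constant, and Lemma~\ref{lem:normsk} for the kernel sums, noting that entries of $\Theta^m_K(P^q)$ always carry a factor $(K^k)_{\cdot,q}$ from which the sharp $e(q)^{-1}$ is extracted via the second bound in \eqref{eq:e(q)_extraction_bounds}. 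The factor $C^m/m!$ then arises as $2^m$ from Lemma~\ref{lem:multicommest} divided by the simplex volume $(m+1)!$.

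The bootstrap quantity $\Xi^{1/2}$ enters whenever the pinning of a $K$-index to $q$ forces $a_q$ or $a^*_q$ to act on $\xi_\lambda$, since $\|a_q\xi_\lambda\|^2 = n(q)\leq \Xi$. The finer alternative $\sup_\lambda \langle \xi_\lambda, a_q^*a_q\xi_\lambda\rangle^{1/2}$ is retained by those Cauchy--Schwarz splits that never sum over the $q$-index and hence preserve the specific momentum $q$ of the observable throughout, producing the $e(q)^{-1/2}\kF^{-1}\sup n(q)^{1/2}$ term. The fractional powers $\Xi^{1-\varepsilon}$ and $\kF^\varepsilon$ in \eqref{eq:finalEmest} arise by interpolating between $\|a_p\xi_\lambda\|\leq 1$ and $\|a_p\xi_\lambda\|^2\leq \Xi$ and then invoking Lemma~\ref{lem:lambdainverse} to absorb the resulting $\sum_r e(r)^{-1} \leq C_\varepsilon \kF^{1+\varepsilon}$. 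In the Coulomb regime (hypothesis~\eqref{hyp:alpha}) the $\ell$-sum is split at $|\ell|=\kF^\gamma$: the low-$|\ell|$ part is bounded by Cauchy--Schwarz against $\hat V(\ell)^2|\ell|^\alpha$ using $\sum_{|\ell|\leq \kF^\gamma} 1 \sim \kF^{3\gamma}$, while the high-$|\ell|$ part exploits the $|\ell|^{-\alpha}$ tail, trading off to give the $\kF^{-1-\alpha\gamma/2}$ and $\kF^{-1+(3-\alpha)\gamma/2}\Xi^{1/2}$ terms. Under hypothesis~\eqref{hyp:ell1} no splitting is needed and \eqref{eq:finalEmest} follows directly.

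The main obstacle is combinatorial bookkeeping: the three terms in $E_{Q_1}$ and eleven in $E_{Q_2}$ differ by the intersection constraints $L_\ell \cap L_{\ell_1}\cap \cdots$ on their summation indices, and each admits several natural Cauchy--Schwarz pairings, each producing a different trade-off between $e(q)^{-1}$, $\kF$-powers, and $\Xi$-factors. The claimed estimate results from selecting, for each of the fourteen terms, the pairing matching the various summands on the right-hand sides of \eqref{eq:finalEmest_Coulomb} and \eqref{eq:finalEmest}; the subtlety is ensuring the $e(q)^{-1}$ factor is extracted cleanly via the $q$-pinned bounds of Lemma~\ref{lem:normsk} rather than degrading to $\lambda_{\ell,q}^{-1}$ in any particular step.
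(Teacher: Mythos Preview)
Your proposal is correct and takes essentially the same approach as the paper, which proves the proposition by invoking Propositions~\ref{prop:finEQ1est} and~\ref{prop:finEQ2est}; these in turn are established by exactly the term-by-term Cauchy--Schwarz estimates on the $E^{m,j}_{Q_i}(q)$ from Lemma~\ref{lem:normalordering_errors} that you describe (Lemmas~\ref{lem:EQ111}--\ref{lem:estnqex}), with the $|\ell_1|\lessgtr\kF^\gamma$ splitting appearing in Lemma~\ref{lem:EQ211}. Two minor points: the $C^m$ factor arises from the binomial expansion $\Theta^m_K(P^q)=\sum_j\binom{m}{j}K^{m-j}P^qK^j$ combined with the $(C\hat V(\ell))^m$ bounds of Lemma~\ref{lem:normsk} rather than from Lemma~\ref{lem:multicommest}, and the $\Xi^{1-\varepsilon}$ under hypothesis~\eqref{hyp:ell1} is obtained via the iterated bound of Lemma~\ref{lem:Xi_halfminusepsilon} rather than by direct interpolation.
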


To prove this bound, we write
\begin{equation} \label{eq:errEm2}
	\abs{\eva{\Omega, E_m(P^q) \Omega }}
	\le \int_{\Delta^{m+1}} \di^{m+1} \underline{\lambda} \;
		\abs{\eva{\xi_{\lambda_{m+1}}, E_{Q_{\sigma(m)}}\left(\Theta^{m}_{K}(P^q)\right) \xi_{\lambda_{m+1}}}} \;,
\end{equation}
where we recall the terms \eqref{eq:EQ1EQ2extension} of $ E_{Q_1}(\Theta^m_{K}(P^q)) $ and $ E_{Q_2}(\Theta^m_{K}(P^q)) $.
After the bootstrap, we will have $ \eva{\xi_\lambda, a_q^* a_q \xi_\lambda} \sim n^{\RPA}(q) \sim \kF^{-1} e(q)^{-1} $, hence all errors scale like $ e(q)^{-1} $.

In Proposition~\ref{prop:finEQ1est} to Lemma~\ref{lem:estnqex} we estimate the individual contributions before summarizing them to complete the proof of Proposition~\ref{prop:finalEmest}.

\subsubsection{Error Estimates for Even $m$}

Here, only the case $ m \ge 2 $ occurs, which will make bounds slightly easier, since $ \sum_\ell \hat{V}(\ell)^m < \infty $ is always true for our assumptions on the potential.

\begin{proposition}[Estimate for $E_{Q_1}(\Theta^m_{K}(P^q))$]\label{prop:finEQ1est}
Let $ \sum_{\ell \in \Zstar} \hat{V}(\ell)^2 < \infty $. For $\xi_\lambda = e^{-\lambda S} \Omega$, given $ \varepsilon > 0 $, there exist constants $ C, C_\varepsilon > 0 $ such that for all even $ m \in \NNN $, $ m \ge 2 $, $ \lambda \in [0,1] $, and $ q \in B_{\F}^c $,
\begin{align} \label{eq:finalEQ1est_Coulomb}
	\abs{\eva{\xi_\lambda, E_{Q_1}\!\left(\Theta^m_K(P^q)\right) \xi_\lambda}}
	&\leq C_\varepsilon C^m \left( \kF^{-\frac 32 + \varepsilon}
		+ \kF^{-1 + \varepsilon} \Xi^\half \right)
		e(q)^{-1}
		+ C^m \kF^{-1} \eva{\xi_\lambda, a_q^* a_q \xi_\lambda}^{\half} e(q)^{-\half} \;.
\end{align}
If $ \sum_{\ell \in \Zstar} \hat{V}(\ell) < \infty $, we have the even stronger bound
\begin{align} \label{eq:finalEQ1est}
	\abs{\eva{\xi_\lambda, E_{Q_1}\!\left(\Theta^m_K(P^q)\right) \xi_\lambda}}
	&\leq C_\varepsilon C^m \left(
		\kF^{-\frac{3}{2}} \Xi^\half
		+ \kF^{-1}\Xi^{1-\varepsilon} \right) e(q)^{-1} \nonumber\\
	&\quad + C_\varepsilon C^m \kF^{-1} \Xi^{\half - \varepsilon} \eva{\xi_\lambda, a_q^* a_q \xi_\lambda}^{\half} e(q)^{-\half} \;.
\end{align}
\end{proposition}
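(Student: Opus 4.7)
The proof handles each of $E_{Q_1}^{m,1}$, $E_{Q_1}^{m,2}$, $E_{Q_1}^{m,3}$ (and their adjoints) separately via Cauchy--Schwarz, starting from the binomial expansion
\begin{equation*}
\Theta_K^m(P^q)(\ell)_{r,s} = \sum_{k=0}^m \binom{m}{k}(K(\ell)^k)_{r,q}(K(\ell)^{m-k})_{q,s},
\end{equation*}
which follows by induction from the rank-one form of $P^q(\ell)$. This factorises every kernel sum into products of matrix elements of powers of $K$, controlled by Lemma~\ref{lem:normsk}: the pointwise bound $|(K(\ell)^j)_{r,q}|\le(C\hat V(\ell))^j \kF^{-1}e(q)^{-1}$ and the $\ell^2$-in-$r$ bound $(\sum_r|(K(\ell)^j)_{r,q}|^2)^{1/2}\le(C\hat V(\ell))^j\kF^{-1/2}e(q)^{-1/2}$ will be the workhorses. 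The combinatorial $\binom{m}{k}$ summed over $k$ contributes a factor $2^m$ absorbed into $C^m$.

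Next, for each $j$ I would Cauchy--Schwarz the expectation $\eva{\xi_\lambda,E_{Q_1}^{m,j}(q)\xi_\lambda}$, splitting the four-operator product so as to isolate one free fermionic factor $a_{r-\ell}$ or $a_{r-\ell_1}$ (or the factor $a_q^{(*)}$ produced when a $b$-index is pinned at $q$). The remaining three-operator block is estimated either trivially via $\Vert a\Vert,\Vert b\Vert\le 1$, or -- when an index of $b$ must be summed -- via $\sum_s\Vert b_s(\ell)\phi\Vert^2\le\eva{\phi,\cN\phi}$ combined with the Gr\"onwall estimate Lemma~\ref{lem:gronNest}. The isolated single-fermion factor is bounded either by $\Vert a_r\xi_\lambda\Vert\le\Xi^{1/2}$ when $r$ is free, or by $\Vert a_q\xi_\lambda\Vert=\eva{\xi_\lambda,a_q^*a_q\xi_\lambda}^{1/2}$ when the boundary binomial terms $k=0$ or $k=m$ force a $b$-index to equal $q$, producing an explicit $a_q^{(*)}$ that can be pulled into the opposite side of Cauchy--Schwarz. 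This naturally reproduces the three summands of~\eqref{eq:finalEQ1est_Coulomb}: the fully-contracted contribution, estimated using $\sum_re(r)^{-1}\le C_\varepsilon\kF^{1+\varepsilon}$ from Lemma~\ref{lem:lambdainverse}, yields $\kF^{-3/2+\varepsilon}e(q)^{-1}$; the intermediate-$k$ contribution yields $\kF^{-1+\varepsilon}\Xi^{1/2}e(q)^{-1}$; and the boundary-$k$ contribution yields $\kF^{-1}e(q)^{-1/2}\eva{\xi_\lambda,a_q^*a_q\xi_\lambda}^{1/2}$. The outer sums over $\ell,\ell_1$ converge by Cauchy--Schwarz using $\sum_\ell\hat V(\ell)^2<\infty$, since each of the two kernels $\Theta_K^m(P^q)(\ell)$ and $K(\ell_1)$ supplies one power of $\hat V$.

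For the stronger bound~\eqref{eq:finalEQ1est} under $\sum_\ell\hat V(\ell)<\infty$, I would exploit $\hat V\in\ell^1$ to avoid the $\kF^\varepsilon$-loss from summing via Cauchy--Schwarz in $\ell$, and then apply the bootstrap bound $\Vert a_r\xi_\lambda\Vert^2\le\Xi$ to a further single-fermion operator that was previously controlled only by $\Vert a_r\Vert\le 1$. Interpolating this against the trivial bound produces the additional factors $\Xi^{1-\varepsilon}$ and $\Xi^{1/2-\varepsilon}$ appearing in the claimed estimate, at the price of an arbitrarily small $\kF^\varepsilon$ absorbed into $C_\varepsilon$.

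The hard part will be not the individual estimates but the bookkeeping: checking that across the three choices $j\in\{1,2,3\}$ and the alternative Cauchy--Schwarz splittings inside each $j$, every one of the three displayed types of contributions does arise; verifying that the boundary terms $k\in\{0,m\}$ in the binomial expansion are precisely the ones producing an explicit $a_q^{(*)}$ (so the $e(q)^{-1/2}$ scaling is legitimate); and ensuring that the sum over $k$ of $\binom{m}{k}(C\hat V(\ell))^m$ remains uniformly bounded by $C^m$. This last point is essential so that the factor $1/m!$ from the simplex integration in~\eqref{eq:errEm} survives intact and the Duhamel series sums geometrically in $m$.
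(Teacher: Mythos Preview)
Your plan is essentially the paper's own approach: split via the binomial identity for $\Theta_K^m(P^q)$, treat the three cases $k=0$, $1\le k\le m-1$, $k=m$ separately inside each $E_{Q_1}^{m,j}$, and estimate via Cauchy--Schwarz together with Lemma~\ref{lem:normsk}, Lemma~\ref{lem:lambdainverse}, and the Gr\"onwall bound. Two points deserve sharpening.

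First, your attribution of which binomial case produces which of the three displayed error scales is not quite right. The term $\kF^{-3/2+\varepsilon}e(q)^{-1}$ without a $\Xi$-factor does \emph{not} come from a ``fully-contracted'' $k=0$ piece; in the paper it arises from the intermediate-$k$ analysis of $E_{Q_1}^{m,3}$, where the constraint $r\in L_\ell\cap L_{\ell_1}\cap(-L_{\ell_1}+\ell+\ell_1)$ forces the $K(\ell_1)$-element to be bounded by $C\hat V(\ell_1)\kF^{-1}e(r)^{-1}$ and the sum $\sum_{r\in L_\ell}e(r)^{-1}$ then costs $\kF^{1+\varepsilon}$. The $\kF^{-1+\varepsilon}\Xi^{1/2}$ term comes from the $k=0$ case of $E_{Q_1}^{m,3}$ and the $k=m$ case of $E_{Q_1}^{m,1}$; and the $\eva{\xi_\lambda,a_q^*a_q\xi_\lambda}^{1/2}$ term arises only from $E_{Q_1}^{m,2}$ at $k=m$, where the free operator is $a_q$ (not $a_{q-\ell}$) because of the shifted index structure $r\in(L_\ell-\ell)\cap(L_{\ell_1}-\ell_1)$. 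Getting this last point right is what makes the $e(q)^{-1/2}$ scaling legitimate.

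Second, for the stronger bound under $\hat V\in\ell^1$, the factors $\Xi^{1-\varepsilon}$ and $\Xi^{1/2-\varepsilon}$ are not obtained by ``interpolating against the trivial bound''. The obstacle is that after Cauchy--Schwarz one needs $\Vert a_{q'}(\cN+1)^a\xi_\lambda\Vert$ with $a\ge 1$, and neither $\Vert a_{q'}\xi_\lambda\Vert\le\Xi^{1/2}$ nor the Gr\"onwall bound alone suffices. The paper's Lemma~\ref{lem:Xi_halfminusepsilon} supplies the missing input via the iteration $\Vert a_{q'}(\cN+1)^a\xi_\lambda\Vert^2\le\Vert a_{q'}(\cN+1)^{2a}\xi_\lambda\Vert\,\Xi^{1/2}$, repeated $n$ times to yield $\Xi^{(1-2^{-n})/2}$. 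You will need this explicit device rather than an abstract interpolation.
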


To prove this proposition, we need to estimate $ E^{m,1}_{Q_1}(q) $, $ E^{m,2}_{Q_1}(q) $, and $ E^{m,3}_{Q_1}(q) $, which is done in Lemma~\ref{lem:EQ111} and Lemma~\ref{lem:EQ112}. These two lemmas in turn rely on the following lemma.

\begin{lemma}[H\"older estimate using the bootstrap quantity] \label{lem:Xi_halfminusepsilon}
For any $ \varepsilon > 0 $ and $ a \in \N $, there exists some $ C_{a,\varepsilon} > 0 $ such that for all $ \lambda \in [0,1] $ and all $ q \in \Z^3 $ we have
\begin{equation} \label{eq:Xi_halfminusepsilon}
	\Vert a_q (\cN + 1)^a \xi_\lambda \Vert
	\le C_{a,\varepsilon} \Xi^{\half-\varepsilon} \;.
\end{equation}
\end{lemma}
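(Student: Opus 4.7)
The plan is to reduce the estimate to the trivial bound $\Vert a_q \xi_\lambda \Vert^2 \le \Xi$ by paying an arbitrarily small power of $\Xi$ through spectral interpolation against the Gr\"onwall bound of Lemma~\ref{lem:gronNest}.

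First, I would commute $(\cN+1)^a$ through $a_q$. Since $[\cN, a_q] = -a_q$, one checks $a_q (\cN+1) = (\cN+2) a_q$ and hence $a_q(\cN+1)^a = (\cN+2)^a a_q$, so that
\begin{equation*}
    \Vert a_q (\cN+1)^a \xi_\lambda\Vert = \Vert (\cN+2)^a a_q \xi_\lambda \Vert.
\end{equation*}

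Next, I would apply a spectral interpolation to the right-hand side. Set $\psi_{q,\lambda} \coloneq a_q \xi_\lambda$ and let $d\mu$ be the spectral measure of $\cN+2$ with respect to $\psi_{q,\lambda}$. For any integer $b > a$, H\"older's inequality with exponents $b/a$ and $b/(b-a)$ yields
\begin{equation*}
    \int t^{2a}\, d\mu(t) \le \Bigl(\int d\mu\Bigr)^{(b-a)/b} \Bigl(\int t^{2b}\, d\mu(t)\Bigr)^{a/b},
\end{equation*}
that is,
\begin{equation*}
    \Vert (\cN+2)^a \psi_{q,\lambda}\Vert \le \Vert \psi_{q,\lambda}\Vert^{(b-a)/b}\, \Vert (\cN+2)^b \psi_{q,\lambda}\Vert^{a/b}.
\end{equation*}

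Then I would control each factor separately. By the definition~\eqref{eq:Xi} of the bootstrap quantity, $\Vert \psi_{q,\lambda}\Vert^2 = \langle \xi_\lambda, a_q^* a_q \xi_\lambda\rangle \le \Xi$. For the high-power factor, using again $(\cN+2)^b a_q = a_q(\cN+1)^b$ together with the bound $\Vert a_q\Vert_{\mathrm{op}} \le 1$ gives
\begin{equation*}
    \Vert (\cN+2)^b \psi_{q,\lambda}\Vert = \Vert a_q (\cN+1)^b \xi_\lambda \Vert \le \Vert (\cN+1)^b \xi_\lambda\Vert,
\end{equation*}
and the Gr\"onwall estimate of Lemma~\ref{lem:gronNest} applied with exponent $2b$ yields $\Vert (\cN+1)^b \xi_\lambda\Vert \le C_b$ uniformly in $\lambda\in[0,1]$.

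Finally I would choose $b = b(a,\varepsilon)$ large enough that $a/b \le 2\varepsilon$. Since $\Xi \le 1$, combining the three ingredients gives
\begin{equation*}
    \Vert a_q(\cN+1)^a \xi_\lambda\Vert \le C_b^{a/b}\, \Xi^{(b-a)/(2b)} \le C_{a,\varepsilon}\, \Xi^{\frac{1}{2}-\varepsilon},
\end{equation*}
which is the desired bound. There is no substantive obstacle here; the only point requiring care is that the same commutation $a_q (\cN+1)^b = (\cN+2)^b a_q$ must be used twice (both for the left-hand side and inside the interpolation tail), and that the Gr\"onwall bound is available for all integer powers of $\cN+1$ uniformly in $\lambda$.
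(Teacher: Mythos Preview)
Your proof is correct and rests on the same two ingredients as the paper's: the bound $\Vert a_q\xi_\lambda\Vert^2\le\Xi$ and the Gr\"onwall control $\Vert(\cN+1)^b\xi_\lambda\Vert\le C_b$. The only difference is in how the interpolation is carried out: the paper iterates the Cauchy--Schwarz step $\Vert a_q(\cN+1)^a\xi_\lambda\Vert^2=\langle\xi_\lambda,(\cN+1)^{2a}a_q^*a_q\xi_\lambda\rangle\le\Vert a_q(\cN+1)^{2a}\xi_\lambda\Vert\,\Xi^{1/2}$ (using $[\cN,a_q^*a_q]=0$), doubling the power $n$ times to reach $\Xi^{(1-2^{-n})/2}$, whereas you commute $a_q$ through and apply H\"older once on the spectral measure. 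Your version is arguably cleaner since it avoids the iteration, but the two arguments are equivalent in content.
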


\begin{proof}
We iteratively apply the following bound, which follows from $ [\cN, a_q^* a_q] = 0 $:
\begin{equation}
	\Vert a_q (\cN + 1)^a \xi_\lambda \Vert^2
	= \eva{\xi_\lambda, (\cN + 1)^{2a} a_q^* a_q \xi_\lambda}
	\le \Vert a_q (\cN + 1)^{2a} \xi_\lambda \Vert \Xi^{\frac 12} \;.
\end{equation}
After $ n $ iterations,
\begin{equation}
	\Vert a_q (\cN + 1)^a \xi_\lambda \Vert
	\le \Vert a_q (\cN + 1)^{2^n a} \xi_\lambda \Vert^{2^{-n}} \Xi^{\half (1-2^{-n})} \;.
\end{equation}
We conclude using $ \Vert a_q \Vert \le 1 $ and Lemma~\ref{lem:gronNest}, and choosing $ n $ large enough.
\end{proof}

\begin{lemma} \label{lem:EQ111}
Let $ \sum_{\ell \in \Zstar} \hat{V}(\ell)^2 < \infty $ and recall definition~\eqref{eq:expandedEQ1} of $ E_{Q_1}^{m,j}(q) $. For $\xi_\lambda = e^{-\lambda S} \Omega$, there exists some $ C > 0 $ such that for all $ \lambda \in [0,1] $, even $ m \in \NNN $, $ m \ge 2 $, and $ q \in B_{\F}^c $,
\begin{align} \label{eq:estEQ111_Coulomb}
	\abs{\eva{\xi_\lambda,\left(E^{m,1}_{Q_1}(q) + E^{m,2}_{Q_1}(q) + \mathrm{h.c.}\right) \xi_\lambda }} 
	&\leq C^m \kF^{-1} \Xi^{\half} e(q)^{-1}
		\norm{ (\NN+1)^2 \xi_\lambda}  \nonumber\\
	&\quad + C^m \kF^{-1} \eva{\xi_\lambda, a_q^* a_q \xi_\lambda}^{\half} e(q)^{-\half} \norm{(\NN+1)^2 \xi_\lambda} \;.
\end{align}
If $ \sum_{\ell \in \Zstar} \hat{V}(\ell) < \infty $, given $ \varepsilon > 0 $, there exists some $ C_\varepsilon > 0 $ such that
\begin{align} \label{eq:estEQ111}
	\abs{\eva{\xi_\lambda,\left(E^{m,1}_{Q_1}(q) + E^{m,2}_{Q_1}(q) + \mathrm{h.c.}\right) \xi_\lambda }} 
	&\leq C_\varepsilon C^m \left(
		\kF^{-\frac{3}{2}} \Xi^\half
		+ \kF^{-1}\Xi^{1-\varepsilon} \right) e(q)^{-1}
		\norm{ (\NN+1)^2 \xi_\lambda} \nonumber\\
	&\quad + C_\varepsilon C^m \kF^{-1} \Xi^{\half - \varepsilon} \eva{\xi_\lambda, a_q^* a_q \xi_\lambda}^{\half} e(q)^{-\half} \;.
\end{align}
\end{lemma}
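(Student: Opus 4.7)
The first step of the plan is to make the $q$-dependence of $\Theta^m_K(P^q)(\ell)_{r,s}$ explicit. Since $\{K,\cdot\}$ iterates as $\Theta^m_K(A) = \sum_{j=0}^m \binom{m}{j} K^j A K^{m-j}$ (a simple induction), and $P^q(\ell)$ has the single nonzero entry $P^q(\ell)_{q,q}=1$, we obtain
\begin{equation*}
    \Theta^m_K(P^q)(\ell)_{r,s} = \sum_{j=0}^m \binom{m}{j}\, (K(\ell)^j)_{r,q}\, (K(\ell)^{m-j})_{q,s}\;,
\end{equation*}
which is nonzero only if $q\in L_\ell$ and which produces explicit Kronecker deltas in the boundary cases $j=0$ and $j=m$. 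The binomial coefficients sum to $2^m$, harmlessly absorbed into the asserted $C^m$.

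Next, I would substitute this expansion into each $E^{m,j}_{Q_1}(q)$ and apply Cauchy--Schwarz on the vacuum expectation value. A natural split is to move the external annihilation operator together with one pair operator to the bra, for instance
\begin{equation*}
    \big|\eva{\xi_\lambda, a^*_{r-\ell_1} b^*_s(\ell) b^*_{-s_1}(-\ell_1) a_{r-\ell}\xi_\lambda}\big|
    \leq \|b_s(\ell) a_{r-\ell_1}\xi_\lambda\|\, \|b^*_{-s_1}(-\ell_1) a_{r-\ell}\xi_\lambda\|\;,
\end{equation*}
and analogously for $E^{m,2}_{Q_1}$. After summing over $r,s,s_1$, I would invoke the pair operator estimate $\|b^*_s(\ell)\Phi\|\leq \|(\NN+1)^{1/2}\Phi\|$ together with the Hilbert--Schmidt or $\|\cdot\|_{\max,1}$ norms of $K(\ell)$ from Lemma~\ref{lem:normsk}, reducing everything to moments $\|(\NN+1)^k\xi_\lambda\|$ controlled by the Gr\"onwall-type estimate of Lemma~\ref{lem:gronNest}.

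The two qualitatively different summands in the conclusion arise from two distinct mechanisms. For the bulk indices $1\leq j \leq m-1$ and the boundary case $j=m$, the factor $e(q)^{-1}$ is extracted from the $K$-power touching $q$ via $\sum_r |(K(\ell)^j)_{r,q}|^2 \leq (C\hat V(\ell))^{2j} \kF^{-1} e(q)^{-1}$ of \eqref{eq:e(q)_extraction_bounds}; an extra $\Xi^{1/2}$ appears when the Cauchy--Schwarz leaves behind an $a^*_p a_p$ whose expectation is bounded by $\Xi$, producing the first of the two asserted summands. In the boundary case $j=0$ applied to $E^{m,2}_{Q_1}$, the delta $\delta_{r+\ell,q}$ collapses the $r$-sum to $r=q-\ell$ and thereby replaces $a_{r+\ell}$ by the single annihilator $a_q$; a Cauchy--Schwarz on this factor produces the factor $\eva{\xi_\lambda, a^*_q a_q\xi_\lambda}^{1/2}$, while the residual sum over $s$ against $(K(\ell)^m)_{q,s}$ yields the partner $e(q)^{-1/2}$ by the $\ell^2$-estimate in \eqref{eq:e(q)_extraction_bounds}.

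The improvement \eqref{eq:estEQ111} under the stronger hypothesis $\sum_\ell \hat V(\ell)<\infty$ is obtained by redistributing one factor of $\hat V(\ell)$ via $\|K(\ell)\|_{\max,1}\leq C\hat V(\ell)$ instead of the weaker $\|K(\ell)\|_{\HS}$, which trades an $\ell^2$-sum for an $\ell^1$-sum and improves the $\kF$-scaling by a half-power, and by invoking Lemma~\ref{lem:Xi_halfminusepsilon} to upgrade each $\|(\NN+1)^a a_p \xi_\lambda\|$ to $C_\varepsilon \Xi^{1/2-\varepsilon}$ wherever an annihilator is available next to the excitation vector. The main obstacle, and where the bulk of the work lies, is the bookkeeping of the various Cauchy--Schwarz splits: one has to choose the split so that every $a_p$ produced by the pair operators is paired with an adjacent $a^*_p$ of the same momentum (so that the corresponding sum closes into $\NN$ or into $\Xi$), so that the $e(q)^{-1}$ factor is extracted exactly once, and so that the binomial expansion of $\Theta^m_K(P^q)$ does not secretly generate a factor $m!$ which would overwhelm the $1/m!$ available in the Duhamel integrand~\eqref{eq:errEm}.
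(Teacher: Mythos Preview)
Your plan is essentially the paper's proof: the same binomial expansion of $\Theta^m_K(P^q)$, the same case split into the two boundary indices and the bulk $1\le j\le m-1$, the same extraction of $e(q)^{-1}$ via \eqref{eq:e(q)_extraction_bounds}, and the same upgrade under $\hat V\in\ell^1$ via Lemma~\ref{lem:Xi_halfminusepsilon}. Your indexing convention is swapped relative to the paper's (your $j$ is their $m-j$), but the content is identical.

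One small point your sketch glosses over: the boundary case that collapses $r$ to $q$ in $E^{m,1}_{Q_1}$ (your $j=0$, their $j=m$) is actually the one that forces the weakest scaling $\kF^{-1}\Xi^{1/2}e(q)^{-1}$ in \eqref{eq:estEQ111_Coulomb}, not the bulk or the other boundary; there one has $a_{r-\ell}=a_{q-\ell}$, and after Cauchy--Schwarz only a single $\Xi^{1/2}$ can be extracted without an accompanying $\kF^{-1/2}$ from a $\|K\|_{\max,2}$. You should make sure your split handles that case as well, since you only discussed the analogous boundary for $E^{m,2}_{Q_1}$. Apart from this, the bookkeeping you anticipate (inserting $(\NN+1)^{\pm k}$ so that the remaining momentum sums close into $\NN$ in the right order) is exactly what the paper does, and your warning about not generating an unwanted $m!$ is well taken---the paper avoids it precisely because the binomial sum contributes only $2^m$.
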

\begin{proof}
We start with estimating $ E^{m,1}_{Q_1}(q) $. Splitting the anticommutator in $ E^{m,1}_{Q_1}(q) $ as
\begin{equation} \label{eq:q-q}
	\Theta^m_K(P^q)(\ell)
	= \sum_{j=0}^m {{m}\choose j} K(\ell)^{m-j} P^q(\ell) K(\ell)^{j} \;,
\end{equation}
with $ K(\ell)^0 = \mathds{1} $, we obtain
\begin{equation} \label{eq:EQ1111}
\begin{aligned}
	& \abs{\eva{\xi_\lambda,\left(E^{m,1}_{Q_1}(q) +  \mathrm{h.c.}\right) \xi_\lambda }}
	\le 4 \sum_{j=0}^m {{m}\choose j} \sum_{\ell,\ell_1  \in \Zstar}\!\! \mathds{1}_{L_\ell}(q) | \I_j(\ell, \ell_1)| \;,
	\end{aligned}
\end{equation}
where
\begin{equation}
\begin{aligned}
& 	\I_j(\ell, \ell_1)
	\coloneq \sum_{\substack{r\in L_{\ell} \cap L_{\ell_1}\\ s \in L_{\ell},s_1\in L_{\ell_1}}}
		\eva{\xi_\lambda, K^{m-j}(\ell)_{r,q} K^{j}(\ell)_{q,s} K(\ell_1)_{r,s_1} a^*_{r-\ell_1} b^*_{s}(\ell) b^*_{-s_1}(-\ell_1) a_{r-\ell} \xi_\lambda} \;. \\
\end{aligned}
\end{equation}
We need three different strategies for $ j = 0 $, for $ 1 \le j \le m-1 $, and for $ j = m $. The general strategy is to apply the Cauchy--Schwarz inequality, estimate the $ K $-matrices by Lemma~\ref{lem:normsk} and then either eliminate annihilation operators by $ \norm{a_p} \le 1 $ or bound them by $ \sum_{p \in \Z^3} \norm{a_p \Psi}^2 = \Vert \cN^{\half} \Psi \Vert^2 $. In the first case $ j = 0 $, we start by
\begin{align}
	&\sum_{\ell,\ell_1 \in \Zstar} \mathds{1}_{L_\ell}(q) |\I_0(\ell, \ell_1)| \nonumber\\
	&\le \sum_{\ell,\ell_1 \in \Zstar} \mathds{1}_{L_\ell}(q) \times \nonumber\\
	&\quad \times \sum\limits_{r \in L_\ell \cap L_{\ell_1}} \abs{\eva{ \sum\limits_{s_1 \in L_{\ell_1}} K(\ell_1)_{r,s_1} b_{-s_1}(-\ell_1) b_{q}(\ell) a_{r-\ell_1} (\NN+1)^{\frac 32} (\NN+1)^{-\frac 32} \xi_\lambda, K^{m}(\ell)_{r,q} a_{r-\ell} \xi_\lambda }}\nonumber\\
	&\leq \sum_{\ell,\ell_1 \in \Zstar} \mathds{1}_{L_\ell}(q) \Bigg( \sum_{r,s_1 \in L_{\ell_1}} |K(\ell_1)_{r,s_1}|^2
		\sum_{s_1' \in L_{\ell_1}} \Bigg\Vert b_{-s_1'}(-\ell_1) b_{q}(\ell) a_{r-\ell_1} (\NN+1)^{-\frac 32}\xi_\lambda \Bigg\Vert^2\Bigg)^\half \times\nonumber\\
	&\quad \times \Bigg( \sum\limits_{r \in L_\ell} |K^{m}(\ell)_{r,q}|^2 \norm{ a_{r-\ell} (\NN+5)^{\frac 32}\xi_\lambda }^2\Bigg)^\half \nonumber\\
	&\leq \sum_{\ell,\ell_1 \in \Zstar} \Bigg( \norm{K(\ell_1)}_{\max,2}^2 \sum\limits_{r, s_1' \in \Z^3} \norm{ a_{-s_1'} a_{-s_1' + \ell_1} a_q a_{q-\ell} a_{r-\ell_1} (\NN+1)^{-\frac 32}\xi_\lambda}^2\Bigg)^\half \times\nonumber\\
	& \quad \times (C \hat{V}(\ell))^m \kF^{-1} e(q)^{-1} \norm{ \NN^\half(\NN+5)^{\frac 32}\xi_\lambda } \nonumber\\
	&\leq \sum_{\ell \in \Zstar} \kF^{-\half} \Bigg( \sum_{\ell_1 \in \Zstar} \hat{V}(\ell_1)^2 \Bigg)^{\half}
		\Bigg( \sum_{r,\ell_1,s_1' \in \Z^3}
		\norm{ a_{r-\ell_1} a_{-s_1'+\ell_1} a_{-s_1'} a_q  (\NN+1)^{-\frac 32}\xi_\lambda}^2\Bigg)^\half \times\nonumber\\
	& \quad \times (C \hat{V}(\ell))^m \kF^{-1} e(q)^{-1} \norm{ (\NN+5)^2 \xi_\lambda } \nonumber\\
	&\leq C^m \kF^{-\frac 32} e(q)^{-1}
		\norm{ a_q \xi_\lambda} \norm{(\NN+5)^2 \xi_\lambda}
	\leq C^m \kF^{-\frac 32} e(q)^{-1} \Xi^\half
	 	\norm { (\NN+1)^2 \xi_\lambda } \;.\label{eq:estEQ1111}
\end{align}
Note that in the second last line, we were able to use $ \sum_\ell \hat{V}(\ell)^m < \infty $, since $ m \ge 2 $.
The order in which we estimate annihilation operators by $ \cN^\half $ matters: We first took care of the sum over $r$ and the operator $a_{r-\ell_1}$, then the sum over $\ell_1$ with the second annihilation operator, and finally the sum over $ s_1' $.
Moreover, we absorbed a constant $ C $ into $ C^m $, and in the last line, we used $ (\cN+5)^m \le C (\cN+1)^m $ and the definition~\eqref{eq:Xi} of the bootstrap quantity $ \Xi $.

In the second case $ 1 \le j \le m-1 $, we have an operator $ a_{r-\ell} $ instead of $ a_q $ and obtain the bootstrap quantity $ \Xi $:
\begin{align}
	&\sum_{\ell,\ell_1 \in \Zstar} \mathds{1}_{L_\ell}(q) |\I_j(\ell, \ell_1)| \nonumber\\
	&\leq \sum_{\ell,\ell_1 \in \Zstar} \Bigg( \sum\limits_{s \in L_\ell} \abs{K^j(\ell)_{q,s}}^2
		\sum\limits_{r, s_1 \in L_{\ell_1}} \abs{K(\ell_1)_{r,s_1}}^2
		\sum\limits_{s' \in L_\ell} \sum\limits_{s_1' \in L_{\ell_1}} \norm{a_{r-\ell_1} b_{s'}(\ell) b_{-s_1'}(-\ell_1) \xi_\lambda}^2 \Bigg)^\half \times \nonumber\\
	&\quad \times \mathds{1}_{L_\ell}(q) \Bigg( \sum\limits_{r\in L_{\ell}} \abs{K^{m-j}(\ell)_{r,q}}^2 \norm{a_{r-\ell} \xi_\lambda }^2 \Bigg)^\half\nonumber\\
	&\leq \kF^{-\frac 32} e(q)^{-1}
		\sum_{\ell \in \Zstar} (C \hat{V}(\ell))^m
		\Bigg( \sum_{\ell_1 \in \Zstar} \norm{K(\ell_1)}_{\max,2}^2 \Bigg)^{\half} 
		\Bigg( \sum\limits_{r,\ell_1,s_1',s' \in \Z^3} \norm{a_{r-\ell_1} a_{-s_1'+\ell_1} a_{-s_1'} a_{s'} \xi_\lambda}^2 \Bigg)^\half
		\Xi^\half \nonumber\\
	&\leq C^m \kF^{-\frac 32} e(q)^{-1}
		\norm{(\NN+1)^2 \xi_\lambda} \Xi^\half \;. \label{eq:estEQ1112}
\end{align}
For the third case $ j = m $, we proceed similarly
\begin{align}
	& \sum_{\ell,\ell_1 \in \Zstar} \mathds{1}_{L_\ell}(q) |\I_m(\ell, \ell_1)| \nonumber\\
	&\leq \sum_{\ell,\ell_1 \in \Zstar} \mathds{1}_{L_\ell \cap L_{\ell_1}}(q)
		\Bigg\Vert \sum\limits_{s\in L_{\ell}, s_1 \in L_{\ell_1}} K^m(\ell)_{q,s}K(\ell_1)_{q,s_1} b_{-s_1}(-\ell_1) b_{s}(\ell) a_{q-\ell_1}\xi_\lambda \Bigg\Vert
		\norm{ a_{q-\ell}\xi_\lambda }\nonumber\\
	&\leq \sum_{\ell \in \Zstar} \mathds{1}_{L_\ell}(q) \Bigg(\sum\limits_{s \in L_{\ell}} \abs{K^m(\ell)_{q,s}}^2\Bigg)^\half 
		\Bigg(\sum_{\ell_1 \in \Zstar} \mathds{1}_{L_{\ell_1}}(q) \sum\limits_{s_1 \in L_{\ell_1}} \abs{K(\ell_1)_{q,s_1}}^2\Bigg)^\half \times \nonumber\\
	&\quad \times \Bigg( \sum_{\ell_1, s_1, s \in \Z^3} \norm{ a_{-s_1+\ell_1} a_{-s_1} a_s \xi_\lambda}^2 \Bigg)^{\half} \Xi^\half \nonumber\\
	&\leq C^m \kF^{-1} e(q)^{-1} \norm{ (\NN+1)^{\frac 32} \xi_\lambda}\Xi^\half \;. \label{eq:estEQ1113_Coulomb}
\end{align}
As later, $ \Xi \sim \kF^{-1} $, the r.~h.~s. here is only of order $ \sim \kF^{-\frac 32} $, in contrast to~\eqref{eq:estEQ1111} and~\eqref{eq:estEQ1112}, where it is $ \sim \kF^{-2} $. Nevertheless, for $ \sum_{\ell_1} \hat{V}(\ell_1) < \infty $, we can achieve a stronger estimate of order $ \sim \kF^{-2 + \varepsilon} $, using Lemma~\ref{lem:Xi_halfminusepsilon}:
\begin{align}
	|\I_m(\ell, \ell_1)|
	&\leq \Bigg(\sum\limits_{s \in L_{\ell}} \abs{K^m(\ell)_{q,s}}^2\Bigg)^\half \Bigg(\sum\limits_{s_1 \in L_{\ell_1}} \abs{K(\ell_1)_{q,s_1}}^2\Bigg)^\half \norm{ a_{q-\ell_1} (\NN+1)\xi_\lambda} \norm{ a_{q-\ell}\xi_\lambda }\nonumber\\
	&\leq (C \hat{V}(\ell))^m \hat{V}(\ell_1) \kF^{-1} e(q)^{-1} \sup_{q' \in \Z^3}\norm{ a_{q'} (\NN+1) \xi_\lambda}\Xi^\half\nonumber\\
	&\leq C_\varepsilon (C \hat{V}(\ell))^m \hat{V}(\ell_1) \kF^{-1} e(q)^{-1} \Xi^{1-\varepsilon} \;. \label{eq:estEQ1113}
\end{align}
Summing up the estimates and using $\sum_{j=1}^{m-1} {{m}\choose j} \le C^m $ concludes the proof for $ E^{m,1}_{Q_1}(q) $.

The estimate for $ E^{m,2}_{Q_1}(q) $ (compare~\eqref{eq:expandedEQ1}) is analogous, except for $ j = m $: Splitting $ E^{m,2}_{Q_1}(q) $ as in~\eqref{eq:EQ1111}, and estimating the analogous $ \I_m $-term via~\eqref{eq:estEQ1113_Coulomb} would result in a factor of $ e(q)^{-\half} e(q-\ell+\ell_1)^{\half} $ instead of $ e(q)^{-1} $. However, $ a_{q-\ell} $ gets replaced by $ a_q $, so we can recover a factor $ \eva{\xi_\lambda, a_q^* a_q \xi_\lambda}^{\half} $, which we expect to eventually scale like $ \sim n^{\RPA}(q)^{\half} \sim \kF^{-\half} e(q)^{-\half} $, providing the missing factor of $ e(q)^{-\half} $:
\begin{align}
		&\sum_{\ell,\ell_1 \in \Zstar} \mathds{1}_{L_\ell}(q) |\I_m(\ell, \ell_1)| \nonumber\\
		&\leq \sum_{\ell,\ell_1 \in \Zstar} \mathds{1}_{L_\ell \cap (L_{\ell_1}+\ell-\ell_1) }(q)
		\Bigg(\sum\limits_{s \in L_{\ell}} \abs{K^m(\ell)_{q,s}}^2\Bigg)^\half
		\Bigg(\sum\limits_{s_1 \in L_{\ell_1}} \abs{K(\ell_1)_{q-\ell+\ell_1,s_1}}^2\Bigg)^\half \times \nonumber\\
		&\quad \times \Bigg( \sum_{s,s_1 \in \Z^3} \norm{ a_{-s_1} a_{-s_1+\ell_1} a_s a_{s-\ell} a_{q-\ell+\ell_1} \xi_\lambda}^2
		\norm{ a_q \xi_\lambda }^2 \Bigg)^{\half} \nonumber\\
		&\leq \kF^{-1} e(q)^{-\frac 12}
		\Bigg(\sum_{\ell \in \Zstar} (C \hat{V}(\ell))^{2m} \Bigg)^\half \!
		\Bigg(\sum_{\ell_1 \in \Zstar} \hat{V}(\ell_1)^2 \Bigg)^\half \!
		\Bigg( \sum_{\ell, s, \ell_1, s_1 \in \Z^3}  \!\!\norm{ a_{s-\ell} a_s a_{-s_1+\ell_1} a_{-s_1} \xi_\lambda}^2\Bigg)^{\half} \!
		\norm{ a_q \xi_\lambda }  \nonumber\\
		&\leq C^m \kF^{-1} e(q)^{-\frac 12}
			\norm{ (\NN+1)^2 \xi_\lambda }
			\eva{\xi_{\lambda},a_q^* a_q\xi_{\lambda}}^{\half} \;. \label{eq:estEQ1113_Coulomb_bis}
\end{align}
In case $ \sum_\ell \hat{V}(\ell) < \infty $, with Lemma~\ref{lem:Xi_halfminusepsilon}, we may again obtain a stronger estimate, while still extracting $ \eva{\xi_\lambda, a_q^* a_q \xi_\lambda}^{\half} $:
\begin{align}
		|\I_m(\ell, \ell_1)|
		&\leq \Bigg(\sum\limits_{s \in L_{\ell}} \abs{K^m(\ell)_{q,s}}^2\Bigg)^\half
		\Bigg(\sum\limits_{s_1 \in L_{\ell_1}} \abs{K(\ell_1)_{q-\ell+\ell_1,s_1}}^2\Bigg)^\half
		\norm{ a_{q-\ell+\ell_1} (\NN+1) \xi_\lambda}
		\norm{ a_q \xi_\lambda }\nonumber\\
		&\leq (C \hat{V}(\ell))^m \hat{V}(\ell_1) \kF^{-1} e(q)^{-\half}
		\sup_{q' \in \Z^3} \norm{ a_{q'} (\NN+1) \xi_\lambda }
		\eva{\xi_{\lambda},a_q^* a_q\xi_{\lambda}}^{\half} \nonumber\\
		&\leq C_\varepsilon (C \hat{V}(\ell))^m
		\hat{V}(\ell_1)
		\kF^{-1} e(q)^{-\half} \Xi^{\half-\varepsilon} \eva{\xi_{\lambda},a_q^* a_q\xi_{\lambda}}^{\half} \;.  		\label{eq:estEQ1113_bis}
\end{align}
This concludes the proof.
\end{proof}

\begin{lemma} \label{lem:EQ112}
Let $ \sum_{\ell \in \Zstar} \hat{V}(\ell)^2 < \infty $. For $\xi_\lambda = e^{-\lambda S} \Omega$, given $ \varepsilon > 0 $, there exist $ C, C_\varepsilon > 0 $ such that for all $ \lambda \in [0,1] $, even $ m \in \NNN $, $ m \ge 2 $, and $ q \in B_{\F}^c $,
\begin{align} \label{eq:estEQ112_Coulomb}
	\abs{\eva{\xi_\lambda,\left(E^{m,3}_{Q_1}(q) + \mathrm{h.c.}\right) \xi_\lambda }}
	\leq C_\varepsilon C^m \left( \kF^{-\frac 32 + \varepsilon}
		+ \kF^{-1 + \varepsilon} \Xi^\half \right)
		e(q)^{-1}
		\norm{(\NN+1) \xi_\lambda } \;.
\end{align}
If $ \sum_{\ell \in \Zstar} \hat{V}(\ell) < \infty $, then
\begin{align} \label{eq:estEQ112}
	\abs{\eva{\xi_\lambda,\left(E^{m,3}_{Q_1}(q) + \mathrm{h.c.}\right) \xi_\lambda }}
	\leq C^m \kF^{-\frac{3}{2}} \Xi^{\half} e(q)^{-1}
		\norm{(\NN+1)^\half \xi_\lambda } \;.
\end{align}
\end{lemma}

\begin{proof}
As in the proof of Lemma~\ref{lem:EQ111}, we split
\begin{equation} \label{eq:EQ1121}
\begin{aligned}
	& \abs{\eva{\xi_\lambda,\left(E^{m,3}_{Q_1}(q) +  \mathrm{h.c.}\right) \xi_\lambda }}
	\le 4 \sum_{j=0}^m {{m}\choose j} \sum_{\ell,\ell_1 \in \Zstar}\!\! \mathds{1}_{L_\ell}(q) |\I_j(\ell, \ell_1)| \;,
	\end{aligned}
\end{equation}
where
\begin{equation}
\begin{aligned}
	& \I_j(\ell, \ell_1)
	\coloneq \sum_{\substack{r\in L_{\ell} \cap L_{\ell_1} \cap (-L_{\ell_1}+\ell+\ell_1)\\ s \in L_{\ell}}}
		\eva{\xi_\lambda, K^{m-j}(\ell)_{r,q} K^{j}(\ell)_{q,s}K(\ell_1)_{r,-r+\ell+\ell_1} a_{r-\ell-\ell_1} a_{r-\ell_1} b_{s}(\ell) \xi_\lambda} \;. \\
\end{aligned}
\end{equation}
Again we need three slightly different strategies for $ j = 0 $, for $ 1 \le j \le m-1 $, and for $ j = m $. For the first case $ j = 0 $, we insert $1 = (\NN+1)^{-\half}(\NN+1)^{\half}$, followed by the Cauchy--Schwarz inequality. Then, we estimate the $ K $-matrices by Lemma~\ref{lem:normsk} and use $ \norm{a_p} \le 1 $ and $ \sum_{p \in \Z^3} \norm{a_p \Psi}^2 = \Vert \cN^\half \Psi \Vert^2 $ so that
\begin{align}
	&\sum_{\ell,\ell_1 \in \Zstar} \mathds{1}_{L_\ell}(q) |\I_0(\ell, \ell_1)| \nonumber\\
	&\leq \sum_{\ell,\ell_1 \in \Zstar} \mathds{1}_{L_\ell}(q) \sum\limits_{\substack{r\in L_{\ell} \cap L_{\ell_1} \cap (-L_{\ell_1}+\ell+\ell_1)}}
		\norm{ (\NN+5)^{\half} \xi_\lambda} \times \nonumber\\
	&\quad \times \norm{ K^m(\ell )_{r,q} K(\ell_1)_{r,-r+\ell+\ell_1} a_{r-\ell-\ell_1} a_{r-\ell_1} b_{q}(\ell) (\NN+1)^{-\half} \xi_\lambda }\nonumber\\
	 &\leq \sum_{\ell \in \Zstar} (C \hat{V}(\ell))^m \kF^{-1} e(q)^{-1}
	 	\norm{ (\NN+5)^{\half} \xi_\lambda}
	 	\sum_{r \in L_\ell} \Bigg( \sum_{\ell_1 \in \Zstar} \mathds{1}_{L_{\ell_1} \cap (-L_{\ell_1}+\ell+\ell_1)}(r) |K(\ell_1)_{r,-r+\ell+\ell_1}|^2 \Bigg)^{\half} \times \nonumber\\
	 &\quad \times \Bigg( \sum\limits_{\ell_1 \in \Z^3} \norm{ a_{r-\ell_1} a_q (\NN+1)^{-\half} \xi_\lambda }^2 \Bigg)^\half \nonumber\\
	 &\leq \sum_{\ell \in \Zstar} (C \hat{V}(\ell))^m \kF^{-1} e(q)^{-1}
	 	\norm{ (\NN+5)^{\half} \xi_\lambda}
	 	\sum_{r \in L_\ell} \Bigg( \sum_{\ell_1 \in \Zstar} \frac{\mathds{1}_{L_{\ell_1} \cap (L_{-\ell_1}+\ell+\ell_1)}(r) \hat{V}(\ell_1)^2}{(\lambda_{\ell_1,r} + \lambda_{\ell_1,-r+\ell+\ell_1})^2} \kF^{-2} \Bigg)^{\half} \norm{ a_q \xi_\lambda } \nonumber\\
	 &\leq \sum_{\ell \in \Zstar} (C \hat{V}(\ell))^m \kF^{-1} e(q)^{-1}
	 	\norm{ (\NN+5)^{\half} \xi_\lambda}
	 	\sum_{r \in L_\ell} e(r)^{-1} \kF^{-1} \Bigg( \sum_{\ell_1 \in \Zstar} \hat{V}(\ell_1)^2 \Bigg)^{\half} \Xi^\half \;,
\end{align}
where we used $ \lambda_{\ell_1,r} \ge \frac{1}{2} e(r) $ and $ \norm{ a_q \xi_\lambda} \le \Xi^\half $, compare~\eqref{eq:Xi}. Then, with~\eqref{eq:lambdainverse}
\begin{equation}\label{eq:estEQ1121_Coulomb}
	\sum_{\ell,\ell_1 \in \Zstar} \mathds{1}_{L_\ell}(q) |\I_0(\ell, \ell_1)|
	\leq C_\varepsilon C^m \kF^{-1+\varepsilon} e(q)^{-1}
	 	\norm{ (\NN+5)^{\half} \xi_\lambda}
	 	\Xi^{\half}	\;.
\end{equation}
As we will show that $ \Xi \sim \kF^{-1} $, this bound will later be of order $ \sim \kF^{-\frac 32 + \varepsilon} $. For $ \sum_{\ell_1} \hat{V}(\ell_1) < \infty $, we can even achieve an estimate of order $ \sim \kF^{-2} $ via
\begin{align}
	&|\I_0(\ell, \ell_1)| \nonumber\\
	&\leq \sum\limits_{r\in L_{\ell} \cap L_{\ell_1}\cap (-L_{\ell_1}+\ell+\ell_1)}
		\norm{ (\NN+5)^{\half} \xi_\lambda} 
		\norm{ K^m(\ell )_{r,q} K(\ell_1)_{r,-r+\ell+\ell_1} a_{r-\ell-\ell_1} a_{r-\ell_1} b_{q}(\ell) (\NN+1)^{-\half} \xi_\lambda }\nonumber\\
	 &\leq (C \hat{V}(\ell))^m \hat{V}(\ell_1) \kF^{-1} e(q)^{-1}
	 	\norm{ (\NN+5)^{\half} \xi_\lambda} \norm{K(\ell_1) }_{\max,2}
	 	\Bigg( \sum\limits_{r\in \Z^3} \norm{ a_{r-\ell_1} a_q (\NN+1)^{-\half} \xi_\lambda }^2 \Bigg)^\half \nonumber\\
	 &\leq (C \hat{V}(\ell))^m \hat{V}(\ell_1)
	 	\kF^{-\frac 32} e(q)^{-1}
	 	\norm{(\NN+5)^{\half} \xi_\lambda}
	 	\Xi^{\half} \;.
\label{eq:estEQ1121}
\end{align}
The estimate for the second case $ 1 \le j \le m-1 $ follows a similar strategy, using $ \Vert \xi_\lambda \Vert = 1 $, $ \lambda_{\ell_1,r} \ge \frac{1}{2} e(r) $ and~\eqref{eq:lambdainverse}:
\begin{align}
	&\sum_{\ell,\ell_1 \in \Zstar} \mathds{1}_{L_\ell}(q) |\I_j(\ell, \ell_1)| \nonumber\\
	&\leq \norm{ \xi_\lambda} \sum_{\ell,\ell_1 \in \Zstar} \mathds{1}_{L_\ell}(q)
		\sum\limits_{\substack{r\in L_{\ell} \cap L_{\ell_1}  \cap (-L_{\ell_1} + \ell + \ell_1)\\s\in L_{\ell}}}
		\norm{K^{m-j}(\ell)_{r,q} K^j(\ell)_{q,s} K(\ell_1)_{r,-r+\ell+\ell_1} a_{r-\ell-\ell_1} a_{r-\ell_1} b_{s}(\ell) \xi_\lambda }\nonumber\\
	&\leq \sum_{\ell \in \Zstar}
		(C \hat{V}(\ell))^{m-j} \kF^{-1} e(q)^{-1}
		\sum\limits_{r \in L_\ell} \Bigg(\sum_{\ell_1 \in \Zstar} \mathds{1}_{L_{\ell_1} \cap (-L_{\ell_1}+\ell+\ell_1)}(r) \abs{ K(\ell_1)_{r,-r+\ell+\ell_1} }^2\Bigg)^\half \times\nonumber\\ 
	&\quad \times \mathds{1}_{L_\ell}(q) \sum\limits_{s\in L_{\ell}} \Bigg( \sum_{\ell_1 \in \Z^3} \norm{K^{j}(\ell)_{q,s} a_{r-\ell_1} b_{s}(\ell) \xi_\lambda }^2 \Bigg)^\half \nonumber\\
	&\leq \sum_{\ell \in \Zstar}
		(C \hat{V}(\ell))^{m-j} \kF^{-2} e(q)^{-1}
		\sum\limits_{r \in L_\ell} e(r)^{-1} \Bigg(\sum_{\ell_1 \in \Zstar} \hat{V}(\ell_1)^2 \Bigg)^\half
		\mathds{1}_{L_\ell}(q) \sum\limits_{s\in L_{\ell}} \norm{K^{j}(\ell)_{q,s} a_s (\NN+1)^{\half} \xi_\lambda } \nonumber\\
	&\leq C_\varepsilon \sum_{\ell \in \Zstar}
		(C \hat{V}(\ell))^{m-j} \kF^{-1+\varepsilon} e(q)^{-1}
		\norm{K^j(\ell)}_{\max,2}
		\Bigg( \sum\limits_{s\in L_{\ell}} \norm{a_s (\NN+1)^{\half} \xi_\lambda }^2 \Bigg)^{\half} \nonumber\\
	&\leq C_\varepsilon C^m \kF^{-\frac 32 + \varepsilon} e(q)^{-1} \norm{(\NN+1) \xi_\lambda } \;. \label{eq:estEQ1122_Coulomb}
\end{align}
Again, for $ \sum_{\ell \in \Zstar} \hat{V}(\ell) $, we get a simpler and stronger estimate:
\begin{align}
	&|\I_j(\ell, \ell_1)| \nonumber\\
	&\leq \norm{ (\NN+5)^{\half} \xi_\lambda}
		\sum_{\substack{r\in L_{\ell} \cap L_{\ell_1} \cap (-L_{\ell_1} + \ell + \ell_1)\\s\in L_{\ell}}}
		\norm{K^{m-j}(\ell)_{r,q} K^j(\ell)_{q,s} K(\ell_1)_{r,-r+\ell+\ell_1} a_{r-\ell_1} a_s (\NN+1)^{-\half} \xi_\lambda }\nonumber\\
	&\leq \norm{ (\NN+5)^{\half} \xi_\lambda} 
		(C \hat{V}(\ell))^{m-j} \kF^{-1} e(q)^{-1}
		\Bigg(\sum_{r\in L_{\ell_1} \cap (-L_{\ell_1} + \ell + \ell_1)}\abs{ K(\ell_1)_{r,-r+\ell+\ell_1} }^2\Bigg)^\half \times \nonumber\\
	&\quad \times \sum_{s\in L_{\ell}} \Bigg( \sum_{r \in \Z^3}\norm{K^{j}(\ell)_{q,s} a_{r-\ell_1} a_s (\NN+1)^{-\half} \xi_\lambda }^2 \Bigg)^\half \nonumber\\
	&\leq \norm{ (\NN+5)^{\half} \xi_\lambda}
		(C \hat{V}(\ell))^{m-j} \kF^{-1} e(q)^{-1}
		\norm{K(\ell_1)}_{\max,2}
		\sum_{s\in L_{\ell}}\abs{K^{j}(\ell)_{q,s}}
		\norm{a_s \xi_\lambda }		
	\nonumber\\
	&\leq \norm{(\NN+5)^\half \xi_\lambda }
		(C \hat{V}(\ell))^{m-j} \kF^{-1} e(q)^{-1}
		\norm{ K(\ell_1) }_{\max,2}
		\norm{ K^{j}(\ell)}_{\mathrm{max,1}} \Xi^\half \nonumber \\
	&\leq \norm{(\NN+5)^\half \xi_\lambda }
		(C \hat{V}(\ell))^m
		\hat{V}(\ell_1)
		\kF^{-\frac 32} e(q)^{-1} \Xi^\half \;.
\end{align}
Finally, for the case $ j = m $,
\begin{align}
	&\sum_{\ell,\ell_1 \in \Zstar} \mathds{1}_{L_\ell}(q) |\I_m(\ell, \ell_1)| \nonumber\\
	&\leq \sum_{\ell,\ell_1 \in \Zstar} \mathds{1}_{L_\ell \cap L_{\ell_1} \cap (-L_{\ell_1} + \ell + \ell_1)}(q) \norm{\xi_\lambda}
		\sum\limits_{s \in L_{\ell}}
		\norm{ K^m(\ell)_{q,s} K(\ell_1)_{q,-q+\ell+\ell_1} a_{q-\ell-\ell_1} a_{q-\ell_1} b_{s}(\ell) \xi_\lambda } \nonumber\\
	&\leq C
		\sum_{\ell,\ell_1 \in \Zstar} \mathds{1}_{L_\ell}(q)
		\norm{K^m(\ell)}_{\max,2} \hat{V}(\ell_1) \kF^{-1} e(q)^{-1}
		\Bigg(\sum\limits_{s \in \Z^3} \norm{ a_{q-\ell_1} a_s \xi_\lambda }^2\Bigg)^\half \nonumber\\
	&\leq C^m
		\Bigg( \sum_{\ell_1 \in \Zstar} \hat{V}(\ell_1)^2 \Bigg)^{\half}
		\kF^{-\frac 32} e(q)^{-1}
		\Bigg(\sum\limits_{\ell_1,s \in \Z^3} \norm{ a_{q-\ell_1} a_s \xi_\lambda }^2\Bigg)^\half \nonumber\\
	&\leq C^m \kF^{-\frac 32} e(q)^{-1} \norm{(\NN+1) \xi_\lambda } \;, \label{eq:estEQ1123_Coulomb}
\end{align}
and for $ \sum_{\ell} \hat{V}(\ell) < \infty $, we again get a stronger bound:
\begin{align}
	&|\I_m(\ell, \ell_1)| \nonumber\\
	&\leq \mathds{1}_{L_{\ell_1} \cap (-L_{\ell_1} + \ell + \ell_1)}(q) \norm{(\NN+5)^{\half} \xi_\lambda}
		\sum_{s \in L_{\ell}}
		\norm{ K^m(\ell)_{q,s} K(\ell_1)_{q,-q+\ell+\ell_1} a_{q-\ell_1} a_s (\NN+1)^{-\half} \xi_\lambda } \nonumber\\
	&\leq \norm{(\NN+5)^{\half} \xi_\lambda}
		\norm{K^m(\ell)}_{\max,2}
		C \hat{V}(\ell_1) \kF^{-1} e(q)^{-1}
		\Bigg(\sum_{s \in \Z^3}  \norm{ a_{q-\ell_1} a_s (\NN+1)^{-\half} \xi_\lambda }^2\Bigg)^\half \nonumber\\
	&\leq \norm{(\NN+5)^\half \xi_\lambda }
		(C \hat{V}(\ell))^m
		\hat{V}(\ell_1)
		\kF^{-\frac 32} e(q)^{-1} \Xi^\half \;. \label{eq:estEQ1123}
\end{align}
Adding up all bounds yields the result.
\end{proof}

\begin{proof}[Proof of Proposition~\ref{prop:finEQ1est}]
We sum the estimates from Lemmas~\ref{lem:EQ111} and~\ref{lem:EQ112} and use that $ \norm{(\NN+1)^2 \xi_\lambda} \le C $ according to Lemma \ref{lem:gronNest}.
\end{proof}

\subsubsection{Error Estimates for Odd $m$}

\begin{proposition}[Estimate for $E_{Q_2}(\Theta^m_{K}(P^q))$]\label{prop:finEQ2est}
Let $ \sum_{\ell \in \Zstar} \hat{V}(\ell)^2 |\ell|^\alpha < \infty $ for some $ \alpha > 0 $ and recall $ E_{Q_2}(\Theta^m_K(P^q)) $ from~\eqref{eq:EQ1EQ2extension}. For $\xi_\lambda = e^{-\lambda S} \Omega$, given $ \varepsilon > 0 $ there exist constants $ C, C_\varepsilon > 0 $ such that for all odd $ m \in \NNN $, $ \lambda \in [0,1] $, and $ q \in B_{\F}^c $, the following bound is true for any $ \gamma \ge 0 $:
\begin{align}
	&\abs{\eva{\xi_\lambda,  E_{Q_2}(\Theta^m_K(P^q)) \xi_\lambda} - \delta_{m,1} n^{\ex,1}(q)} \label{eq:finEQ2est_Coulomb} \\
	&\leq C_\varepsilon C^m \left( \kF^{-\frac 32 + \varepsilon}
		+ \kF^{-1 - \frac{\alpha \gamma}{2}}
		+ \kF^{-1+ \frac{3-\alpha}{2} \gamma} \Xi^\half
		+ \kF^{-1+\varepsilon} \Xi^\half \right) e(q)^{-1} + C^m \kF^{-1} \eva{\xi_\lambda, a_q^* a_q \xi_\lambda}^{\half} e(q)^{-\half}  \,. \nonumber
\end{align}
If $ \sum_{\ell \in \Zstar} \hat{V}(\ell) < \infty $, we have the stronger bound
\begin{equation}\label{eq:finEQ2est}
\begin{split}
	&\abs{\eva{\xi_\lambda, E_{Q_2}\!\left(\Theta^m_K(P^q)\right) \xi_\lambda}} \\
	&\leq C_\varepsilon C^m \left( \kF^{-2}
		+ \kF^{-\frac{3}{2}} \Xi^\half
		+ \kF^{-1} \Xi^{1-\varepsilon} \right) e(q)^{-1}
		+ C_\varepsilon C^m \kF^{-1} \Xi^{\half - \varepsilon} \eva{\xi_\lambda, a_q^* a_q \xi_\lambda}^{\half} e(q)^{-\half} \;.
\end{split}
\end{equation}
\end{proposition}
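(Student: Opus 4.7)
The plan is to mirror the strategy of Lemmas~\ref{lem:EQ111} and~\ref{lem:EQ112}, applied individually to each of the eleven operator terms in the decomposition~\eqref{eq:EQ1EQ2extension} of $E_{Q_2}(\Theta^m_K(P^q))$, together with a direct estimate on the scalar contribution $n^{\ex,m}(q)$ (subtracted at $m=1$). For each $E^{m,j}_{Q_2}(q)$ I would first apply the binomial expansion $\Theta^m_K(P^q)(\ell) = \sum_{j'=0}^m \binom{m}{j'} K(\ell)^{m-j'} P^q(\ell) K(\ell)^{j'}$, then use Cauchy--Schwarz, and finally control the $K$-matrix entries via Lemma~\ref{lem:normsk} and the annihilation-operator sums via $\sum_p \norm{a_p \Psi}^2 = \norm{\cN^{1/2}\Psi}^2$. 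The subcase $j'=m$ is singled out whenever the Kronecker delta produced by $P^q$ collapses an operator index to $q$, yielding a factor $\eva{\xi_\lambda, a_q^* a_q \xi_\lambda}^{1/2}$ that accounts for the last summand of~\eqref{eq:finEQ2est_Coulomb}.

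The quartic operator terms $E^{m,1}_{Q_2}$ through $E^{m,7}_{Q_2}$ have the same six-operator structure as the $E^{m,j}_{Q_1}$ terms of Lemma~\ref{lem:EQ111}, only with creation and annihilation operators rearranged; their vacuum expectations are controlled by the same Cauchy--Schwarz strategy and produce contributions bounded by $C_\varepsilon C^m(\kF^{-3/2+\varepsilon} + \kF^{-1+\varepsilon}\Xi^{1/2}) e(q)^{-1}$ plus the $a_q^*a_q$-term. The number-operator terms $E^{m,10}_{Q_2}$ and $E^{m,11}_{Q_2}$, which involve $\Theta^{m+1}_K(P^q)$ and a single $a_p^*a_p$, are controlled by $\Xi$ times $\sum_\ell \norm{K^{m+1}(\ell)}_{\HS}$, summable under hypothesis~\eqref{hyp:alpha} since $m+1 \ge 2$.

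The principal novelty is the treatment of $E^{m,8}_{Q_2}$, $E^{m,9}_{Q_2}$, and $n^{\ex,m}(q)$, all of which involve the diagonal kernel product $\Theta^m_K(P^q)(\ell)_{r,-r+\ell+\ell_1} K(\ell_1)_{r,-r+\ell+\ell_1}$. At $m=1$, the scalar $n^{\ex,1}(q)$ is precisely the subtracted exchange contribution; the residual operator terms $E^{1,8,9}_{Q_2}$ are of the form (scalar)$\,\times\,a_p^*a_p$, and their expectation under $\xi_\lambda$ is bounded by $\Xi$ times the same scalar sum. For $m \ge 2$, no subtraction is performed, but $n^{\ex,m}$ and the analogous scalar sum coming from $E^{m,8,9}_{Q_2}$ must still be estimated. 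In either case the resulting double sum over $\ell,\ell_1$ does not enjoy absolute $\hat V$-summability in the Coulomb regime, which forces the introduction of the parameter $\gamma$.

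The cutoff argument splits $\sum_\ell = \sum_{|\ell|\le\kF^\gamma} + \sum_{|\ell|>\kF^\gamma}$, and likewise in $\ell_1$. For the high-momentum range, Cauchy--Schwarz combined with weighted summability yields $(\sum_{|\ell|>\kF^\gamma} \hat V(\ell)^2)^{1/2} \le \kF^{-\gamma\alpha/2}(\sum_\ell \hat V(\ell)^2|\ell|^\alpha)^{1/2}$, which together with a $\kF^{-1}$ factor from the companion $K$ gives the summand $\kF^{-1 - \alpha\gamma/2} e(q)^{-1}$. For the low-momentum range, the bound $\sum_{|\ell|\le\kF^\gamma} \hat V(\ell) \le (\sum_{|\ell|\le\kF^\gamma}|\ell|^{-\alpha})^{1/2}(\sum_\ell \hat V(\ell)^2|\ell|^\alpha)^{1/2} \lesssim \kF^{(3-\alpha)\gamma/2}$, paired with a $\Xi^{1/2}$ factor arising from a number-operator expectation, produces $\kF^{-1 + (3-\alpha)\gamma/2}\Xi^{1/2} e(q)^{-1}$. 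The stronger bound~\eqref{eq:finEQ2est} under $\sum_\ell \hat V(\ell) < \infty$ requires no splitting and proceeds as in Lemma~\ref{lem:EQ111}, with Lemma~\ref{lem:Xi_halfminusepsilon} converting extra factors of $a_p$ into $\Xi^{1/2-\varepsilon}$. The main obstacle will be the careful bookkeeping across the eleven operator terms, and in particular verifying that the extraction of $n^{\ex,1}$ from $E^{1,8,9}_{Q_2}$ leaves only an operator remainder of the form (scalar)$\,\times\,a_p^*a_p$ that is amenable to the $\Xi$-times-scalar cutoff estimate.
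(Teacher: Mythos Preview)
Your overall architecture---binomial expansion of $\Theta^m_K(P^q)$, Cauchy--Schwarz, Lemma~\ref{lem:normsk} for the $K$-entries, and reduction of annihilation sums to powers of $\cN$---is exactly what the paper does. The gap is that you have located the $\gamma$-cutoff argument in the wrong place.

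The terms $E^{m,8}_{Q_2}$, $E^{m,9}_{Q_2}$, and $n^{\ex,m}$ do \emph{not} require the splitting. In all three the second kernel is $K(\ell_1)_{r,-r+\ell+\ell_1}$, whose column index $-r+\ell+\ell_1$ runs with $\ell_1$. One can therefore Cauchy--Schwarz the $\ell_1$-sum against $\big(\sum_{\ell_1}|K(\ell_1)_{r,-r+\ell+\ell_1}|^2\big)^{1/2}$ or $\big(\sum_{\ell_1}\lambda_{\ell,-r+\ell+\ell_1}^{-1}\big)^{1/2}$, which picks up the missing $\kF^{-1/2}$ via the $\mathrm{max},2$- or HS-type bounds of Lemma~\ref{lem:normsk} and~\eqref{eq:lambdainverse}. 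This yields $C^m(\kF^{-3/2}\Xi^{1/2}+\kF^{-1}\Xi)e(q)^{-1}$ for $E^{m,8,9}_{Q_2}$ and $C_\varepsilon C^m\kF^{-3/2+\varepsilon}e(q)^{-3/2}$ for $n^{\ex,m}$ with $m\ge 2$, no $\gamma$ needed.

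The term that genuinely forces the $\gamma$-cutoff is $E^{m,1}_{Q_2}$ in the sub-case $j'=m$ (where $r$ collapses to $q$). There the factor is $K(\ell_1)_{q,s_1}$ with $s_1$ summed separately, so after Cauchy--Schwarz in $s_1$ one is left with $(C\hat V(\ell_1))\kF^{-1/2}e(q)^{-1/2}$ and a bare $\sum_{\ell_1}\hat V(\ell_1)$. Unlike the analogous $E^{m,1}_{Q_1}$ term (which has a $5$--$1$ creation/annihilation split, so all $\ell_1$-dependent operators sit on one side of Cauchy--Schwarz and the $\ell_1$-sum is absorbed into $\cN$-powers), here the $3$--$3$ structure $a^*_{q-\ell_1}b^*_{-s_1}(-\ell_1)\,b_{-s}(-\ell)a_{q-\ell}$ leaves $a_{q-\ell_1}$ and $a_{q-\ell}$ on opposite sides, and neither yields an $a_q$. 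The naive estimate then gives only $C^m\kF^{-1}e(q)^{-1}$, as large as $n^{\RPA}(q)$. The paper fixes this by splitting the $\ell_1$-sum at $|\ell_1|=\kF^\gamma$: for large $\ell_1$ one uses $(\sum_{|\ell_1|>\kF^\gamma}\hat V(\ell_1)^2)^{1/2}\le C\kF^{-\alpha\gamma/2}$ together with Cauchy--Schwarz over $\ell_1$, and for small $\ell_1$ one bounds $\|a_{q-\ell_1}\xi_\lambda\|\le\Xi^{1/2}$ and then $\sum_{|\ell_1|\le\kF^\gamma}\hat V(\ell_1)\le C\kF^{(3-\alpha)\gamma/2}$. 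This is precisely the origin of the two $\gamma$-dependent summands in~\eqref{eq:finEQ2est_Coulomb}. Your claim that $E^{m,1}_{Q_2}$--$E^{m,7}_{Q_2}$ ``produce contributions bounded by $C_\varepsilon C^m(\kF^{-3/2+\varepsilon}+\kF^{-1+\varepsilon}\Xi^{1/2})e(q)^{-1}$'' is therefore not correct for $E^{m,1}_{Q_2}$ at $j'=m$ without this refinement.
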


Note that here, $ m = 1 $ and thus $ \sum_\ell \hat{V}(\ell)^m = \infty $ can occur for $ \alpha \to 0 $. In the end, we will choose $ \gamma = \frac 13 $ as explained below. Then, the error in~\eqref{eq:finEQ2est_Coulomb} is $ \sim \kF^{-1-\frac {\alpha}{6}} $.
To prove this proposition, we estimate the terms $ E^{m,1}_{Q_2}(q) $ to $ E^{m,11}_{Q_2} (q)$ and $ n^{\ex,m}(q) $.

\begin{lemma} \label{lem:EQ211}
Let $ \sum_{\ell \in \Zstar} \hat{V}(\ell)^2 |\ell|^\alpha < \infty $ for some $ \alpha > 0 $ and recall definition~\eqref{eq:expandedEQ2} of $ E_{Q_2}^{m,j}(q) $. For $\xi_\lambda = e^{-\lambda S} \Omega$, there exists $ C > 0 $ such that for all $ \lambda \in [0,1] $, odd $ m \in \NNN $, and $ q \in B_{\F}^c $, the following bound is true for any $ \gamma \ge 0 $:
\begin{align}
	& \abs{\eva{\xi_\lambda,\left(E^{m,1}_{Q_2}(q) + E^{m,2}_{Q_2}(q) + \mathrm{h.c.}\right) \xi_\lambda }} \label{eq:estEQ211_Coulomb}\\
	&\leq C^m \left( \kF^{-\frac 32}
		+ \kF^{-1-\frac{\alpha \gamma}{2}}
		+ \kF^{-1+ \frac{3-\alpha}{2} \gamma} \Xi^{\half} \right) e(q)^{-1}
		\norm{ (\NN+1)^{\frac 52} \xi_\lambda }^2 \nonumber\\
	&\quad + C^m \kF^{-1} \eva{\xi_\lambda, a_q^* a_q \xi_\lambda}^{\half} e(q)^{-\half} \norm{ (\NN+1)^2 \xi_\lambda } \;. \nonumber
\end{align}
If $ \sum_{\ell \in \Zstar} \hat{V}(\ell) < \infty $, given $ \varepsilon > 0 $, there exists some $ C_\varepsilon > 0 $ such that
\begin{align}
	& \abs{\eva{\xi_\lambda,\left(E^{m,1}_{Q_2}(q) + E^{m,2}_{Q_2}(q) + \mathrm{h.c.}\right) \xi_\lambda }} \label{eq:estEQ211} \\
	&\leq C_\varepsilon C^m \left( \kF^{-2}
		+ \kF^{-\frac{3}{2}} \Xi^\half
		+ \kF^{-1}\Xi^{1-\varepsilon} \right) e(q)^{-1}
		\norm { (\NN+1)^{\frac 52} \xi_\lambda }^2 \nonumber\\
	&\quad + C_\varepsilon C^m \kF^{-1} \Xi^{\half - \varepsilon} \eva{\xi_\lambda, a_q^* a_q \xi_\lambda}^{\half} e(q)^{-\half} \;. \nonumber
\end{align}
\end{lemma}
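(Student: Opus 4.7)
The structure of $E^{m,1}_{Q_2}(q)$ and $E^{m,2}_{Q_2}(q)$ mirrors that of $E^{m,1}_{Q_1}(q)$ and $E^{m,2}_{Q_1}(q)$ in Lemma~\ref{lem:EQ111}: each contains a product of four fermionic operators, paired as $a^* b^* b a$ instead of $a^* b^* b^* a$, together with one factor of $\Theta^m_K(P^q)(\ell)$ and one factor of $K(\ell_1)$. The plan is therefore to adapt the three-case analysis of Lemma~\ref{lem:EQ111} to the present operator ordering, adding one extra ingredient (a cutoff in $|\ell|$) to absorb the weaker hypothesis~\eqref{hyp:alpha}.

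As in~\eqref{eq:q-q} I decompose
\[
	\Theta^m_K(P^q)(\ell) = \sum_{j=0}^m \binom{m}{j} K(\ell)^{m-j}\, P^q(\ell)\, K(\ell)^j \;,
\]
which reduces the matter to estimating, for each $0\le j\le m$, a sum of the shape $\sum_{\ell,\ell_1}\mathds{1}_{L_\ell}(q)\,|\I_j(\ell,\ell_1)|$ with $\I_j$ containing the pair of kernels $K^{m-j}(\ell)_{r,q}K^j(\ell)_{q,s}$ (for $E^{m,1}_{Q_2}$) or $K^{m-j}(\ell)_{r+\ell,q}K^j(\ell)_{q,s}$ (for $E^{m,2}_{Q_2}$). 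In each case I apply Cauchy--Schwarz grouping the two $K$-factors together with the two $a$-operators on one side, and $K(\ell_1)$ together with the remaining $a^*, b^*, b$ on the other; the kernel bounds of Lemma~\ref{lem:normsk} (in particular the $e(q)^{-1}$ and $e(q)^{-1/2}$ bounds in~\eqref{eq:e(q)_extraction_bounds}) extract the desired $e(q)$-scaling, while $\|a_p\|\le 1$, $\sum_p \|a_p\Psi\|^2=\|\cN^{1/2}\Psi\|^2$, and~\eqref{eq:lambdainverse} handle the operator side.

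The three cases behave as in Lemma~\ref{lem:EQ111}: for $j=0$ one of the annihilators is $a_q$, producing a factor $\|a_q\xi_\lambda\|\le \Xi^{1/2}$; for $1\le j\le m-1$ both $K(\ell)$ factors are bounded in Hilbert--Schmidt and one extra $\Xi^{1/2}$ appears from an interior $a_r$; for $j=m$ (where $r=q$ for $E^{m,1}_{Q_2}$, $r+\ell=q$ for $E^{m,2}_{Q_2}$) one annihilator becomes $a_q$ in $E^{m,2}_{Q_2}$, yielding the $\eva{\xi_\lambda,a_q^*a_q\xi_\lambda}^{1/2}$ factor as in~\eqref{eq:estEQ1113_Coulomb_bis}, while for $E^{m,1}_{Q_2}$ one proceeds instead as in~\eqref{eq:estEQ1113_Coulomb}. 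Under the stronger hypothesis $\sum_\ell \hat{V}(\ell)<\infty$, the sharper bound~\eqref{eq:estEQ211} follows exactly as in~\eqref{eq:estEQ1113} and~\eqref{eq:estEQ1113_bis} by using Lemma~\ref{lem:Xi_halfminusepsilon} to replace the last annihilator with $\Xi^{1/2-\varepsilon}$.

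The main obstacle, and the reason for the extra terms $\kF^{-1-\alpha\gamma/2}$ and $\kF^{-1+(3-\alpha)\gamma/2}\Xi^{1/2}$ in~\eqref{eq:estEQ211_Coulomb}, is the case $m=1$ under the weak hypothesis~\eqref{hyp:alpha}, where $\sum_\ell \hat{V}(\ell)$ may diverge. In the $j=0$ case the Cauchy--Schwarz on the $\ell_1$-side naturally produces $(\sum_{\ell_1}\hat{V}(\ell_1)^2)^{1/2}$ which is harmless, but on the $\ell$-side one is left with a single power of $\hat{V}(\ell)$. I will deal with this by splitting $\sum_\ell = \sum_{|\ell|\le\kF^\gamma} + \sum_{|\ell|>\kF^\gamma}$: the low-momentum part is controlled via $\sum_{|\ell|\le \kF^\gamma}\hat{V}(\ell)\le (\sum_\ell \hat V(\ell)^2|\ell|^\alpha)^{1/2}(\sum_{|\ell|\le\kF^\gamma}|\ell|^{-\alpha})^{1/2}\lesssim \kF^{(3-\alpha)\gamma/2}$, contributing $\kF^{-1+(3-\alpha)\gamma/2}\Xi^{1/2}$, while the high-momentum part is bounded using $\hat{V}(\ell)\le \hat V(\ell)^2|\ell|^\alpha\,(\hat V(\ell)|\ell|^\alpha)^{-1}\lesssim \kF^{-\alpha\gamma}\hat{V}(\ell)^2|\ell|^\alpha$ (combined with $\|a_q\xi_\lambda\|^2\le \Xi\le 1$), contributing $\kF^{-1-\alpha\gamma/2}$. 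All cases are summed using $\sum_j\binom{m}{j}\le 2^m$ and absorbed into the prefactor $C^m$.
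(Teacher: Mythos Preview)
Your overall three-case strategy and the idea of a $k_\F^\gamma$-cutoff are both correct, but you have misplaced the cutoff: the $\gamma$-splitting is needed in the $j=m$ case of $E^{m,1}_{Q_2}$, not in $j=0$.

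For $j=0$ there is in fact no obstacle. With $s=q$ fixed, one applies Cauchy--Schwarz also in the $\ell$-sum (grouping $K^m(\ell)_{r,q}$ with $a_{r-\ell}$), which produces $\bigl(\sum_\ell (C\hat V(\ell))^{2m}\bigr)^{1/2}$---finite for every $m\ge 1$---and leaves an $a_{-q}$ from $b_{-q}(-\ell)$ to extract $\Xi^{1/2}$. The result is $C^m k_\F^{-3/2}e(q)^{-1}\Xi^{1/2}$, i.e.\ the first term on the right of~\eqref{eq:estEQ211_Coulomb}, with no divergent $\sum_\ell\hat V(\ell)$ ever appearing.

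The genuine difficulty is $j=m$ for $E^{m,1}_{Q_2}$: here $r=q$ is fixed and the annihilation operators are $a_{q-\ell}$ and $a_{q-\ell_1}$, neither equal to $a_q$. Your suggestion to proceed ``as in~\eqref{eq:estEQ1113_Coulomb}'' does not transfer: that bound for $E^{m,1}_{Q_1}$ relied on $\sum_\ell \hat V(\ell)^m<\infty$, which Lemma~\ref{lem:EQ111} could use because $E_{Q_1}$ only occurs for $m\ge 2$. For $E_{Q_2}$ the case $m=1$ is present, and a naive Cauchy--Schwarz in both $\ell$ and $\ell_1$ then gives only $k_\F^{-1}e(q)^{-1}$, the same order as $n^{\RPA}(q)$. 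The paper therefore splits the $\ell_1$-sum at $|\ell_1|=k_\F^\gamma$: for $|\ell_1|>k_\F^\gamma$ one uses $\bigl(\sum_{|\ell_1|>k_\F^\gamma}\hat V(\ell_1)^2\bigr)^{1/2}\le k_\F^{-\alpha\gamma/2}\bigl(\sum_{\ell_1}\hat V(\ell_1)^2|\ell_1|^\alpha\bigr)^{1/2}$ (this is the correct replacement for your pointwise inequality $\hat V(\ell)\lesssim k_\F^{-\alpha\gamma}\hat V(\ell)^2|\ell|^\alpha$, which would require an unwarranted lower bound on $\hat V(\ell)|\ell|^\alpha$); for $|\ell_1|\le k_\F^\gamma$ one keeps $\|a_{q-\ell_1}\xi_\lambda\|\le\Xi^{1/2}$ and bounds $\sum_{|\ell_1|\le k_\F^\gamma}\hat V(\ell_1)\le C k_\F^{(3-\alpha)\gamma/2}$ as you wrote. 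This yields the second and third terms in~\eqref{eq:estEQ211_Coulomb}. The remaining cases ($1\le j\le m-1$, and $j=m$ for $E^{m,2}_{Q_2}$, and the $\ell^1$-hypothesis improvements) go through as you outlined.
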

\begin{proof}
The proof is similar to the one of Lemma~\ref{lem:EQ111}: We start with $ E^{m,1}_{Q_2}(q) $, where we split the anticommutator using~\eqref{eq:q-q} to get
\begin{equation} \label{eq:EQ2111}
\begin{aligned}
	& \abs{\eva{\xi_\lambda,\left(E^{m,1}_{Q_2}(q) + \mathrm{h.c.}\right) \xi_\lambda }}
	\le 4 \sum_{j=0}^m {{m}\choose j} \sum_{\ell,\ell_1 \in \Zstar}\!\! \mathds{1}_{L_\ell}(q) |\I_j(\ell, \ell_1)| \;,
	\end{aligned}
\end{equation}
where
\begin{equation}
\begin{aligned}
	& \I_j(\ell, \ell_1)
	\coloneq \sum_{\substack{r\in L_{\ell} \cap L_{\ell_1}\\ s \in L_{\ell},s_1\in L_{\ell_1}}}
		\eva{\xi_\lambda, K^{m-j}(\ell)_{r,q} K^{j}(\ell)_{q,s} K(\ell_1)_{r,s_1} a^*_{r-\ell_1} b^*_{-s_1}(-\ell_1) b_{-s}(-\ell) a_{r-\ell} \xi_\lambda} \;. \\
\end{aligned}
\end{equation}
For the case $ j = 0 $ we use $1 = (\NN+1) (\NN+1)^{-1}$ and Lemma~\ref{lem:normsk}. Since $ m = 1 $ may occur, we have to use the Cauchy--Schwarz inequality for the sum over $\ell$ to ensure that we get at least the second power $\sum_{\ell} \hat{V}(\ell)^2$ and not just $\sum_{\ell} \hat{V}(\ell)$. Then
\begin{align}
	\sum_{\ell,\ell_1 \in \Zstar} \mathds{1}_{L_\ell}(q) |\I_0(\ell, \ell_1)|
 	&\leq \sum\limits_{\ell,\ell_1 \in \Zstar} \mathds{1}_{L_\ell}(q) \Bigg( \sum\limits_{r \in L_{\ell_1}} 
 		\Bigg\Vert \sum\limits_{s_1 \in L_{\ell_1}} K(\ell_1)_{r,s_1} b_{-s_1}(-\ell_1) a_{r-\ell_1} (\NN+1) \xi_\lambda \Bigg\Vert^2\Bigg)^\half \times\nonumber\\
 	&\quad \times \Bigg( \sum\limits_{r \in L_\ell} \norm{K^{m}(\ell)_{r,q} b_{-q}(-\ell) a_{r-\ell} (\NN+1)^{-1}\xi_\lambda }^2\Bigg)^\half \nonumber\\
 	&\leq \Bigg( \sum\limits_{\ell_1 \in \Zstar} \norm{K(\ell_1)}_{\mathrm{max,2}}^2 \Bigg)^{\half} \Bigg(
 		\sum\limits_{r, \ell_1, s_1 \in \Z^3} \norm{a_{r-\ell_1} a_{-s_1+\ell_1} a_{-s_1} (\NN+1) \xi_\lambda}^2\Bigg)^\half \times\nonumber\\
 	&\quad \times \Bigg( \sum\limits_{\ell \in \Zstar}(C \hat{V}(\ell))^{2m} \Bigg)^{\half}
 		\kF^{-1} e(q)^{-1}
 		\Bigg( \sum\limits_{\ell \in \Z^3} \norm{a_{-q + \ell} a_q (\NN+1)^{-\half} \xi_\lambda } \Bigg)^{\half} \nonumber\\
 	&\leq C^m \norm{ (\NN+1)^{\frac 52}\xi_\lambda}
 		\kF^{-\frac 32} e(q)^{-1} \Xi^{\half} \;. \label{eq:estEQ2111}
\end{align}
The case $ 1 \le j \le m-1 $ only occurs for $ m \ge 2 $, so $ \sum_\ell \hat{V}(\ell)^m < \infty $ is true for any $ \alpha > 0 $:
\begin{align}
	\sum_{\ell,\ell_1 \in \Zstar} \mathds{1}_{L_\ell}(q) |\I_j(\ell, \ell_1)|
	&\leq \sum_{\ell,\ell_1 \in \Zstar} \mathds{1}_{L_\ell}(q) \sum\limits_{r\in L_{\ell} \cap L_{\ell_1}}\! \Bigg( \sum\limits_{s \in L_\ell} \abs{K^{j}(\ell)_{q,s}}^2\Bigg)^\half \bigg( \sum\limits_{s \in L_\ell}\norm{b_{-s}(-\ell) a_{r-\ell} \xi_\lambda}^2\bigg)^\half \times\nonumber\\
		&\quad \times \Bigg( \sum\limits_{s_1 \in L_{\ell_1}}\abs{K(\ell_1)_{r,s_1}}^2\Bigg)^\half \bigg(\sum\limits_{s_1 \in L_{\ell_1}}\norm{ K^{m-j}(\ell)_{r,q} b_{-s_1}(-\ell_1)  a_{r-\ell_1} \xi_\lambda }^2\bigg)^\half
	\nonumber\\
	&\leq \sum_{\ell \in \Zstar} (C \hat{V}(\ell))^m
		\kF^{-2} e(q)^{-1}
		\bigg( \sum\limits_{r, s\in \Z^3} \norm{ a_{r-\ell} a_{-s} \xi_\lambda}^2\bigg)^\half \times \nonumber\\
		&\quad \times 
		\Bigg( \sum_{\ell_1 \in \Zstar} \hat{V}(\ell_1)^2 \Bigg)^{\half}
	\bigg(
		\sum\limits_{r, \ell_1, s_1 \in \Z^3} \norm{ a_{r-\ell_1} a_{-s_1 + \ell_1} a_{-s_1} \xi_\lambda }^2\bigg)^\half
	\nonumber\\
	&\leq C^m \kF^{-2} e(q)^{-1}
	\norm{ (\NN+1)^{\frac 32} \xi_\lambda}^2 \;. \label{eq:estEQ2112_Coulomb}
\end{align}

Finally, let us consider the case $ j = m $. For Coulomb potentials, this is the most difficult term to bound. In analogy to~\cite{CHN24}, we introduce the ball $ S := \Zstar \cap B_{\kF^{\gamma}}(0) $ with $ \gamma \ge 0 $. For $ \ell_1 $ outside the ball, we proceed as follows:
\begin{align}
	&\sum_{\ell \in \Zstar} \sum_{\ell_1 \in \Zstar \setminus S} \mathds{1}_{L_\ell}(q) |\I_m(\ell, \ell_1)| \nonumber\\
	&\leq \sum_{\ell \in \Zstar} \sum_{\ell_1 \in \Zstar \setminus S} \mathds{1}_{L_\ell \cap L_{\ell_1}}(q) \Bigg(\sum\limits_{s \in L_{\ell}} \abs{K^m(\ell)_{q,s}}^2\Bigg)^\half
		\Bigg(\sum\limits_{s_1 \in L_{\ell_1}} \abs{K(\ell_1)_{q,s_1}}^2\Bigg)^\half \times \nonumber \\
	& \quad \times \norm{ a_{q-\ell} (\NN+1)^{\half} \xi_\lambda}
		\norm{ a_{q-\ell_1} (\NN+1)^{\half} \xi_\lambda }\nonumber\\
	&\leq \kF^{-1} e(q)^{-1}
		\Bigg( \sum_{\ell \in \Zstar} (C \hat{V}(\ell))^{2m} \Bigg)^{\half}
		\Bigg( \sum_{\ell_1 \in \Zstar \setminus S} \hat{V}(\ell_1)^2 \Bigg)^{\half} \times \nonumber \\
	& \quad \times 
		\Bigg( \sum_{\ell \in \Zstar} \norm{ a_{q-\ell} (\NN+1)^{\half} \xi_\lambda}^2 \Bigg)^{\half}
		\Bigg( \sum_{\ell_1 \in \Zstar} \norm{ a_{q-\ell_1} (\NN+1)^{\half} \xi_\lambda }^2 \Bigg)^{\half} \nonumber\\
	&\leq \kF^{-1} e(q)^{-1}
		\Bigg( \sum_{\ell \in \Zstar} (C \hat{V}(\ell))^{2m} \Bigg)^{\half}
		\Bigg( \sum_{\ell_1 \in \Zstar \setminus S} \hat{V}(\ell_1)^2 \Bigg)^{\half} \norm{(\NN+1) \xi_\lambda}^2 \nonumber\\
	&\leq C^m \kF^{-1-\frac{\alpha \gamma}{2}} e(q)^{-1}
		\norm{(\NN+1) \xi_\lambda}^2 \;. \label{eq:estEQ2113_Coulomb_1}
\end{align}
Here we used that $ \sum_{|\ell_1| \ge \kF^\gamma} \hat{V}(\ell_1)^2 \le \kF^{-\alpha \gamma} \sum_{\ell_1 \in \Zstar} \hat{V}(\ell_1)^2 |\ell_1|^\alpha \le C \kF^{-\alpha \gamma} $.
For $ \ell_1 \in S $, we can achieve a better bound by extracting an additional $ \Xi^{\half} $:
\begin{align}
	&\sum_{\ell \in \Zstar} \sum_{\ell_1 \in S} \mathds{1}_{L_\ell}(q) |\I_m(\ell, \ell_1)| \nonumber\\
	&\leq \sum_{\ell \in \Zstar} \sum_{\ell_1 \in S} \mathds{1}_{L_\ell \cap L_{\ell_1}}(q) \Bigg(\sum\limits_{s \in L_{\ell}} \abs{K^m(\ell)_{q,s}}^2\Bigg)^\half
		\Bigg(\sum\limits_{s_1 \in L_{\ell_1}} \abs{K(\ell_1)_{q,s_1}}^2\Bigg)^\half
		\norm{ a_{q-\ell} (\NN+1) \xi_\lambda}
		\norm{ a_{q-\ell_1} \xi_\lambda }\nonumber\\
	&\leq \kF^{-1} e(q)^{-1}
		\Bigg( \sum_{\ell \in \Zstar} (C \hat{V}(\ell))^{2m} \Bigg)^{\half}
		\sum_{\ell_1 \in S} \hat{V}(\ell_1)
		\Bigg( \sum_{\ell \in \Z^3} \norm{ a_{q-\ell} (\NN+1) \xi_\lambda}^2 \Bigg)^{\half}
		\Xi^{\half} \nonumber\\
	&\leq C^m \kF^{-1} e(q)^{-1}
		\Bigg( \sum_{\ell_1 \in S} \hat{V}(\ell_1) \Bigg) \norm{(\NN+1)^{\frac 32} \xi_\lambda} \Xi^{\half} \nonumber\\
	&\leq C^m \kF^{-1 + \frac{3-\alpha}{2} \gamma} e(q)^{-1}
		\norm{(\NN+1)^{\frac 32} \xi_\lambda} \Xi^{\half} \;. \label{eq:estEQ2113_Coulomb_2}
\end{align}
Here, we used $ \sum_{|\ell_1| < \kF^\gamma} \hat{V}(\ell_1) \le \left( \sum_{\ell_1 \in \Zstar} \hat{V}(\ell_1)^2 |\ell_1|^\alpha \right)^{\half} \left( \sum_{|\ell_1| < \kF^\gamma} |\ell_1|^{-\alpha} \right)^{\half} \le C \kF^{\frac{3-\alpha}{2} \gamma} $.

For $ \sum_\ell \hat{V}(\ell) < \infty $, no splitting of the sum over $ \ell_1 $ is required and we gain a factor $ \Xi^{1-\varepsilon} $ by using Lemma~\ref{lem:Xi_halfminusepsilon}:
\begin{align}
	|\I_m(\ell, \ell_1)|
	&\leq \mathds{1}_{L_{\ell_1}}(q) \Bigg(\sum_{s \in L_{\ell}} \abs{K^m(\ell)_{q,s}}^2\Bigg)^\half
		\Bigg(\sum_{s_1 \in L_{\ell_1}} \abs{K(\ell_1)_{q,s_1}}^2\Bigg)^\half
		\norm{ a_{q-\ell} (\NN+1) \xi_\lambda}
		\norm{ a_{q-\ell_1} \xi_\lambda }\nonumber\\
	&\leq (C \hat{V}(\ell))^m \hat{V}(\ell_1) \kF^{-1} e(q)^{-1}
		\sup_{q' \in \Z^3} \norm{ a_{q'} (\NN+1) \xi_\lambda }\Xi^{\half} \nonumber\\
	&\leq C_\varepsilon (C \hat{V}(\ell))^m
		\hat{V}(\ell_1)
		\kF^{-1} e(q)^{-1} \Xi^{1-\varepsilon} \;. \label{eq:estEQ2113}
\end{align}
Summing up the three bounds concludes the proof for $ E^{m,1}_{Q_2}(q) $.
The bound for $ E^{m,2}_{Q_2}(q) $ is analogous, except for $ j = m $ where we proceed as in~\eqref{eq:estEQ1113_Coulomb_bis}:
\begin{align}
	&\sum_{\ell,\ell_1 \in \Zstar} \mathds{1}_{L_\ell}(q) |\I_m(\ell, \ell_1)| \nonumber\\
	&\leq \sum_{\ell,\ell_1 \in \Zstar} \mathds{1}_{L_\ell \cap (L_{\ell_1}+\ell-\ell_1) }(q)
		\Bigg(\sum\limits_{s \in L_{\ell}} \abs{K^m(\ell)_{q,s}}^2\Bigg)^\half
		\Bigg(\sum\limits_{s_1 \in L_{\ell_1}} \abs{K(\ell_1)_{q-\ell+\ell_1,s_1}}^2\Bigg)^\half \times \nonumber\\
	&\quad \times \Bigg( \sum_{s,s_1 \in \Z^3} \norm{ a_{-s} a_{-s+\ell} a_q (\NN+1)^{-1} \xi_\lambda}^2
		\norm{ a_{-s_1} a_{-s_1+\ell_1} a_{q-\ell+\ell_1} (\NN+1) \xi_\lambda }^2 \Bigg)^{\half} \nonumber\\
	&\leq \kF^{-1} e(q)^{-\frac 12}
		\Bigg(\sum_{\ell \in \Zstar} (C \hat{V}(\ell))^{2m} \Bigg)^\half
		\Bigg(\sum_{\ell_1 \in \Zstar} \hat{V}(\ell_1)^2 \Bigg)^\half \times \nonumber\\
	&\quad \times \Bigg( \sum_{\ell,s \in \Z^3} \norm{a_{-s+\ell} a_{-s} a_q (\NN+1)^{-1} \xi_\lambda}^2
		\sum_{\ell_1,s_1 \in \Z^3} \norm{a_{-s_1+\ell_1} a_{-s_1} (\NN+1) \xi_\lambda }^2 \Bigg)^{\half} \nonumber\\
	&\leq C^m \kF^{-1} e(q)^{-\frac 12} \eva{\xi_\lambda, a_q^* a_q \xi_\lambda}^{\half} \norm{ (\NN+1)^2 \xi_\lambda } \;. \label{eq:estEQ2113_Coulomb_bis}
\end{align}
In case $ \sum_\ell \hat{V}(\ell) < \infty $, we proceed similarly to~\eqref{eq:estEQ1113_bis}:
\begin{align}
	|\I_m(\ell, \ell_1)|
	&\leq \Bigg(\sum\limits_{s \in L_{\ell}} \abs{K^m(\ell)_{q,s}}^2\Bigg)^\half
		\Bigg(\sum\limits_{s_1 \in L_{\ell_1}} \abs{K(\ell_1)_{q-\ell+\ell_1,s_1}}^2\Bigg)^\half
		\norm{ a_{q-\ell+\ell_1} (\NN+1) \xi_\lambda}
		\norm{ a_q \xi_\lambda }\nonumber\\
	&\leq (C \hat{V}(\ell))^m \hat{V}(\ell_1) \kF^{-1} e(q)^{-\half}
		\sup_{q' \in \Z^3} \norm{ a_{q'} (\NN+1) \xi_\lambda }
		\eva{\xi_\lambda, a_q^* a_q \xi_\lambda}^{\half} \nonumber\\
	&\leq C_\varepsilon (C \hat{V}(\ell))^m
		\hat{V}(\ell_1)
		\kF^{-1} e(q)^{-\half} \Xi^{\half-\varepsilon} \eva{\xi_\lambda, a_q^* a_q \xi_\lambda}^{\half} \;. \label{eq:estEQ2113_bis}
\end{align}
This concludes the proof.
\end{proof}

\begin{lemma} \label{lem:EQ215}
Let $ \sum_{\ell \in \Zstar} \hat{V}(\ell)^2 < \infty$. For $\xi_\lambda = e^{-\lambda S} \Omega$, given $ \varepsilon > 0 $, there exist $ C, C_\varepsilon > 0 $ such that for all $ \lambda \in [0,1] $, odd $ m \in \NNN $, and $ q \in B_{\F}^c $,
\begin{align} \label{eq:estEQ215_Coulomb}
	\abs{\eva{\xi_\lambda,\left(E^{m,3}_{Q_2}(q) +\mathrm{h.c.}\right) \xi_\lambda }}
	\leq C_\varepsilon C^m \left( \kF^{-\frac 32}
		+ \kF^{-1 + \varepsilon} \Xi^\half \right)
		e(q)^{-1}
		\norm{(\NN+1) \xi_\lambda}^2 \;.
\end{align}
If $ \sum_{\ell \in \Zstar} \hat{V}(\ell) < \infty $, then
\begin{align} \label{eq:estEQ215}
	\abs{\eva{\xi_\lambda,\left(E^{m,3}_{Q_2}(q) + \mathrm{h.c.}\right) \xi_\lambda }}
	\leq C^m \kF^{-\frac{3}{2}} \Xi^{\half} e(q)^{-1}
		\norm{(\NN+1)^\half \xi_\lambda } \;.
\end{align}
\end{lemma}

\begin{proof}
Splitting the anticommutator in $ E^{m,3}_{Q_2}(q) $ by \eqref{eq:q-q} yields
\begin{equation} \label{eq:EQ2151}
\begin{aligned}
	&\abs{\eva{\xi_\lambda,\left(E^{m,3}_{Q_2}(q) + \mathrm{h.c.}\right) \xi_\lambda }}
	\le 4 \sum_{j=0}^m {{m}\choose j} \sum_{\ell,\ell_1 \in \Zstar}\!\! \mathds{1}_{L_\ell}(q) |\I_j(\ell, \ell_1)| \;,
	\end{aligned}
\end{equation}
where
\begin{equation}
\begin{aligned}
	& \I_j(\ell, \ell_1)
	\coloneq \sum_{\substack{r\in L_{\ell} \cap L_{\ell_1} \cap (-L_{\ell_1}+\ell+\ell_1)\\ s \in L_{\ell}}}
		\eva{\xi_\lambda, K^{m-j}(\ell)_{r,q} K^{j}(\ell)_{q,s} K(\ell_1)_{r,-r+\ell+\ell_1} a^*_{r-\ell_1} a^*_{r-\ell-\ell_1} b_{-s}(-\ell) \xi_\lambda} \;. \\
\end{aligned}
\end{equation}
For the case $j=0$, using Lemma~\ref{lem:normsk} we get
\begin{align}
	&\sum_{\ell,\ell_1 \in \Zstar} \mathds{1}_{L_\ell}(q) |\I_0(\ell, \ell_1)| \nonumber\\
	&\leq \sum_{\ell,\ell_1 \in \Zstar} \mathds{1}_{L_\ell}(q) \sum\limits_{\substack{r\in L_{\ell} \cap L_{\ell_1} \cap (-L_{\ell_1}+\ell+\ell_1)}}
		\norm{K^m(\ell)_{r,q} K(\ell_1)_{r,-r+\ell+\ell_1} a_{r-\ell-\ell_1} a_{r-\ell_1} (\NN+1)^\half \xi_\lambda} \times \nonumber\\
	&\quad \times \norm{ b_{-q}(-\ell) (\NN+1)^{-\half} \xi_\lambda}\nonumber\\
	&\leq \sum_{\ell \in \Zstar} (C \hat{V}(\ell))^m 
		\kF^{-1} e(q)^{-1} \sum\limits_{r \in L_\ell}
		\Bigg( \sum_{\ell_1 \in \Zstar} \mathds{1}_{L_{\ell_1} \cap (-L_{\ell_1}+\ell+\ell_1) }(r) |K(\ell_1)_{r,-r+\ell+\ell_1}|^2 \Bigg)^{\half} \times \nonumber\\
	&\quad \times \Bigg( \sum_{\ell_1 \in \Z^3} \norm{a_{r-\ell_1} (\NN+1)^\half \xi_\lambda}^2 \Bigg)^{\half}
		\norm{ a_{-q} a_{-q+\ell} (\NN+1)^{-\half} \xi_\lambda}\nonumber\\
	&\leq \Bigg( \sum_{\ell \in \Zstar} (C \hat{V}(\ell))^{2m} \Bigg)^{\half} 
		\kF^{-1} e(q)^{-1} \sum\limits_{r \in L_\ell} e(r)^{-1} \kF^{-1}
		\Bigg( \sum_{\ell_1 \in \Zstar} \hat{V}(\ell_1)^2 \Bigg)^{\half} \times \nonumber\\
	&\quad \times \norm{(\NN+1) \xi_\lambda}
		\Bigg( \sum_{\ell \in \Z^3} \norm{a_{-q+\ell} a_{-q} (\NN+1)^{-\half} \xi_\lambda}^2 \Bigg)^{\half} \nonumber\\
	&\leq C_\varepsilon C^m \kF^{-1 + \varepsilon} e(q)^{-1}
		\norm{ (\NN+1) \xi_\lambda } \Xi^{\half} \;. \label{eq:estEQ2151_Coulomb}
\end{align}
Note that we used~\eqref{eq:K_element_bounds} $ |K(\ell_1)_{r,-r+\ell+\ell_1}| \le C \kF^{-1} \hat{V}(\ell_1) \lambda_{\ell_1,r}^{-1} $ with $ \lambda_{\ell_1,r} \ge C e(r) $ and then~\eqref{eq:lambdainverse}.

The bound for $ \sum_{\ell} \hat{V}(\ell) < \infty $ is considerably easier and stronger:
\begin{align}
	|\I_0(\ell, \ell_1)|
	&\leq \sum_{r\in L_{\ell} \cap L_{\ell_1} \cap (-L_{\ell_1}+\ell+\ell_1)} \norm{K^m(\ell)_{r,q} K(\ell_1)_{r,-r+\ell+\ell_1} a_{r-\ell-\ell_1} a_{r-\ell_1} \xi_\lambda}\norm{ b_{-q}(-\ell) \xi_\lambda}\nonumber\\
	&\leq (C \hat{V}(\ell))^m \kF^{-1} e(q)^{-1}
		\norm{K(\ell_1)}_{\max,2} \norm{ (\NN+1)^\half \xi_\lambda } \Xi^{\half} \nonumber\\
	&\leq (C \hat{V}(\ell))^m
		\hat{V}(\ell_1)
		\kF^{-\frac 32} e(q)^{-1}
		\norm{ (\NN+1)^\half \xi_\lambda } \Xi^{\half} \;. \label{eq:estEQ2151}
\end{align}
For $ 1 \le j \le m-1 $, which only happens if $ m \ge 2 $, we proceed as follows:
\begin{align}
	&\sum_{\ell,\ell_1 \in \Zstar} \mathds{1}_{L_\ell}(q) |\I_j(\ell, \ell_1)| \nonumber\\
	&\leq \sum_{\ell,\ell_1 \in \Zstar} \mathds{1}_{L_\ell}(q) \sum\limits_{r\in L_{\ell} \cap L_{\ell_1} \cap (-L_{\ell_1}+\ell+\ell_1)}
		\norm{ K^{m-j}(\ell)_{r,q} K(\ell_1)_{r,-r+\ell+\ell_1} a_{r-\ell-\ell_1} a_{r-\ell_1} \xi_\lambda} \times \nonumber\\
	&\quad \times \sum\limits_{s \in L_{\ell}}
		\norm{ K^j(\ell)_{q,s} b_{-s}(-\ell) \xi_\lambda }\nonumber\\
	&\leq \sum_{\ell \in \Zstar} (C \hat{V}(\ell))^{m-j} \kF^{-1} e(q)^{-1}
		\sum_{r \in L_\ell}
		\Bigg( \sum_{\ell_1 \in \Zstar} \mathds{1}_{L_{\ell_1} \cap (-L_{\ell_1}+\ell+\ell_1) }(r) |K(\ell_1)_{r,-r+\ell+\ell_1}|^2 \Bigg)^{\half} \times \nonumber\\
	&\quad \times \Bigg( \sum_{\ell_1 \in \Z^3} \norm{ a_{r-\ell_1} \xi_\lambda}^2 \Bigg)^{\half}
		\norm{K^j(\ell)}_{\max,1}
		\Xi^\half \nonumber\\
	&\leq \sum_{\ell \in \Zstar} (C \hat{V}(\ell))^m \kF^{-1} e(q)^{-1}
		\sum_{r \in L_\ell} e(r)^{-1} \kF^{-1}
		\norm{ (\NN+1)^{\half} \xi_\lambda}
		\Xi^\half \nonumber\\
	&\leq C_\varepsilon C^m \kF^{-1+\varepsilon} e(q)^{-1}
		\norm{ (\NN+1)^{\half} \xi_\lambda}
		\Xi^\half \;. \label{eq:estEQ2152_Coulomb}
\end{align}
The corresponding stronger bound for $ \sum_\ell \hat{V}(\ell) < \infty $ with $ q \in L_\ell $ is
\begin{align}
	|\I_j(\ell, \ell_1)|
	&\leq \sum_{\substack{r\in L_{\ell} \cap L_{\ell_1}\\ \cap (-L_{\ell_1}+\ell+\ell_1)}}
		\norm{ K^{m-j}(\ell)_{r,q} K(\ell_1)_{r,-r+\ell+\ell_1} a_{r-\ell-\ell_1} a_{r-\ell_1} \xi_\lambda}
		\sum_{s \in L_{\ell}}
		\norm{ K^j(\ell)_{q,s} b_{-s}(-\ell) \xi_\lambda }\nonumber\\
	&\leq (C \hat{V}(\ell))^{m-j} \kF^{-1} e(q)^{-1}
		\norm{K(\ell_1)}_{\max,1} \Xi^{\half}
		\norm{K^j(\ell)}_{\max,2}
		\norm{ (\NN+1)^\half \xi_\lambda} \nonumber\\
	&\leq (C \hat{V}(\ell))^m
		\hat{V}(\ell_1)
		\kF^{-\frac 32} e(q)^{-1} \Xi^\half
		\norm{ (\NN+1)^\half \xi_\lambda} \;. \label{eq:estEQ2152}
\end{align}
Finally, for the case $ j = m $, we have
\begin{align}
	&\sum_{\ell,\ell_1 \in \Zstar} \mathds{1}_{L_\ell}(q) |\I_m(\ell, \ell_1)| \nonumber\\
	&\leq \sum_{\ell,\ell_1 \in \Zstar} \mathds{1}_{L_\ell \cap L_{\ell_1} \cap (-L_{\ell_1} + \ell + \ell_1)}(q) \norm{K(\ell_1)_{q,-q+\ell+\ell_1} a_{q-\ell-\ell_1} a_{q-\ell_1} \xi_\lambda} 
		\sum\limits_{ s \in L_{\ell}}
		\norm{ K^m(\ell)_{q,s} b_{-s}(-\ell) \xi_\lambda }\nonumber\\
	&\leq \! \Bigg( \! \sum_{\ell_1 \in \Zstar} \! \hat{V}(\ell_1)^2 \! \Bigg)^{\half} 
		\! \kF^{-\frac 32} e(q)^{-1}
		\Bigg( \! \sum_{\ell,\ell_1 \in \Z^3} \! \norm{a_{q-\ell-\ell_1} a_{q-\ell_1}  \xi_\lambda}^2 \! \Bigg)^{\half}
		\Bigg( \! \sum_{\ell \in \Zstar} (C \hat{V}(\ell))^{2m} \! \Bigg)^{\half} 
		\Bigg( \! \sum\limits_{ s \in \Z^3} \! \norm{ a_{-s} \xi_\lambda }^2 \! \Bigg)^{\half} \nonumber\\
	&\leq C^m \kF^{-\frac 32} e(q)^{-1}
		\norm{(\NN+1) \xi_\lambda}^2 \;. \label{eq:estEQ2153_Coulomb}
\end{align}
The improved bound for $ \sum_\ell \hat{V}(\ell) < \infty $ works as follows:
\begin{align}
	|\I_m(\ell, \ell_1)|
	&\leq \mathds{1}_{L_{\ell_1} \cap (-L_{\ell_1} + \ell + \ell_1)}(q) \norm{K(\ell_1)_{q,-q+\ell+\ell_1} a_{q-\ell-\ell_1} a_{q-\ell_1} \xi_\lambda} \times  \label{eq:estEQ2153} \\
	& \quad \times
		\sum_{ s \in L_{\ell}}
		\norm{ K^m(\ell)_{q,s} b_{-s}(-\ell) \xi_\lambda }\nonumber\\
	&\leq (C \hat{V}(\ell))^m
		\hat{V}(\ell_1)
		\kF^{-\frac 32} e(q)^{-1} \Xi^\half
		\norm{(\NN+1)^\half\xi_\lambda} \;. \qedhere \nonumber
\end{align}
\end{proof}

\begin{lemma} \label{lem:EQ217}
Let $ \sum_{\ell \in \Zstar} \hat{V}(\ell)^2 < \infty $. For $\xi_\lambda = e^{-\lambda S} \Omega$, there exists a constant $ C > 0 $ such that for all $ \lambda \in [0,1] $, odd $ m \in \NNN $, and $ q \in B_{\F}^c $,
\begin{align}
	\abs{\eva{\xi_\lambda,\left(E^{m,4}_{Q_2}(q) + \mathrm{h.c.}\right) \xi_\lambda }}
	\leq C^m \kF^{-1} \Xi^{\half}  e(q)^{-1}
		\norm{(\NN+1) \xi_\lambda } \;. \label{eq:estEQ217_Coulomb}
\end{align}
If $ \sum_{\ell \in \Zstar} \hat{V}(\ell) < \infty $, then
\begin{align}
	\abs{\eva{\xi_\lambda,\left(E^{m,4}_{Q_2}(q) + \mathrm{h.c.}\right) \xi_\lambda }}
	\leq C^m \kF^{-\frac 32} \Xi^{\half}  e(q)^{-1}
		\norm{(\NN+1) \xi_\lambda } \;. \label{eq:estEQ217}
\end{align}
\end{lemma}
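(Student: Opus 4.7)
The proof will follow closely the template established for the neighbouring error terms $E^{m,1}_{Q_2}, E^{m,2}_{Q_2}, E^{m,3}_{Q_2}$ in Lemmas~\ref{lem:EQ211} and~\ref{lem:EQ215}. The operator $E^{m,4}_{Q_2}(q)$ has the structure $b^* a a$ (one pair creation paired with two free annihilations), so my plan is to move the creation operator $b^*_{-s_1}(-\ell_1)$ onto the bra side via the adjoint, and then exploit the presence of a ``bare'' $a_{r-\ell}$ to extract the factor $\Xi^{\frac{1}{2}}$ directly, via $\Vert a_{r-\ell}\xi_\lambda\Vert\le\Xi^{\frac{1}{2}}$, paired with $\Vert a_{r-\ell-\ell_1}\Vert_\text{op}\le 1$. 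This eliminates the cutoff trick over the ball $S$ that was necessary in Lemma~\ref{lem:EQ211}, which explains why the target bound here does not carry an $\varepsilon$-loss in the Coulomb case.

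Next I will split the anticommutator $\Theta^m_K(P^q)(\ell)_{r,-r+\ell+\ell_1}$ via the binomial expansion
\[
	\Theta^m_K(P^q)(\ell)_{r,-r+\ell+\ell_1}
	= \sum_{j=0}^{m}\binom{m}{j}K(\ell)^{m-j}_{r,q}K(\ell)^{j}_{q,-r+\ell+\ell_1},
\]
and treat $j=0$, $1\le j\le m-1$, and $j=m$ separately. The boundary cases $j=0$ (where $K(\ell)^0_{q,-r+\ell+\ell_1}=\delta_{q,-r+\ell+\ell_1}$ forces $r=-q+\ell+\ell_1$) and $j=m$ (where $\delta_{r,q}$ fixes $r=q$) collapse one summation and are therefore individually simpler than the ``interior'' case. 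The interior case uses a Cauchy--Schwarz over $r$ followed by the Hilbert--Schmidt bound $\sum_r|K^{m-j}(\ell)_{r,q}|^2\le(C\hat{V}(\ell))^{2m}\kF^{-1}e(q)^{-1}$ from~\eqref{eq:e(q)_extraction_bounds} on one factor and $\sup_r|K^j(\ell)_{q,-r+\ell+\ell_1}|\le(C\hat{V}(\ell))^{m}\kF^{-1}e(q)^{-1}$ on the other; the boundary cases use the element-wise bound~\eqref{eq:K_element_bounds} directly at the forced value of $r$. In every case the $\ell_1$-structure produces a factor $\sum_{s_1}K(\ell_1)_{r,s_1}b_{-s_1}(-\ell_1)\xi_\lambda$, which I will handle in the same way as in the earlier lemmas: by Cauchy--Schwarz $\norm{\sum_{s_1}K(\ell_1)_{r,s_1}b_{-s_1}(-\ell_1)\xi_\lambda}^2\le \big(\sum_{s_1}|K(\ell_1)_{r,s_1}|^2\big)\norm{\cN^{1/2}\xi_\lambda}^2$.

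The two regimes are then separated only at the summation step over $\ell,\ell_1$. Under the stronger hypothesis $\sum_{\ell\in\Zstar}\hat{V}(\ell)<\infty$, the sum over $\ell_1$ is performed by pulling out $\norm{K(\ell_1)}_{\max,2}\le C\hat{V}(\ell_1)\kF^{-1/2}$ and using $\sum_{\ell_1}\hat{V}(\ell_1)<\infty$; combined with $\sum_\ell\hat{V}(\ell)^m<\infty$ and the $e(q)^{-1}\kF^{-1}$ from the $K^m$-matrix, this produces the cleaner $\kF^{-3/2}\Xi^{1/2}e(q)^{-1}$ bound~\eqref{eq:estEQ217}. Under the Coulomb hypothesis $\sum_{\ell\in\Zstar}\hat{V}(\ell)^2|\ell|^\alpha<\infty$, I instead apply a joint Cauchy--Schwarz over $(\ell_1,s_1)$, using that $\sum_{\ell_1,s_1}\norm{b_{-s_1}(-\ell_1)\xi_\lambda}^2\le\norm{(\cN+1)\xi_\lambda}^2$ (which follows from two successive applications of $\sum_p\norm{a_p\Phi}^2\le\norm{\cN^{1/2}\Phi}^2$ and the pull-through $\cN a_q=a_q(\cN-1)$) together with $\sum_{\ell_1}\sum_{s_1}|K(\ell_1)_{r,s_1}|^2\le C\kF^{-1}\sum_{\ell_1}\hat{V}(\ell_1)^2$, to close the bound at the $\kF^{-1}\Xi^{1/2}e(q)^{-1}\norm{(\cN+1)\xi_\lambda}$ level announced in~\eqref{eq:estEQ217_Coulomb}.

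The main obstacle will be the Coulomb case when $m=1$, where one does not have access to the decay $\sum_\ell\hat{V}(\ell)^m<\infty$ on the ``free'' $\ell$-sum, so one cannot separate the two potentials in a product of univariate sums. The way around this is to realise that after fixing $r$ by the delta function in the $j=0$ or $j=m$ collapse, the remaining matrix element $K(\ell)^{1}_{r,q}$ comes with the $e(q)^{-1}$ factor via~\eqref{eq:K_element_bounds}, and the joint Cauchy--Schwarz on $(\ell_1,s_1)$ consumes only the squared $L^2$-summability $\sum_{\ell_1}\hat{V}(\ell_1)^2<\infty$ -- no $|\ell|^\alpha$ weight is required here, and no logarithmic $\kF^\varepsilon$ loss from~\eqref{eq:lambdainverse} is incurred, which is precisely why the bound is stated with a clean $C^m$ rather than $C_\varepsilon C^m$. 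Summing the binomial factor $\sum_j\binom{m}{j}\le 2^m$ into the overall $C^m$ then completes the proof. \qed
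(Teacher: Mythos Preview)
Your plan matches the paper's strategy (binomial split of $\Theta^m_K(P^q)$, case analysis on $j$, Cauchy--Schwarz, extraction of $\Xi^{1/2}$) and works as stated for the $\ell^1$ case and for the interior range $1\le j\le m-1$ (where $m\ge 2$ guarantees $\sum_\ell\hat V(\ell)^m<\infty$).

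The gap is precisely in the paragraph you flagged as the ``main obstacle'': the boundary cases $j\in\{0,m\}$ at $m=1$ under the Coulomb hypothesis. Your proposed fix --- Cauchy--Schwarz only over $(\ell_1,s_1)$, yielding $\sum_{\ell_1}\hat V(\ell_1)^2<\infty$ --- leaves the $\ell$-sum untouched. After extracting $\Xi^{1/2}$ and the $e(q)^{-1}\kF^{-1}$ from $|K(\ell)_{r,q}|$, you are left with a factor $\sum_\ell \hat V(\ell)$, which diverges for Coulomb. Saying that ``$K(\ell)^1_{r,q}$ comes with the $e(q)^{-1}$ factor'' is correct but irrelevant to the summability in $\ell$.

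The paper resolves this by organising the Cauchy--Schwarz so that the $\ell$-sum itself is squared. For $j=0$, the delta $r=-q+\ell+\ell_1$ means that as $\ell$ varies, $r$ ranges over $L_{\ell_1}$; doing Cauchy--Schwarz in $\ell$ pairs $\sum_\ell|K^m(\ell)_{r,q}|^2$ (giving $\big(\sum_\ell\hat V(\ell)^{2m}\big)^{1/2}$, finite for $m=1$) against $\sum_\ell|K(\ell_1)_{r,s_1}|^2=\sum_{r\in L_{\ell_1}}|K(\ell_1)_{r,s_1}|^2$, i.e.\ the Hilbert--Schmidt norm $\|K(\ell_1)\|_{\HS}^2$. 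For $j=m$ one instead extracts $\Xi^{1/2}$ from the $b_{-s_1}(-\ell_1)$ side via $\|K(\ell_1)\|_{\max,1}$, and then performs a joint Cauchy--Schwarz over $(\ell,\ell_1)$ using $\sum_{\ell,\ell_1}\|a_{q-\ell-\ell_1}a_{q-\ell}\xi_\lambda\|^2\le\|(\cN+1)\xi_\lambda\|^2$. In both cases the point is that the $\ell$-variable must be inside a Cauchy--Schwarz, not outside it.
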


\begin{proof}
Splitting the anticommutator in $ E^{m,4}_{Q_2}(q) $ by \eqref{eq:q-q} gives
\begin{equation} \label{eq:EQ2171}
\begin{aligned}
	& \abs{\eva{\xi_\lambda,\left(E^{m,4}_{Q_2}(q) + \mathrm{h.c.}\right) \xi_\lambda }}
	\le 4 \sum_{j=0}^m {{m}\choose j} \sum_{\ell,\ell_1 \in \Zstar}\!\! \mathds{1}_{L_\ell}(q) |\I_j(\ell, \ell_1)| \;,
	\end{aligned}
\end{equation}
where
\begin{equation}
\begin{aligned}
	& \I_j(\ell, \ell_1)
	\coloneq \sum_{\substack{r\in L_{\ell} \cap L_{\ell_1} \cap (-L_{\ell}+\ell+\ell_1)\\ s_1 \in L_{\ell_1}}} \!\!
		\eva{\xi_\lambda, K^{m-j}(\ell)_{r,q} K^{j}(\ell)_{q,-r+\ell+\ell_1} K(\ell_1)_{r,s_1} b^*_{-s_1}(-\ell_1) a_{r-\ell-\ell_1} a_{r-\ell} \xi_\lambda} \;. \\
\end{aligned}
\end{equation}
For the case $ j = 0 $, Lemma~\ref{lem:normsk} yields
\begin{align}
	&\sum_{\ell,\ell_1 \in \Zstar} \mathds{1}_{L_\ell}(q) |\I_0(\ell, \ell_1)| \nonumber\\
	&\leq \sum_{\ell,\ell_1 \in \Zstar} \mathds{1}_{L_\ell \cap (-L_\ell + \ell + \ell_1) \cap (-L_{\ell_1} + \ell + \ell_1)}(q) \sum\limits_{s_1 \in L_{\ell_1}}
		\norm{K(\ell_1)_{-q+\ell+\ell_1,s_1} b_{-s_1}(-\ell_1)  \xi_\lambda} \times \nonumber\\
	&\quad \times \norm{ K^m(\ell)_{-q+\ell+\ell_1,q}a_{-q}a_{-q+\ell_1}  \xi_\lambda } \nonumber\\
	&\leq \sum_{\ell_1 \in \Zstar}
		\Bigg( \sum_{\ell \in \Zstar} \mathds{1}_{-L_{\ell_1} + \ell + \ell_1}(q) \sum\limits_{s_1 \in L_{\ell_1}} |K(\ell_1)_{-q+\ell+\ell_1,s_1}|^2 \Bigg)^{\half}
		\Bigg( \sum\limits_{s_1 \in L_{\ell_1}} \norm{ b_{-s_1}(-\ell_1)  \xi_\lambda}^2 \Bigg)^{\half} \times \nonumber\\
	&\quad \times \Bigg( \sum_{\ell \in \Zstar} (C \hat{V}(\ell))^{2m} \Bigg)^{\half}
		e(q)^{-1} \kF^{-1}
		\norm{a_{-q}a_{-q+\ell_1}  \xi_\lambda } \nonumber\\
	&\leq C^m \Bigg( \sum_{\ell_1 \in \Zstar}
		\norm{K(\ell_1)}_{\HS}^2 \Bigg)^{\half}
		\Bigg( \sum_{\ell_1,s_1 \in \Z^3} \norm{a_{-s_1+\ell_1} a_{-s_1} \xi_\lambda}^2 \Bigg)^{\frac 12}
		\kF^{-1} e(q)^{-1} \Xi^\half \nonumber\\
	&\leq C^m \norm{(\NN+1) \xi_\lambda}
		\kF^{-1} e(q)^{-1} \Xi^\half \;. \label{eq:estEQ2171_Coulomb}
\end{align}
In case $ \sum_\ell \hat{V}(\ell) < \infty $, we get a stronger bound:
\begin{align}
	|\I_0(\ell, \ell_1)|
	&\leq \sum_{s_1 \in L_{\ell_1}}
		\norm{K(\ell_1)_{-q+\ell+\ell_1,s_1} b_{-s_1}(-\ell_1) \xi_\lambda}
		\norm{ K^m(\ell)_{-q+\ell+\ell_1,q}a_{-q}a_{-q+\ell_1} \xi_\lambda } \nonumber\\
	&\leq (C \hat{V}(\ell))^m
		\hat{V}(\ell_1)
		\kF^{-\frac 32} e(q)^{-1}
		\norm{(\NN+1)^\half\xi_\lambda} \Xi^\half \;. \label{eq:estEQ2171}
\end{align}
In case $ j = m $, the bound~\eqref{eq:estEQ2171} is analogous, while~\eqref{eq:estEQ2171_Coulomb} is replaced by
\begin{align}
	&\sum_{\ell,\ell_1 \in \Zstar} \mathds{1}_{L_\ell}(q) |\I_m(\ell, \ell_1)| \nonumber\\
	&\leq \sum_{\ell,\ell_1 \in \Zstar} \mathds{1}_{L_\ell \cap L_{\ell_1} \cap (-L_\ell + \ell + \ell_1)}(q) \sum\limits_{s_1 \in L_{\ell_1}}
		\norm{K(\ell_1)_{q,s_1} b_{-s_1}(-\ell_1)  \xi_\lambda} 
		\norm{ K^m(\ell)_{-q+\ell+\ell_1,q}a_{q-\ell-\ell_1}a_{q-\ell}  \xi_\lambda } \nonumber\\
	&\leq \sum_{\ell,\ell_1 \in \Zstar} 
		\norm{K(\ell_1)}_{\max,1} \Xi^{\half}
		k_{\F}^{-1} e(q)^{-1} (C \hat{V}(\ell))^m \norm{ a_{q-\ell-\ell_1}a_{q-\ell}  \xi_\lambda } \nonumber\\
	&\leq \Bigg( \sum_{\ell,\ell_1 \in \Zstar} \hat{V}(\ell_1)^2 (C \hat{V}(\ell))^{2m} \Bigg)^{\half}
		\Xi^{\half}
		k_{\F}^{-1} e(q)^{-1} \Bigg( \sum_{\ell_1,\ell \in \Zstar}  \norm{ a_{q-\ell-\ell_1}a_{q-\ell}  \xi_\lambda }^2 \Bigg)^{\half} \nonumber\\
	&\leq C^m \Xi^\half
		\kF^{-1} e(q)^{-1}  \norm{(\NN+1) \xi_\lambda} \;.
\end{align}
Finally, for the case $ 1 \le j \le m-1 $, which only occurs for $ m \ge 2 $:
\begin{align}
	&\sum_{\ell,\ell_1 \in \Zstar} \mathds{1}_{L_\ell}(q) |\I_j(\ell, \ell_1)| \nonumber\\
	&\leq \sum_{\ell,\ell_1 \in \Zstar} \mathds{1}_{L_\ell}(q) \sum\limits_{r\in L_{\ell} \cap L_{\ell_1} \cap (-L_{\ell}+\ell+\ell_1)}
		\sum\limits_{s_1\in L_{\ell_1}} 
		\norm{K(\ell_1)_{r,s_1} b_{-s_1}(-\ell_1) \xi_\lambda} \times \nonumber\\
	&\quad \times \norm{ K^{m-j}(\ell)_{r,q} K^j(\ell)_{q,-r+\ell+\ell_1} a_{r-\ell-\ell_1} a_{r-\ell} \xi_\lambda } \nonumber\\
	&\leq \sum_{\ell,\ell_1 \in \Zstar} \mathds{1}_{L_\ell}(q) \Bigg( \sum\limits_{r,s_1\in L_{\ell_1}} 
		|K(\ell_1)_{r,s_1}|^2 \Bigg)^{\half}
		\Bigg( \sum\limits_{s_1 \in \Z^3}
		\norm{a_{-s_1} a_{-s_1+\ell_1} \xi_\lambda}^2 \Bigg)^{\half} \times \nonumber\\
		&\quad \times \kF^{-1} e(q)^{-1} (C \hat{V}(\ell))^{m-j}
		\Bigg( \sum\limits_{r\in (-L_{\ell}+\ell+\ell_1)} 
		|K^j(\ell)_{q,-r+\ell+\ell_1}|^2
		\norm{ a_{r-\ell} \xi_\lambda }^2 \Bigg)^{\half} \nonumber\\
	&\leq  \Bigg( \sum_{\ell_1 \in \Zstar} 
		\hat{V}(\ell_1)^2 \Bigg)^{\half}
		\Bigg( \sum\limits_{\ell_1,s_1 \in \Z^3}
		\norm{a_{-s_1+\ell_1} a_{-s_1} \xi_\lambda}^2 \Bigg)^{\half}
		\kF^{-\frac 32} e(q)^{-1} \sum_{\ell \in \Zstar} (C \hat{V}(\ell))^m \Xi^\half \nonumber\\
	&\leq C^m \kF^{-\frac 32} e(q)^{-1}
		\norm{(\NN+1) \xi_\lambda} \Xi^{\half} \;. \label{eq:estEQ2172}
\end{align}
As later we will show $ \Xi \sim \kF^{-1} $, the bound \eqref{eq:estEQ2172} will be of order $ \sim \kF^{-2} $. The same is applicable for $ \sum_\ell \hat{V}(\ell) < \infty $.
\end{proof}

\begin{lemma} \label{lem:EQ213}
Let $ \sum_{\ell \in \Zstar} \hat{V}(\ell)^2 < \infty $. For $\xi_\lambda = e^{-\lambda S} \Omega$, there exists a constant $ C > 0 $ such that for all $ \lambda \in [0,1] $, odd $ m \in \NNN $, and $ q \in B_{\F}^c $,
\begin{equation}
\begin{aligned}
	\abs{\eva{\xi_\lambda,\left(E^{m,5}_{Q_2}(q) + E^{m,6}_{Q_2}(q) + E^{m,7}_{Q_2}(q) + \mathrm{h.c.}\right) \xi_\lambda }}
	\leq C^m \left( \kF^{-1} \Xi^{\half}
		+ \kF^{-\frac 32} \right)
		e(q)^{-1}
		\norm { (\NN+1)^{\frac 32} \xi_\lambda }^2 \,. \label{eq:estEQ213_Coulomb}
\end{aligned}
\end{equation}
If $ \sum_{\ell \in \Zstar} \hat{V}(\ell) < \infty $, then
\begin{equation}
\begin{aligned}
	\abs{\eva{\xi_\lambda,\left(E^{m,5}_{Q_2}(q) + E^{m,6}_{Q_2} (q) + E^{m,7}_{Q_2}(q) + \mathrm{h.c.}\right) \xi_\lambda }}
	\leq C^m \kF^{-\frac 32} \Xi^{\half} e(q)^{-1}
		\norm { (\NN+1)^{\frac 32} \xi_\lambda } \;. \label{eq:estEQ213}
\end{aligned}
\end{equation}
\end{lemma}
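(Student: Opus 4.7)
The plan is to follow the same template already established in Lemmas \ref{lem:EQ111}, \ref{lem:EQ112}, \ref{lem:EQ211}, \ref{lem:EQ215}, and \ref{lem:EQ217}. First, for each of $E^{m,5}_{Q_2}(q)$, $E^{m,6}_{Q_2}(q)$, $E^{m,7}_{Q_2}(q)$, I would split the anticommutator $\Theta^m_K(P^q)(\ell)_{r,s} = \sum_{j=0}^m \binom{m}{j} K^{m-j}(\ell)_{r,q} K^{j}(\ell)_{q,s}$ using \eqref{eq:q-q}, forcing the constraint $q \in L_\ell$. The resulting quartic structure $a^*_{r-\ell_1} a^*_{\pm s \mp \ell_1} a_{\mp s \pm \ell} a_{r \mp \ell}$ is then handled by Cauchy--Schwarz in the form $|\langle \xi_\lambda, a^*_\alpha a^*_\beta a_\gamma a_\delta \xi_\lambda \rangle| = |\langle a_\beta a_\alpha \xi_\lambda, a_\gamma a_\delta \xi_\lambda\rangle|$, together with the $K$-matrix estimates of Lemma~\ref{lem:normsk} and the identity $\sum_{p \in \Z^3}\|a_p \Psi\|^2 = \|\cN^{1/2} \Psi\|^2$ to convert summed annihilation operators into number operator expectation values.

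As in the previous lemmas, three subcases arise. For $j=0$ the factor $K^j(\ell)_{q,s}$ collapses to $\delta_{q,s}$, leaving only $K^m(\ell)_{r,q}$ and $K(\ell_1)_{r,s}$; one of the four annihilation operators then carries the index $q$ (up to a shift by $\ell$) and can be used to extract a factor $\Xi^{1/2}$ via $\|a_{q'}\xi_\lambda\| \le \Xi^{1/2}$. This will produce the $\kF^{-1}\Xi^{1/2} e(q)^{-1}$ contribution in \eqref{eq:estEQ213_Coulomb}, using $\sum_\ell \hat{V}(\ell)^{2m}$ and $\sum_{\ell_1} \|K(\ell_1)\|_{\max,2}^2$. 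The case $j=m$ is symmetric. For $1 \le j \le m-1$ (only possible when $m \ge 2$), two genuine $K(\ell)^{\bullet}$ matrices appear; bounding each by $\|K^j(\ell)\|_{\max,2} \le (C\hat V(\ell))^j \kF^{-1/2}$ and $\|K^{m-j}(\ell)\|_{\max,2}$ similarly, and using Cauchy--Schwarz to sum over the extra indices, yields the $\kF^{-3/2} e(q)^{-1}$ term, with all $\ell$-sums converging under the assumption $\sum_\ell \hat V(\ell)^2 < \infty$.

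For the improved bound \eqref{eq:estEQ213} under the stronger assumption $\sum_\ell \hat V(\ell) < \infty$, I would replace one of the annihilation operator factors by the stronger Lemma~\ref{lem:Xi_halfminusepsilon} estimate $\|a_{q'} (\cN+1)\xi_\lambda\| \le C_\varepsilon \Xi^{1/2-\varepsilon}$, while no longer needing Cauchy--Schwarz in the sum over $\ell,\ell_1$; instead one directly sums $\hat V(\ell)^m$ and $\hat V(\ell_1)$ separately. This produces the single factor of $\|(\cN+1)^{3/2}\xi_\lambda\|$ (rather than its square) and replaces $\kF^{-1}\Xi^{1/2}$ by $\kF^{-3/2}\Xi^{1/2}$. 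Note that the $\Xi^{1-\varepsilon}$ refinement used in Lemmas~\ref{lem:EQ111} and~\ref{lem:EQ211} is not necessary here: $E^{m,5}_{Q_2}, E^{m,6}_{Q_2}, E^{m,7}_{Q_2}$ do not carry an $a_q$ at index $q$ exactly, so the resulting bound \eqref{eq:estEQ213} naturally only involves $\Xi^{1/2}$.

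The main obstacle is bookkeeping: the three constraint sets $L_\ell \cap L_{\ell_1}$, $(L_\ell - \ell) \cap (L_{\ell_1}-\ell_1)$, and the corresponding reflected variants cause the shift arguments $\pm s \mp \ell$ to combine awkwardly with the momentum labels on the $K$-matrices, and the Cauchy--Schwarz pairing has to be chosen so that each resulting $\sum_p \|a_p \cdots \xi_\lambda\|^2$ genuinely corresponds to a free index, without doubling any constraint. Once the bookkeeping is done, the estimates proceed mechanically and add up under the triangle inequality, with $2m+1$ binomial terms summing to $C^m$ as in the earlier lemmas.
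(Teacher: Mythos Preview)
Your plan matches the paper's approach: split via \eqref{eq:q-q}, treat $j=0$, $j=m$, and $1\le j\le m-1$ separately with Cauchy--Schwarz and the $K$-bounds of Lemma~\ref{lem:normsk}. Two points of departure are worth flagging.

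First, the claim that ``$j=m$ is symmetric'' to $j=0$ is not accurate for $E^{m,5}_{Q_2}$. When $j=0$ the collapse $s+\ell=q$ produces a genuine $a_{-q}$ with a fixed index, from which $\Xi^{1/2}$ is extracted to give the $\kF^{-1}\Xi^{1/2}e(q)^{-1}$ contribution. When $j=m$ the collapse is $r=q$, and the surviving operators carry indices $q-\ell$, $q-\ell_1$, $-s-\ell$, $-s-\ell_1$, all of which depend on a summation variable; extracting $\Xi^{1/2}$ from any of them costs you the ability to close the corresponding sum via $\sum_p\|a_p\Psi\|^2=\|\cN^{1/2}\Psi\|^2$, which is fatal at $m=1$ under only $\sum\hat V(\ell)^2<\infty$. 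The paper therefore does \emph{not} extract $\Xi$ in the $j=m$ case (see \eqref{eq:estEQ2131_Coulomb_bis}) and instead obtains the $\kF^{-3/2}e(q)^{-1}\|(\cN+1)\xi_\lambda\|^2$ term there, while the intermediate range $1\le j\le m-1$ (only present for $m\ge 2$) actually gives the smaller $\kF^{-2}\Xi^{1/2}e(q)^{-1}$, not $\kF^{-3/2}$.

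Second, for the improved bound \eqref{eq:estEQ213} the paper does not invoke Lemma~\ref{lem:Xi_halfminusepsilon}. Since $j=0$ already furnishes a true $a_{-q}$ (or its reflected analogue), one gets $\|a_{-q}\xi_\lambda\|\le\Xi^{1/2}$ directly, and under $\sum_\ell\hat V(\ell)<\infty$ the $\ell,\ell_1$ sums decouple without Cauchy--Schwarz, yielding $\kF^{-3/2}\Xi^{1/2}$ (or even $\kF^{-2}\Xi^{1/2}$) with a single $\|(\cN+1)^{3/2}\xi_\lambda\|$ factor. Your proposed use of Lemma~\ref{lem:Xi_halfminusepsilon} would give $\Xi^{1/2-\varepsilon}$ instead of $\Xi^{1/2}$, which is strictly weaker than the stated bound (though still sufficient downstream).
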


\begin{proof}
We start with estimating $ E^{m,5}_{Q_2}(q) $: Splitting via \eqref{eq:q-q} yields
\begin{equation} \label{eq:EQ2131}
\begin{aligned}
	& \abs{\eva{\xi_\lambda,\left(E^{m,5}_{Q_2}(q) +\mathrm{h.c.}\right) \xi_\lambda }}
	\le 4 \sum_{j=0}^m {{m}\choose j} \sum_{\ell,\ell_1  \in \Zstar}\!\! \mathds{1}_{L_\ell}(q) |\I_j(\ell, \ell_1)| \;,
	\end{aligned}
\end{equation}
where
\begin{equation}
\begin{aligned}
	& \I_j(\ell, \ell_1)
	\coloneq \sum_{\substack{r\in L_{\ell} \cap L_{\ell_1}\\ s \in (L_{\ell} - \ell) \cap (L_{\ell_1} - \ell_1)}}
		\eva{\xi_\lambda, K^{m-j}(\ell)_{r,q} K^{j}(\ell)_{q,s+\ell} K(\ell_1)_{r,s+\ell_1} a^*_{r-\ell_1} a^*_{-s-\ell_1} a_{-s-\ell} a_{r-\ell} \xi_\lambda} \;. \\
\end{aligned}
\end{equation}
For the case $ j = 0 $, the Cauchy--Schwarz inequality and Lemma~\ref{lem:normsk} implies
\begin{align}
	&\sum_{\ell,\ell_1 \in \Zstar} \mathds{1}_{L_\ell}(q) |\I_0(\ell, \ell_1)| \nonumber\\
	&\leq \sum_{\ell,\ell_1 \in \Zstar} \mathds{1}_{L_\ell \cap (L_{\ell_1}+\ell-\ell_1)}(q) \sum\limits_{r\in L_{\ell_1} \cap L_\ell}
		\norm{ K(\ell_1)_{r,q-\ell+\ell_1} a_{-q+\ell-\ell_1} a_{r-\ell_1} \xi_\lambda}
		\norm{ K^m(\ell)_{r,q} a_{-q}a_{r-\ell} \xi_\lambda } \nonumber\\
	&\leq \sum_{\ell_1 \in \Zstar}
		\Bigg( \sum_{r\in L_{\ell_1}} \sum_{\ell \in \Z^3} \mathds{1}_{L_{\ell_1}+\ell-\ell_1}(q) |K(\ell_1)_{r,q-\ell+\ell_1}|^2 \Bigg)^{\half} \times \nonumber\\
	&\quad \times \kF^{-1} e(q)^{-1} \Bigg( \sum_{\ell \in \Zstar} (C \hat{V}(\ell))^{2m} \sum_{r\in \Z^3} \norm{a_{-q+\ell-\ell_1} a_{r-\ell_1} \xi_\lambda}^2
		\norm{ a_{-q} \xi_\lambda }^2 \Bigg)^{\half} \nonumber\\
	&\leq \Bigg( \sum_{\ell_1 \in \Zstar} \hat{V}(\ell_1)^2 \Bigg)^{\half} \kF^{-1} e(q)^{-1}
		\Bigg( \sum_{\ell \in \Zstar} (C \hat{V}(\ell))^{2m} \sum_{r,\ell_1 \in \Z^3} \norm{a_{r-\ell_1} a_{-q+\ell-\ell_1} \xi_\lambda}^2 \Bigg)^{\half} \Xi^{\half} \nonumber\\
	&\leq C^m \kF^{-1} e(q)^{-1}
		\norm{ (\NN+1) \xi_\lambda} \Xi^{\half} \;. \label{eq:estEQ2131_Coulomb}
\end{align}
In case $ \sum_\ell \hat{V}(\ell) < \infty $, we get a simpler and stronger bound:
\begin{align}
	|\I_0(\ell, \ell_1)|
 	&\leq \left(\sum_{r\in L_{\ell_1}} \norm{ K(\ell_1)_{r,q-\ell+\ell_1} a_{r-\ell_1}(\NN+1)^{\half} \xi_\lambda}^2\right)^\half \times \nonumber\\
	& \quad \times \left(\sum_{r\in L_{\ell}} \norm{ K^m(\ell)_{r,q} a_{-q}a_{r-\ell} (\NN+1)^{-\half} \xi_\lambda }^2 \right)^\half \nonumber\\
	&\leq (C \hat{V}(\ell))^m \hat{V}(\ell_1) \kF^{-2} e(q)^{-1} \norm{ (\NN+1) \xi_\lambda} \norm{ a_{-q} \xi_\lambda } \nonumber \\
	&\leq (C \hat{V}(\ell))^m
		\hat{V}(\ell_1)
		\kF^{-2} e(q)^{-1}
		\norm{ (\NN+1) \xi_\lambda} \Xi^{\half} \;. \label{eq:estEQ2131}
\end{align}
In the case $ j = m $, we proceed as follows:
\begin{align}
	&\sum_{\ell,\ell_1 \in \Zstar} \mathds{1}_{L_\ell}(q) |\I_m(\ell, \ell_1)| \nonumber\\
	&\leq \sum_{\ell,\ell_1 \in \Zstar} \mathds{1}_{L_\ell \cap L_{\ell_1}}(q) \sum\limits_{s \in (L_{\ell_1} - \ell_1) \cap (L_\ell - \ell)}
		\norm{ K(\ell_1)_{q,s+\ell_1} a_{-s-\ell_1} a_{q-\ell_1} \xi_\lambda}
		\norm{ K^m(\ell)_{q,s+\ell} a_{-s-\ell} a_{q-\ell} \xi_\lambda } \nonumber\\
	&\leq \sum_{\ell,\ell_1 \in \Zstar}
		\norm{K(\ell_1)}_{\max,2}
		\norm{  a_{q-\ell_1} \xi_\lambda}
		e(q)^{-1} \kF^{-1} (C \hat{V}(\ell))^m
	\Bigg( \sum_{s\in \Z^3} \norm{a_{-s-\ell} a_{q-\ell} \xi_\lambda }^2 \Bigg)^{\half} \nonumber\\
	&\leq e(q)^{-1} \kF^{-\frac 32}
		\Bigg( \sum_{\ell_1 \in \Zstar} \hat{V}(\ell_1)^2 \Bigg)^{\half}
		\Bigg( \sum_{\ell \in \Zstar} (C \hat{V}(\ell))^{2m} \Bigg)^{\half}
		\Bigg( \sum_{s, \ell, \ell_1 \in \Z^3} \norm{ a_{q-\ell_1} \xi_\lambda}^2
		\norm{a_{-s-\ell} a_{q-\ell} \xi_\lambda }^2 \Bigg)^{\half} \nonumber\\
	&\leq C^m e(q)^{-1} \kF^{-\frac 32}
		\norm{(\NN+1) \xi_\lambda }^2
		\;. \label{eq:estEQ2131_Coulomb_bis}
\end{align}
For $ \sum_\ell \hat{V}(\ell) < \infty $, an analogous bound to~\eqref{eq:estEQ2131} also applies to $ |\I_m(\ell, \ell_1)| $. For the case $ 1 \le j \le m-1 $, we have $ m \ge 2 $, so $ \sum_\ell \hat{V}(\ell)^m < \infty $:
\begin{align}
	&\sum_{\ell,\ell_1 \in \Zstar} \mathds{1}_{L_\ell}(q) |\I_j(\ell, \ell_1)| \nonumber\\
	&\leq \sum_{\ell,\ell_1 \in \Zstar} \mathds{1}_{L_\ell}(q) \sum\limits_{\substack{r\in L_{\ell} \cap L_{\ell_1}\\ s \in (L_{\ell} - \ell) \cap (L_{\ell_1} - \ell_1)}}
		\norm{ K^{j}(\ell)_{q,s+\ell} a_{-s-\ell_1} a_{r-\ell_1} (\NN+1)^{\half} \xi_\lambda} \nonumber \times \\ 
	&\quad \times \norm{ K^{m-j}(\ell)_{r,q} K(\ell_1)_{r,s+\ell_1} a_{-s-\ell}a_{r-\ell} (\NN+1)^{-\half} \xi_\lambda } \nonumber\\
	&\leq \sum_{\ell,\ell_1 \in \Zstar} \mathds{1}_{L_\ell}(q)
		\sum_{r\in \Z^3} \sum_{s \in L_{\ell} - \ell}
		|K^{j}(\ell)_{q,s+\ell}|
		\norm{ a_{-s-\ell_1} a_{r-\ell_1} (\NN+1)^{\half} \xi_\lambda} \times \nonumber\\
	&\quad \times (C \hat{V}(\ell))^{m-j} \hat{V}(\ell_1) \kF^{-2} e(q)^{-1}
		\norm{ a_{-s-\ell}a_{r-\ell} (\NN+1)^{-\half} \xi_\lambda } \nonumber\\
	&\leq \sum_{\ell \in \Zstar} \mathds{1}_{L_\ell}(q) (C \hat{V}(\ell))^{m-j} \kF^{-2} e(q)^{-1}
		\sum_{s \in L_{\ell} - \ell}
		|K^{j}(\ell)_{q,s+\ell}| \times \nonumber\\
	&\quad \times \Bigg( \sum_{r,\ell_1 \in \Z^3} \norm{a_{r-\ell_1} a_{-s-\ell_1} (\NN+1)^{\half} \xi_\lambda}^2 \Bigg)^{\half}
		\Bigg( \sum_{r \in \Z^3} \norm{a_{r-\ell} a_{-s-\ell} (\NN+1)^{-\half} \xi_\lambda }^2 \Bigg)^{\half} \nonumber\\
	&\leq C^m \kF^{-2} e(q)^{-1}
		\norm{(\NN+1)^{\frac 32} \xi_\lambda} \Xi^{\half} \;. \label{eq:estEQ2132}
\end{align} 
This establishes the desired bound for $ E^{m,5}_{Q_2}(q) $.

Regarding $ E^{m,6}_{Q_2}(q) $ and $ E^{m,7}_{Q_2}(q) $, the case $ 1 \le j \le m-1 $ is treated as in~\eqref{eq:estEQ2132} and the case $ j \in \{0,m\} $ for $ \sum_\ell \hat{V}(\ell) < \infty $ is treated as in~\eqref{eq:estEQ2131}. The general case $ j \in \{0,m\} $ is treated as in~\eqref{eq:estEQ2131_Coulomb_bis} for $ E^{m,6}_{Q_2}(q) $ and as in~\eqref{eq:estEQ2131_Coulomb} for $ E^{m,7}_{Q_2}(q) $. 
\end{proof}

\begin{lemma} \label{lem:EQ218}
Let $ \sum_{\ell \in \Zstar} \hat{V}(\ell)^2 < \infty $. For $\xi_\lambda = e^{-\lambda S} \Omega$, there exists a constant $ C > 0 $ such that for all $ \lambda \in [0,1] $, odd $ m \in \NNN $, and $ q \in B_{\F}^c $,
\begin{align}
	\abs{\eva{\xi_\lambda,\left(E^{m,8}_{Q_2}(q) + E^{m,9}_{Q_2}(q) + \mathrm{h.c.}\right) \xi_\lambda }}
	\leq C^m \left( \kF^{-\frac 32} \Xi^\half
		+ \kF^{-1} \Xi \right)
		e(q)^{-1} \norm{(\NN+1)^{\half} \xi_\lambda} \;.\label{eq:estEQ218}
\end{align}
\end{lemma}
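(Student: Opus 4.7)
The plan is to exploit that $E_{Q_2}^{m,8}(q)$ and $E_{Q_2}^{m,9}(q)$ are one-body operators diagonal in momentum space, so using $\eva{\xi_\lambda, a_p^* a_p \xi_\lambda} = \norm{a_p\xi_\lambda}^2$ we have
\begin{equation*}
|\eva{\xi_\lambda, E_{Q_2}^{m,8}(q)\xi_\lambda}| \le 2 \sum_{\ell,\ell_1,r} |\Theta^m_K(P^q)(\ell)_{r,-r+\ell+\ell_1}|\,|K(\ell_1)_{r,-r+\ell+\ell_1}|\,\norm{a_{r-\ell_1}\xi_\lambda}^2 ,
\end{equation*}
and analogously for $E_{Q_2}^{m,9}$ with $a_{r-\ell-\ell_1}$ in place of $a_{r-\ell_1}$. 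I would expand the anticommutator via $\Theta^m_K(P^q)(\ell)_{r,s} = \sum_{j=0}^m \binom{m}{j} K^{m-j}(\ell)_{r,q} K^j(\ell)_{q,s}$ and estimate each $j$-term by combining Lemma~\ref{lem:normsk} with two complementary bounds on $\norm{a_p\xi_\lambda}^2$.

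For the $\kF^{-1}\Xi e(q)^{-1}$ contribution I would bound $\norm{a_p\xi_\lambda}^2 \le \Xi$ trivially, reducing the estimate to a pure c-number sum with the same combinatorial structure as $n^{\ex,m}(q)$ in~\eqref{eq:nqexm}. That sum is controlled by $C^m\kF^{-1}e(q)^{-1}$ via the same argument used to establish the weaker-assumption bound~\eqref{eq:main_prop_exchange} for $n^{\ex}(q)$: one uses the pointwise estimate $|K^{m-j}(\ell)_{r,q}|\le (C\hat{V}(\ell))^{m-j}\kF^{-1}e(q)^{-1}$ from~\eqref{eq:e(q)_extraction_bounds} for the $q$-carrying factor, Cauchy-Schwarz in $\ell_1$ to reduce the $\ell_1$-sum to the convergent quantity $(\sum_{\ell_1}\hat{V}(\ell_1)^2)^{1/2}$, and Lemma~\ref{lem:lambdainverse} to bound the resulting $\sum_r e(r)^{-1}\le C_\varepsilon\kF^{1+\varepsilon}$ over the constrained $r$-sum.

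For the $\kF^{-3/2}\Xi^{1/2}e(q)^{-1}\norm{(\NN+1)^{1/2}\xi_\lambda}$ contribution I would interpolate $\norm{a_p\xi_\lambda}^2 \le \Xi^{1/2}\norm{a_p\xi_\lambda}$ and then apply Cauchy-Schwarz in $r$ to factor out $(\sum_r\norm{a_{r-\ell_1}\xi_\lambda}^2)^{1/2}\le \norm{\NN^{1/2}\xi_\lambda}$ after relabeling $p = r - \ell_1$. The remaining squared-coefficient sum is bounded via $\norm{K(\ell_1)}_{\max,2}\le C\hat{V}(\ell_1)\kF^{-1/2}$ from Lemma~\ref{lem:normsk} instead of the coarser row-sum estimate, together with the pointwise bound on the $q$-carrying $K^{m-j}$ factor; a further Cauchy-Schwarz in $\ell_1$ against $\sum_{\ell_1}\hat{V}(\ell_1)^2<\infty$ yields the sharper $\kF^{-3/2}$ prefactor. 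For $E_{Q_2}^{m,9}(q)$ the index shift from $r-\ell_1$ to $r-\ell-\ell_1$ never enters any coefficient bound, so both strategies transfer verbatim.

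The main obstacle is preserving the full $e(q)^{-1}$ scaling throughout, since a Cauchy-Schwarz applied only to the single $q$-carrying factor $|K^{m-j}(\ell)_{r,q}|$ yields only $e(q)^{-1/2}$. The missing $e(q)^{-1/2}$ must be recovered either from the pointwise estimate~\eqref{eq:e(q)_extraction_bounds} on the dual $q$-carrying factor $K^j(\ell)_{q,-r+\ell+\ell_1}$, available when $1\le j\le m-1$ (which forces $m\ge 2$ and makes $\sum_\ell\hat{V}(\ell)^m<\infty$ trivially), or else in the boundary cases $j\in\{0,m\}$ where $K^0=\mathds{1}$ collapses one factor into a Kronecker delta fixing $r = -q+\ell+\ell_1$ or $r = q$; in the latter case the additional structural constraint on $r$ together with the pointwise bound~\eqref{eq:K_element_bounds} and Cauchy-Schwarz in $\ell, \ell_1$ against $\sum_{\ell_1}\hat{V}(\ell_1)^2$ is used to close the estimate.
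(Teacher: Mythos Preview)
Your approach is essentially the same as the paper's: expand $\Theta^m_K(P^q)$ via the binomial identity, separate the boundary cases $j\in\{0,m\}$ (where the $r$-sum collapses) from the interior $1\le j\le m-1$, and combine the pointwise estimates~\eqref{eq:K_element_bounds}--\eqref{eq:e(q)_extraction_bounds} with Cauchy--Schwarz. The two contributions $\kF^{-3/2}\Xi^{1/2}$ and $\kF^{-1}\Xi$ are correctly identified.

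There is however one imprecision worth flagging. Your closing claim that ``the index shift from $r-\ell_1$ to $r-\ell-\ell_1$ never enters any coefficient bound, so both strategies transfer verbatim'' overlooks a structural asymmetry between $E^{m,8}_{Q_2}$ and $E^{m,9}_{Q_2}$ in the boundary case $j=0$. After the Kronecker collapse $r=-q+\ell+\ell_1$, the operator index in $E^{m,8}_{Q_2}$ becomes $r-\ell_1=-q+\ell$, which still varies with $\ell$; this is what permits the number-operator conversion $\sum_\ell\norm{a_{-q+\ell}\xi_\lambda}^2\le\norm{\cN^{1/2}\xi_\lambda}^2$ and produces the $\kF^{-3/2}\Xi^{1/2}$ bound. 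In $E^{m,9}_{Q_2}$, by contrast, the index becomes $r-\ell-\ell_1=-q$, which is \emph{frozen}: no sum over the operator momentum is available, and the only option is $\norm{a_{-q}\xi_\lambda}^2\le\Xi$, yielding $\kF^{-1}\Xi$ (this is exactly the paper's~\eqref{eq:estEQ2181_bis}). So the two terms in the stated bound are not alternative estimates for the same object but arise from genuinely different cases --- $\kF^{-3/2}\Xi^{1/2}$ from $E^{m,8}_{Q_2}$ and $\kF^{-1}\Xi$ from $E^{m,9}_{Q_2}$ at $j=0$. Your strategy 1 still closes the argument, but your explanation of why both terms are needed is not quite right.
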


\begin{proof}
We first focus on $ E^{m,8}_{Q_2}(q) $.
Splitting the anticommutator in $ E^{m,8}_{Q_2}(q) $ by \eqref{eq:q-q} we get
\begin{equation} \label{eq:EQ2181}
\begin{aligned}
	& \abs{\eva{\xi_\lambda,\left(E^{m,8}_{Q_2}(q) + \mathrm{h.c.}\right) \xi_\lambda }}
	\le 4 \sum_{j=0}^m {{m}\choose j} \sum_{\ell,\ell_1 \in \Zstar}\!\! \mathds{1}_{L_\ell}(q) |\I_j(\ell, \ell_1)| \;,
	\end{aligned}
\end{equation}
where
\begin{equation}
\begin{aligned}
	& \I_j(\ell, \ell_1)
	\coloneq \sum_{\substack{r\in L_{\ell} \cap L_{\ell_1}\\\cap (-L_{\ell}+\ell+\ell_1) \\ \cap (-L_{\ell_1}+\ell+\ell_1)}}
		\eva{\xi_\lambda, K^{m-j}(\ell)_{r,q} K^{j}(\ell)_{q,-r+\ell+\ell_1} K(\ell_1)_{r,-r+\ell+\ell_1} a^*_{r-\ell_1} a_{r-\ell_1} \xi_\lambda} \;. \\
\end{aligned}
\end{equation}
For the case $ j = 0 $,
\begin{align}
	&\sum_{\ell,\ell_1 \in \Zstar} \mathds{1}_{L_\ell}(q) |\I_0(\ell, \ell_1)| \nonumber\\
	&\leq \sum_{\ell,\ell_1 \in \Zstar} \mathds{1}_{L_\ell \cap L_{\ell_1} \cap (-L_{\ell_1} + \ell + \ell_1) \cap (-L_\ell + \ell + \ell_1)}(q)
		\norm{ K(\ell_1)_{-q+\ell+\ell_1,q} a_{-q+\ell} \xi_\lambda}
		\norm{ K^m(\ell)_{-q+\ell+\ell_1,q} a_{-q+\ell} \xi_\lambda } \nonumber\\
	&\leq C \sum_{\ell \in \Zstar} \kF^{-1} e(q)^{-1}
		\Bigg( \sum_{\ell_1 \in \Zstar} \hat{V}(\ell_1)^2 \Bigg)^{\half}
	\Bigg( \sum_{\ell_1 \in \Zstar} \mathds{1}_{L_\ell \cap (-L_\ell + \ell + \ell_1)}(q) |K^m(\ell)_{-q+\ell+\ell_1,q}|^2 \Bigg)^{\half}
		\norm{ a_{-q+\ell} \xi_\lambda }^2 \nonumber\\
	&\leq \kF^{-\frac 32} e(q)^{-1} \Xi^\half
		\Bigg( \sum_{\ell \in \Zstar} (C \hat{V}(\ell))^{2m} \Bigg)^{\half}
		\Bigg( \sum_{\ell \in \Z^3} \norm{a_{-q+\ell} \xi_\lambda }^2 \Bigg)^{\half} \nonumber\\
	&\leq C^m \kF^{-\frac 32} e(q)^{-1} \Xi^{\half} \norm{(\NN+1)^{\half} \xi_\lambda} \;. \label{eq:estEQ2181}
\end{align}
An analogous bound holds for $ j = m $. Finally, for the case $ 1 \le j \le m-1 $, which only happens for $ m \ge 2 $,
\begin{align}
	&\sum_{\ell,\ell_1 \in \Zstar} \mathds{1}_{L_\ell}(q) |\I_j(\ell, \ell_1)| \nonumber\\
	&\leq \sum_{\ell,\ell_1 \in \Zstar} \mathds{1}_{L_\ell}(q) \sum\limits_{\substack{r\in L_{\ell} \cap L_{\ell_1} \cap (-L_{\ell}+\ell+\ell_1) \\ \cap (-L_{\ell_1}+\ell+\ell_1)}} \norm{ K(\ell_1)_{r,-r+\ell+\ell_1} a_{r-\ell_1} \xi_\lambda} \norm{ K^{m-j}(\ell)_{r,q} K^j(\ell)_{q,-r+\ell+\ell_1} a_{r-\ell_1} \xi_\lambda} \nonumber\\
	&\leq \sum_{\ell \in \Zstar} \mathds{1}_{L_\ell}(q)
		\Bigg( \sum_{\ell_1 \in \Zstar} \hat{V}(\ell_1)^2 \Bigg)^{\half}
		\Xi^{\half} \kF^{-\frac 32} e(q)^{-1} (C \hat{V}(\ell))^j
		\Bigg( \sum_{r \in L_{\ell}} |K^{m-j}(\ell)_{q,r}|^2 
		\sum_{\ell_1 \in \Z^3} \norm{a_{r-\ell_1} \xi_\lambda}^2 \Bigg)^{\half} \nonumber\\
	&\leq C^m \kF^{-2} \Xi^{\half} e(q)^{-1} \norm{(\NN+1)^{\half} \xi_\lambda} \;. \label{eq:estEQ2182}
\end{align}
For $ E^{m,9}_{Q_2}(q) $, the bound is analogous up to the following modification: The term for $ j = 0 $ is
\begin{equation}
	\I_0(\ell,\ell_1)
	\coloneq \mathds{1}_{L_{\ell_1} \cap (-L_{\ell_1} + \ell+\ell_1) \cap (-L_\ell + \ell+\ell_1)}(q)
		K^m(\ell)_{q,-q+\ell+\ell_1}
		K(\ell_1)_{q,-q+\ell+\ell_1}
		\eva{\xi_\lambda, a_{-q}^* a_{-q} \xi_\lambda} \,,
\end{equation}
and we bound it by
\begin{align}
	\sum_{\ell,\ell_1 \in \Zstar} \mathds{1}_{L_\ell}(q) |\I_0(\ell,\ell_1)|
	&\leq \Bigg( \sum_{\ell, \ell_1 \in \Zstar} \mathds{1}_{L_\ell \cap (-L_\ell + \ell+\ell_1)}(q) |K^m(\ell)_{q,-q+\ell+\ell_1}|^2 \Bigg)^{\half} \times \nonumber\\
	&\quad \times \Bigg( \sum_{\ell, \ell_1 \in \Zstar} \mathds{1}_{L_{\ell_1} \cap (-L_{\ell_1} + \ell+\ell_1)}(q) |K(\ell_1)_{q,-q+\ell+\ell_1}|^2 \Bigg)^{\half}
	\norm{ a_{-q} \xi_\lambda }^2 \nonumber\\
	&\leq \Bigg( \sum_{\ell\in \Zstar} (C \hat{V}(\ell))^{2m} \kF^{-1} e(q)^{-1} \Bigg)^{\half}
		\Bigg( \sum_{\ell_1 \in \Zstar} \hat{V}(\ell_1)^2 \kF^{-1} e(q)^{-1} \Bigg)^{\half}
	\Xi \nonumber\\
	&\leq C^m
		\kF^{-1} e(q)^{-1} \Xi \;.\label{eq:estEQ2181_bis}
\end{align}
This concludes the proof.
\end{proof}

\begin{lemma} \label{lem:EQ212}
Let $ \sum_{\ell \in \Zstar} \hat{V}(\ell)^2 < \infty $. For $\xi_\lambda = e^{-\lambda S} \Omega$, there exists a constant $ C > 0 $ such that for all $ \lambda \in [0,1] $, odd $ m \in \NNN $, and $ q \in B_{\F}^c $,
\begin{equation}
	\abs{\eva{\xi_\lambda,\left(E^{m,10}_{Q_2}(q) + E^{m,11}_{Q_2}(q) + \mathrm{h.c.}\right) \xi_\lambda }}
	\leq C^m \kF^{-1} \Xi e(q)^{-1} \;. \label{eq:estEQ212}
\end{equation}
\end{lemma}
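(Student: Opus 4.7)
The plan is to exploit the fact that $E^{m,10}_{Q_2}(q)$ and $E^{m,11}_{Q_2}(q)$ are diagonal in momentum space: they are linear combinations of $a^*_p a_p$ operators (with $p = r-\ell$ and $p = r$ respectively) weighted by $\Theta^{m+1}_K(P^q)(\ell)_{r,r}$. Since for $r \in L_\ell$ we have both $r \in B_\F^c$ and $r - \ell \in B_\F$, the bootstrap quantity directly controls these expectations: $\langle \xi_\lambda, a^*_{r-\ell} a_{r-\ell} \xi_\lambda \rangle \le \Xi$ and $\langle \xi_\lambda, a_r^* a_r \xi_\lambda \rangle \le \Xi$. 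Hence
\begin{equation*}
\abs{\eva{\xi_\lambda, \left(E^{m,10}_{Q_2}(q) + E^{m,11}_{Q_2}(q) + \mathrm{h.c.} \right) \xi_\lambda}}
\le 4 \Xi \sum_{\ell \in \Zstar} \mathds{1}_{L_\ell}(q) \sum_{r \in L_\ell} \abs{\Theta^{m+1}_K(P^q)(\ell)_{r,r}} \;,
\end{equation*}
where the indicator $\mathds{1}_{L_\ell}(q)$ comes from the fact that $P^q(\ell) = 0$ unless $q \in L_\ell$. (The $\mathrm{h.c.}$ contributes an equal amount since the coefficients are real.)

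Next I would expand the iterated anticommutator via the binomial formula
\begin{equation*}
	\Theta^{m+1}_K(P^q)(\ell)_{r,r}
	= \sum_{j=0}^{m+1} \binom{m+1}{j} K(\ell)^{m+1-j}_{r,q} K(\ell)^j_{q,r}
	\;,
\end{equation*}
so that the task reduces to bounding $\sum_{r \in L_\ell} |K(\ell)^{m+1-j}_{r,q}| |K(\ell)^j_{r,q}|$ for each $j \in \{0, \ldots, m+1\}$. For $1 \le j \le m$, Cauchy--Schwarz in $r$ together with the $\Vert \cdot \Vert_{\max,2}$-type bound in~\eqref{eq:e(q)_extraction_bounds} of Lemma~\ref{lem:normsk} gives
\begin{equation*}
	\sum_{r \in L_\ell} |K(\ell)^{m+1-j}_{r,q}| |K(\ell)^j_{r,q}|
	\le (C \hat{V}(\ell))^{m+1-j} \kF^{-\half} e(q)^{-\half} \cdot (C \hat{V}(\ell))^{j} \kF^{-\half} e(q)^{-\half}
	= (C \hat{V}(\ell))^{m+1} \kF^{-1} e(q)^{-1} \;.
\end{equation*}

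The (minor) obstacle is that at the endpoints $j = 0$ and $j = m+1$ we have $K(\ell)^0 = \mathds{1}$, so naively Cauchy--Schwarz would give only $\kF^{-\half}$ from the non-unit factor. Fortunately at these endpoints the sum collapses to a single diagonal element: $\sum_{r \in L_\ell} |\delta_{r,q}| |K(\ell)^{m+1}_{r,q}| = |K(\ell)^{m+1}_{q,q}|$, which by the pointwise estimate~\eqref{eq:K_element_bounds} in Lemma~\ref{lem:normsk} with $\lambda_{\ell,q} \ge \half e(q)$ is bounded by $(C \hat{V}(\ell))^{m+1} \kF^{-1} e(q)^{-1}$, matching the other terms. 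Summing over $j$ produces a factor $2^{m+1}$, and summing over $\ell$ uses $\sum_{\ell \in \Zstar} \hat V(\ell)^{m+1} \le (\sup_\ell \hat V(\ell))^{m-1} \sum_{\ell \in \Zstar} \hat V(\ell)^2 \le C^{m-1}$ since $m \ge 1$ implies $m+1 \ge 2$. Collecting everything yields the claimed bound $C^m \kF^{-1} \Xi \, e(q)^{-1}$.
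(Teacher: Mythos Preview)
Your proof is correct and follows essentially the same approach as the paper: both expand $\Theta^{m+1}_K(P^q)(\ell)_{r,r}$ via the binomial formula~\eqref{eq:q-q}, treat the endpoints $j\in\{0,m+1\}$ via the collapse to $|K(\ell)^{m+1}_{q,q}|$ and the pointwise bound~\eqref{eq:K_element_bounds}, handle $1\le j\le m$ by Cauchy--Schwarz in $r$ with~\eqref{eq:e(q)_extraction_bounds}, and sum over $\ell$ using $m+1\ge 2$. The only cosmetic difference is that you first extract $\Xi$ uniformly from $\langle\xi_\lambda,a_p^*a_p\xi_\lambda\rangle$ before bounding the coefficients, whereas the paper carries the annihilation operators through the Cauchy--Schwarz step.
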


\begin{proof}
We show the estimate for $ E^{m,10}_{Q_2}(q) $, the term $ E^{m,11}_{Q_2}(q) $ is analogous.
Splitting the multi-anticommutator in $ E^{m,10}_{Q_2}(q) $ by \eqref{eq:q-q} yields
\begin{equation} \label{eq:EQ2121}
\begin{aligned}
	& \abs{\eva{\xi_\lambda,\left(E^{m,10}_{Q_2}(q) + \mathrm{h.c.}\right) \xi_\lambda }}
	\le 2 \sum_{j=0}^{m+1} {{m+1}\choose j} \sum_{\ell \in \Zstar}\!\! \mathds{1}_{L_\ell}(q) |\I_j(\ell)| \;,
\end{aligned}
\end{equation}
where
\begin{equation}
\begin{aligned}
	& \I_j(\ell)
	\coloneq \sum_{r\in L_{\ell}}
		\eva{\xi_\lambda, K^{m+1-j}(\ell)_{r,q} K^{j}(\ell)_{q,r} a^*_{r-\ell} a_{r-\ell} \xi_\lambda} \;. \\
\end{aligned}
\end{equation}
Applying the Cauchy--Schwarz inequality and Lemma~\ref{lem:normsk} results in
\begin{equation}
\begin{split}
	\sum_{\ell \in \Zstar} \mathds{1}_{L_\ell}(q) |\I_0(\ell)|
	& \leq \sum_{\ell \in \Zstar} \mathds{1}_{L_\ell}(q) \norm{ K^{m+1}(\ell)_{q,q} a_{q-\ell} \xi_\lambda}\norm{ a_{q-\ell} \xi_\lambda } \\
&	\leq \sum_{\ell \in \Zstar} (C \hat{V}(\ell))^{m+1}
		\kF^{-1} e(q)^{-1} \Xi \;.
\end{split}
		\label{eq:estEQ2121}
\end{equation}
Since $ m+1 \ge 2 $, we get $ \sum_{\ell \in \Zstar} (C \hat{V}(\ell))^{m+1} \le C^m < \infty $. The estimate for $ |\I_{m+1}(\ell)| $ is analogous. Finally, for $ 1 \le j \le m $, we have
\begin{equation}
\begin{split}
	\sum_{\ell \in \Zstar} \mathds{1}_{L_\ell}(q) |\I_j(\ell)|
	& \leq \sum_{\ell \in \Zstar} \mathds{1}_{L_\ell}(q)
		\sum\limits_{r \in L_{\ell}}
		\norm{ K^{m+1-j}(\ell)_{r,q} a_{r-\ell}\xi_\lambda}
		\norm{ K^j(\ell)_{q,r} a_{r-\ell} \xi_\lambda } \\
&	\leq C^m \kF^{-1} e(q)^{-1} \Xi \;.
	\end{split}
	\label{eq:estEQ2122}
\end{equation}
This concludes the proof.
\end{proof}

\begin{lemma}[Exchange contribution] \label{lem:estnqex}
Let $ \sum_{\ell \in \Zstar} \hat{V}(\ell)^2 |\ell|^\alpha < \infty $ with $ \alpha \in (0,2) $, and recall the definition  \eqref{eq:nqexm} of $ n^{\ex,m}(q) $. For $\xi_\lambda = e^{-\lambda S} \Omega$, given $ \varepsilon > 0 $, there exists $ C_\varepsilon, C > 0 $ such that for all odd $m\in \Nbb$ and for all $ q \in B_{\F}^c $ we have
\begin{equation}
\begin{aligned}
	\abs{n^{\ex,m}(q)}
	&\leq C_\varepsilon C^m \kF^{-\frac 32 + \varepsilon} e(q)^{-\frac 32}
 	\qquad \text{for all } m > 1 \;, \\
 	\abs{n^{\ex,1}(q)}
 	&\leq C_\varepsilon \kF^{-1 - \frac{\alpha}{2} + \varepsilon} e(q)^{-1}  \;.
\label{eq:estnqex_Coulomb}
\end{aligned}
\end{equation}
If $ \sum_{\ell \in \Zstar} \hat{V}(\ell) < \infty $, we have the following bound for all odd $ m \in \mathbb{N} $:
\begin{equation}
	\abs{n^{\ex,m}(q)}
	\leq C^m \kF^{-2} e(q)^{-2} \;. \label{eq:estnqex}
\end{equation}
\end{lemma}
\begin{proof}
For $ m > 1 $, we expand $ n^{\ex,m}(q) $ using \eqref{eq:q-q} to get
\begin{equation}
\begin{aligned}
	|n^{\ex,m}(q)|
	\leq \I + \II
\end{aligned}
\end{equation}
with
\begin{displaymath}
\begin{aligned}
	\I
	&:= 4 \sum_{\ell,\ell_1 \in \Zstar}
		\mathds{1}_{L_\ell \cap L_{\ell_1} \cap (-L_\ell + \ell + \ell_1) \cap (-L_{\ell_1} + \ell + \ell_1)}(q)
		\abs{K^m(\ell)_{q,-q+\ell+\ell_1}}
		\abs{K(\ell_1)_{q,-q+\ell+\ell_1}} \\
	\II
	&:= 2 \sum_{1 \le j \le m-1} {{m}\choose j} \sum_{\ell,\ell_1 \in \Zstar}
		\mathds{1}_{L_\ell}(q)
		\sum_{\substack{r\in L_{\ell} \cap L_{\ell_1}\\ \cap (-L_{\ell}+\ell+\ell_1) \\ \cap (-L_{\ell_1}+\ell+\ell_1 )}}
		\abs{K^{m-j}(\ell)_{r,q}}
		\abs{K^j(\ell)_{q,-r+\ell+\ell_1}}
		\abs{K(\ell_1)_{r,-r+\ell+\ell_1}} \;.
\end{aligned}
\end{displaymath}
Since $ m > 1 $, $ \sum_{\ell \in \Zstar} \hat{V}(\ell)^m < \infty $ holds. Lemma~\ref{lem:normsk} and $  \lambda_{\ell,q} \geq \frac{1}{2} e(q) $ imply
\begin{align}
	\I
	&\leq C \sum_{\ell,\ell_1 \in \Zstar}
		\mathds{1}_{L_\ell \cap L_{\ell_1} \cap (-L_\ell + \ell + \ell_1) \cap (-L_{\ell_1} + \ell + \ell_1)}(q)
		\frac{\kF^{-2} \hat{V}(\ell)^m \hat{V}(\ell_1)}{(\lambda_{\ell,q} + \lambda_{\ell,-q+\ell+\ell_1}) (\lambda_{\ell_1,q} + \lambda_{\ell_1,-q+\ell+\ell_1})} \nonumber\\
	&\leq C \kF^{-2} e(q)^{-\frac 32} \sum_{\ell,\ell_1 \in \Zstar} \hat{V}(\ell)^m \hat{V}(\ell_1)
		\mathds{1}_{-L_\ell + \ell + \ell_1}(q)
		\lambda_{\ell,-q+\ell+\ell_1}^{-\frac 12} \nonumber\\ 
	&\leq C \kF^{-2} e(q)^{-\frac 32} \sum_{\ell \in \Zstar} \hat{V}(\ell)^m
		\Bigg( \sum_{\ell_1 \in \Zstar} \hat{V}(\ell_1)^2 \Bigg)^{\half}
		\Bigg( \sum_{\ell_1 \in \Z^3} \mathds{1}_{-L_\ell + \ell + \ell_1}(q)
		\lambda_{\ell,-q+\ell+\ell_1}^{-1} \Bigg)^{\half} \nonumber\\
	&\leq C \kF^{-\frac 32} e(q)^{-\frac 32} \;,
\end{align}
where we used~\eqref{eq:lambdainverse}. Likewise, for the summands of $ \II $, with $ \lambda_{\ell_1,r} \ge \frac{1}{2} e(r) $ we have
\begin{align}
	&\sum_{\ell,\ell_1 \in \Zstar}
		\mathds{1}_{L_\ell}(q)
		\sum_{r\in L_{\ell} \cap L_{\ell_1} \cap (-L_{\ell}+\ell+\ell_1) \cap (-L_{\ell_1}+\ell+\ell_1 )}
		\abs{K^{m-j}(\ell)_{r,q}}
		\abs{K^j(\ell)_{q,-r+\ell+\ell_1}}
		\abs{K(\ell_1)_{r,-r+\ell+\ell_1}} \nonumber\\
	&\leq \kF^{-3} \sum_{\ell,\ell_1 \in \Zstar} \sum_{r \in L_\ell \cap L_{\ell_1} \cap (-L_{\ell}+\ell+\ell_1)}
		(C \hat{V}(\ell))^m \hat{V}(\ell_1)
		\mathds{1}_{L_\ell}(q)
		\lambda_{\ell,q}^{-1} \lambda_{\ell,q}^{-\half} \lambda_{\ell,-r+\ell+\ell_1}^{-\half} \lambda_{\ell_1,r}^{-1} \nonumber\\
	&\leq \kF^{-3} e(q)^{-\frac 32} \sum_{\ell \in \Zstar} \sum_{r \in L_\ell}
		(C \hat{V}(\ell))^m
		\Bigg( \sum_{\ell_1 \in \Zstar} \hat{V}(\ell_1)^2  \Bigg)^{\half}
		\Bigg( \sum_{\ell_1 \in \Z^3} \mathds{1}_{-L_{\ell}+\ell+\ell_1}(r) \lambda_{\ell,-r+\ell+\ell_1}^{-1} \Bigg)^{\half}
		 e(r)^{-1} \nonumber\\
	&\leq \kF^{-\frac 52} e(q)^{-\frac 32} \sum_{\ell \in \Zstar} (C \hat{V}(\ell))^m
	\sum_{r \in L_\ell} e(r)^{-1}
	\leq C^m C_\varepsilon \kF^{-\frac 32 + \varepsilon} e(q)^{-\frac 32} \;.
\end{align}
Summing over $ j $, with $ \sum_{0 \le j \le m} {{m}\choose j} = 2^m $, concludes the estimate for $ \II $.

\medskip

For $ m = 1 $, recalling \eqref{eq:Theta} and \eqref{eq:Pq} and that $K(\ell)$ is symmetric, as well as the bound for the matrix elements of $K(\ell)$ given in Lemma~\ref{lem:normsk}, we obtain
\begin{align*}
	\abs{n^{\ex,1}(q)}
	& = \Big\lvert 4 \sum_{\ell,\ell_1 \in \Zstar}\mathds{1}_{L_\ell \cap L_{\ell_1} \cap (-L_{\ell}+\ell+\ell_1) \cap (-L_{\ell_1}+\ell+\ell_1)}(q)\, K(\ell)_{q,-q+\ell+\ell_1}K(\ell_1)_{q,-q+\ell+\ell_1} \Big\rvert \\
	&\leq C \sum_{\ell,\ell_1 \in \Zstar} 
		\mathds{1}_{L_\ell \cap L_{\ell_1} \cap (-L_{\ell}+\ell+\ell_1) \cap (-L_{\ell_1}+\ell+\ell_1)}(q)
		\frac{ \hat{V}(\ell) \kF^{-1}}{\lambda_{\ell,q} + \lambda_{\ell,-q+\ell+\ell_1}}
		\frac{\hat{V}(\ell_1) \kF^{-1}}{\lambda_{\ell_1,q} + \lambda_{\ell_1,-q+\ell+\ell_1}} \;.
\end{align*}
Using $ \lambda_{\ell,q} \ge C e(q)^{\half} e(q-\ell)^{\half} $ and the Cauchy--Schwartz inequality we get
\begin{align}
	|n^{\ex,1}(q)|
	&\leq C \kF^{-2}
		\sum_{\ell \in \Zstar} \mathds{1}_{L_\ell}(q) \frac{\hat{V}(\ell)}{\lambda_{\ell,q}}
		\sum_{\ell_1 \in \Zstar} \mathds{1}_{L_{\ell_1}}(q) \frac{\hat{V}(\ell_1)}{\lambda_{\ell_1,q}} \nonumber\\
	&\leq C \kF^{-2} e(q)^{-1}
		\Bigg( \sum_{\ell \in \Zstar} \hat{V}(\ell)^2 |\ell|^{\alpha} \Bigg)
		\Bigg( \sum_{\ell \in \Zstar} \mathds{1}_{L_\ell}(q) e(q-\ell)^{-1} |\ell|^{-\alpha} \Bigg) \;.	\tagg{eq:591}
\end{align}
The first sum is finite by assumption.

To bound the second sum in \eqref{eq:591}, we write the summation region as a union of pairwise disjoint sets, $ S := \{ \ell \in \Zstar : q \in L_\ell \} = S_1 \cup S_2 \cup S_3 $, where
\begin{displaymath}
\begin{split}
	S_1 & := \{ \ell \in S : |\ell| \le \kF^{1/2} \textnormal{ and } ||q-\ell|-\kF| \le 2 \} \;, \\
	S_2 & := \{ \ell \in S : |\ell| \le \kF^{1/2}  \textnormal{ and } ||q-\ell|-\kF| > 2  \} \;, \\
	S_3 & := \{ \ell \in S : |\ell| > \kF^{1/2} \} \;.
\end{split}
\end{displaymath}
We suppress the dependence on the fixed momentum $q$ (with $\lvert q\rvert > \kF$ according to the definition of $L_\ell$) in this notation.

For $ \ell \in S_1 $, we use $ e(q-\ell) \ge \frac{1}{2} $ to obtain
\begin{equation}
	\sum_{\ell \in S_1} e(q-\ell)^{-1} |\ell|^{-\alpha}
	\leq 2 \sum_{\ell \in S_1} |\ell|^{-\alpha}
	=  \int_{\Rbb^3} f(\ell') \di \ell'
\end{equation}
where we introduced a function $f: \Rbb^3 \to \Rbb$ which is constant on cubes centered on $\ell$ by
\[
	f(\ell') := 2 \sum_{\ell \in S_1} \lvert \ell \rvert^{-\alpha} \chi_{C_\ell}(\ell')\;, \qquad C_\ell := \Big[-\frac{1}{2},\frac{1}{2} \Big]^3 + \ell \;.
\]
Next we construct a function $g: \Rbb^3 \to \Rbb$ a follows:
\begin{itemize}
   \item If $\lvert \ell'\rvert \leq 1$, define $g(\ell') := f(\ell')$. Note that $\ell \in S_1$ in particular requires $L_\ell \not= \emptyset$; this implies $\ell \not= 0$ and therefore, having integer coordinates, $\lvert \ell \rvert \geq 1$. So $g$ is bounded.
   \item If $\lvert \ell' \rvert > 1$, we extend $S_1$ from discrete momenta to continuous momenta, and then enlarge it further by a distance $\sqrt{3}/2$, namely
\begin{equation}\label{eq:S1primeprime}
\begin{split}
   S'_1 := \bigg\{ \ell \in \Rbb^3: & \ \lvert q - \ell \rvert \leq \kF + \tfrac{\sqrt{3}}{2} \textnormal{ and } \lvert \ell \rvert \leq \kF^{1/2} + \tfrac{\sqrt{3}}{2} \\
   & \ \textnormal{ and } \lvert \lvert q-\ell \rvert -\kF\rvert \leq 2 + \tfrac{\sqrt{3}}{2} \bigg\} \;.
\end{split}
\end{equation}
Then we set
	\[
	   g(\ell') := \begin{cases}
	                  \left( \lvert \ell'\rvert - \frac{\sqrt{3}}{2} \right)^{-\alpha} & \text{for } \ell' \in S_1' \;, \\
	                  0 & \text{for } \ell' \not\in S_1' \;.
	               \end{cases}
	\]
\end{itemize}
Enlarging the set to include all points up to a distance $\sqrt{3}/2$ is important because this way, $\ell' \in \operatorname{supp}(f) = \bigcup_{\ell \in S_1} C_\ell$ implies $\ell' \in S_1'$. So for all $\ell' \in \Rbb^3$ with $\lvert \ell'\rvert > 1$ we have
\begin{equation}\label{eq:flessg}
\begin{split}
   f(\ell') & \leq 2 \sum_{\ell \in S_1} \big\lvert \lvert \ell' \rvert  -  \lvert \ell - \ell' \rvert \big\rvert^{-\alpha} \chi_{C_\ell}(\ell')
   \leq 2  \big\lvert \lvert \ell' \rvert  -  \tfrac{\sqrt{3}}{2} \big\rvert^{-\alpha} \sum_{\ell \in S_1} \chi_{C_\ell}(\ell') \\ &
   \leq 2  \Big( \lvert \ell' \rvert  -  \tfrac{\sqrt{3}}{2} \Big)^{-\alpha} \chi_{S_1'}(\ell') = 2 g(\ell') \;,
\end{split}
\end{equation}
where we bounded $\lvert \ell - \ell' \rvert$ by half the length of the diagonal of the cube $C_\ell$ which contains $\ell'$.

We estimate the integral over $f$ as follows. Close to the origin we have
\begin{equation}\label{eq:A}
	\int_{\lvert \ell'\rvert \leq 1} f(\ell') \di \ell' \leq \int_{\lvert \ell'\rvert \leq 1} g(\ell') \di \ell' \leq C \;;
\end{equation}
instead for larger $\ell'$, by means of \eqref{eq:flessg}, we have
\[
	\int_{\lvert \ell' \rvert > 1} f(\ell') \di \ell' \leq 2 \int_{S_1' \cap \{\lvert
	\ell' \rvert > 1\}} g(\ell') \di \ell' \;;
\]
and to compute this integral, without loss of generality we can assume $q = \lvert q \rvert e_3$, with $e_3$ the third canonical unit vector. We introduce spherical coordinates
\[
   \ell' = \begin{pmatrix}r \sin \theta \cos \varphi\\ r \sin \theta \sin \varphi \\ r \cos\theta\end{pmatrix} \;, \quad r \in (0,\infty),\ \theta \in [0,\pi],\ \varphi \in [0,2\pi) \;.
\]
and thus obtain
\begin{align}\label{eq:inte}
\begin{split}
 & \int_{S_1' \cap \{\lvert
	\ell' \rvert > 1\}} g(\ell') \di \ell' \\
	& = \int_1^{\infty} \di r\; r^2 \left( r - \frac{\sqrt{3}}{2} \right)^{-\alpha} \int_{0}^\pi \di \theta \sin(\theta) \int_0^{2\pi} \di\varphi\; \chi_{S_1'}\Big(\begin{pmatrix}r \sin \theta \cos \varphi\\ r \sin \theta \sin \varphi \\ r \cos\theta\end{pmatrix}\Big) \;.
\end{split}
\end{align}
To prove that the integral over $\theta$ is of order $1/r$, consider \eqref{eq:S1primeprime}:  obviously
 \begin{align*}
   S_1' = \Big\{ \ell  = \begin{pmatrix}r \sin \theta \cos \varphi\\ r \sin \theta \sin \varphi \\ r \cos\theta\end{pmatrix}: \kF - 2 - \tfrac{\sqrt{3}}{2} \leq \lvert q - \ell \rvert \leq \kF + \tfrac{\sqrt{3}}{2} \textnormal{ and } r \leq \kF^{1/2} + \tfrac{\sqrt{3}}{2} \Big\} \;.
 \end{align*}
Expanding the square of $\lvert q -\ell \rvert$, the first condition defining this set becomes
\[
		\left( \kF-2 - \tfrac{\sqrt{3}}{2} \right)^2 \leq r^2 + \lvert q\rvert^2 - 2 \lvert q\rvert r \cos \theta \leq \left( \kF+ \tfrac{\sqrt{3}}{2} \right)^2 \;.
\]
Solving the two inequalities for $\cos(\theta)$ we get
\[
   u_0 \leq \cos \theta \leq u_0 + \frac{(2 + \sqrt{3})(\kF - 1)}{\lvert q\rvert r} \;,
\]
where $u_0 := (r^2 + q^2 - \kF^2 - \tfrac{3}{4} - \kF \sqrt{3})(2 \lvert q\rvert  r)^{-1}$. Recall that $q \in L_\ell$, which contains the condition $q \in \BFc$, and thus $\lvert q \rvert \geq \kF - 1$. So $u_0 \leq \cos \theta \leq u_0 + (2 + \sqrt{3})/r$ and thus
\begin{align} \label{eq:B}
\begin{split}
\int_{S_1' \cap \{\lvert
	\ell' \rvert > 1\}} g(\ell') \di \ell'
	& = \int_1^{\kF^{1/2} + \frac{\sqrt{3}}{2}} \di r\; r^2 \left( r - \frac{\sqrt{3}}{2} \right)^{-\alpha} \int_{u_0}^{u_0 + (2+\sqrt{3})/r} \di u \;  2\pi \\
	& = \int_1^{\kF^{1/2} + \frac{\sqrt{3}}{2}} \di r\; r^2 \left( r - \frac{\sqrt{3}}{2} \right)^{-\alpha} \frac{(2 + \sqrt{3})2\pi}{r} \;.
\end{split}
\end{align}
Summing \eqref{eq:A} and \eqref{eq:B}, and using $r \geq 1$, we get
\begin{align*}
	 \sum_{\ell \in S_1} e(q-\ell)^{-1} |\ell|^{-\alpha}
	&
% 	\leq \int_{\mathbb{R}^3} f(\ell') \di \ell'
% \leq C + 2 \int_1^{\kF^{1/2} + \frac{\sqrt{3}}{2}} \di r \left( r - \frac{\sqrt{3}}{2} \right)^{-\alpha} r^2 \frac{2\pi(2+\sqrt{3})}{r}
% \\	&
\leq C + C \int_1^{\kF^{1/2} + \frac{\sqrt{3}}{2}} \di r \, r^{1-\alpha}  \Big( 1 - \frac{\sqrt{3}}{2} \Big)^{-\alpha}
	\le C \kF^{1 - \frac{\alpha}{2}} \;.
\end{align*}

For $ \ell \in S_2 $, we have $ e(q-\ell) \ge C \kF $ and use the sum--of--squares function $r_3(n)$, which counts the number of representations of an integer $n$ as the sum of three squares, to write
\begin{equation}
	\sum_{\ell \in S_2} e(q-\ell)^{-1} |\ell|^{-\alpha} \leq \frac{C}{\kF} \sum_{\ell \in S_2}  |\ell|^{-\alpha}
	= \frac{C}{\kF} \sum_{m = 1}^{\kF}  m^{-\alpha/2} \sum_{\ell \in S_2} \delta_{|\ell|^2, m}
	\leq \frac{C}{\kF} \sum_{m = 1}^{\kF}  m^{-\alpha/2} r_3(m) \;.
\end{equation}
The last inequality ignores the restriction of $\ell$ to $S_2$. It is a classic result (see, e.\,g., \cite{BRS17}) that $r_3(m) \leq C_\varepsilon m^{1/2 + \varepsilon}$. Thus
\begin{equation}
	\sum_{\ell \in S_2} e(q-\ell)^{-1} |\ell|^{-\alpha}
	\leq \frac{C_\varepsilon}{\kF} \sum_{m = 1}^{\kF}  m^{-\alpha/2}  m^{1/2 + \varepsilon}
	\leq \frac{C_\varepsilon}{\kF} \int_1^{\kF} m^{-\alpha/2 + 1/2 + \varepsilon} \di m
	\leq C_\varepsilon \kF^{\frac{1}{2} - \frac{\alpha}{2} + \varepsilon} \;.
\end{equation}

\medskip

Finally, for $ \ell \in S_3 $, we have $ |\ell|^{-\alpha} \le \kF^{-\alpha/2} $ and thus, using \eqref{eq:lambdainverse}, the estimate
\begin{equation}\label{eq:above}
	\sum_{\ell \in S_3} e(q-\ell)^{-1} |\ell|^{-\alpha}
	\le \kF^{-\alpha/2} \sum_{\ell \in S_3} e(q-\ell)^{-1}
	\le C_\varepsilon \kF^{1 + \varepsilon - \alpha/2} \;.
\end{equation}

Taking together the three cases we get
\begin{equation}
	\sum_{\ell \in \Zstar} \mathds{1}_{L_\ell}(q) e(q-\ell)^{-1} |\ell|^{-\alpha}
	\le C_\varepsilon \kF^{1 + \varepsilon - \frac{\alpha}{2}}
\end{equation}
	which implies
\begin{equation}	\label{eq:nex-final}
	|n^{\ex,1}(q)|
	\leq C_\varepsilon \kF^{-1 - \frac{\alpha}{2} + \varepsilon} e(q)^{-1} \;.
\end{equation}
A weaker, but still sufficient and simpler estimate is given in the footnote\footnote{
Write  $ S = S_1 \cup S_2$, where $S_1  := \{ \ell \in S : |\ell| \le \kF^{1/3} \} $ and $S_2  := \{ \ell \in S : |\ell| > \kF^{1/3} \} $. For $ \ell \in S_1 $, we use $ e(q-\ell) \ge \frac{1}{2} $ to obtain
\begin{equation}
	\sum_{\ell \in S_1} e(q-\ell)^{-1} |\ell|^{-\alpha} \leq 2 \sum_{\ell \in S_1}  |\ell|^{-\alpha}
	= 2 \sum_{m = 1}^{\kF^{2/3}}  m^{-\alpha/2} \sum_{\ell \in S_1} \delta_{|\ell|^2, m}
	\leq 2 \sum_{m = 1}^{\kF^{2/3}}  m^{-\alpha/2} r_3(m) \;.
\end{equation}
(The last inequality is far from optimal because it ignores the restriction of $\ell$ to $S_1$.) By the classic bound
\begin{equation}
	\sum_{\ell \in S_1} e(q-\ell)^{-1} |\ell|^{-\alpha}
	\leq 2 C_\varepsilon \sum_{m = 1}^{\kF^{2/3}}  m^{-\alpha/2}  m^{1/2 + \varepsilon}
	\leq C_\varepsilon \int_1^{\kF^{2/3}} m^{-\alpha/2 + 1/2 + \varepsilon} \di m
	\leq C_\varepsilon \kF^{-\alpha/3 + 1 + \varepsilon} \;.
\end{equation}
For $ \ell \in S_2 $, we have $ |\ell|^{-\alpha} \le \kF^{-\alpha/3} $, from where we can proceed as in \eqref{eq:above} and find a bound by $ C_\varepsilon \kF^{1 + \varepsilon - \alpha/3}$. Summing the two cases implies \eqref{eq:nex-final} with the exponent $\alpha/2$ replaced by $\alpha/3$.
}.

\bigskip

If the stronger hypothesis $ \sum_{\ell \in \Zstar} \hat{V}(\ell) < \infty $ holds, we get \eqref{eq:estnqex} using Lemma~\ref{lem:normsk} and $ \sum_{0 \le j \le m} {{m}\choose j} = 2^m $ via
\begin{align}
	\I + \II
	&\leq C^m \kF^{-2} e(q)^{-2} \sum_{\ell,\ell_1 \in \Zstar}
		\hat{V}(\ell)^m
		\hat{V}(\ell_1)
	+ C^m \kF^{-2} e(q)^{-2} \sum_{\ell,\ell_1 \in \Zstar}
		\hat{V}(\ell)^m
		\norm{K(\ell_1)}_{\max,1} \nonumber\\
	&\leq C^m
		\Bigg( \sum\limits_{\ell \in \Zstar} \hat{V}(\ell)^m \Bigg)
		\Bigg( \sum\limits_{\ell_1 \in \Zstar} \hat{V}(\ell_1) \Bigg)
		\kF^{-2} e(q)^{-2}
	\leq C^m \kF^{-2} e(q)^{-2} \;. \qedhere \nonumber
\end{align}
\end{proof}

\begin{proof}[Proof of Proposition~\ref{prop:finEQ2est}]
We sum the estimates from Lemmas~\ref{lem:EQ211}--\ref{lem:estnqex} and use $ e(q) \ge \half $, $ \Xi \le 1 $, and $ \norm{(\NN+1)^{5/2} \xi_\lambda} \le C $ from Lemma~\ref{lem:gronNest}.
\end{proof}

\begin{proof}[Proof of Proposition~\ref{prop:finalEmest}]
We use Propositions~\ref{prop:finEQ1est} and~\ref{prop:finEQ2est} in \eqref{eq:errEm2}, taking the supremum over $ \lambda \in [0,1] $.
\end{proof}

\section{Analysis of the Leading-Order Term}
\label{sec:leading_order_analysis}

In this section we show that the first term in~\eqref{eq:finexpan} equals $ n^{\RPA}(q) $ defined in \eqref{eq:nqb}. Moreover we establish the scaling $ n^{\RPA}(q) \sim C \kF^{-1} $.

\begin{lemma}[Integral formula for $ n^{\RPA}(q) $] \label{lem:nqb_integralrecovery}
Let $q \in B^c_{\F}$, then
\begin{equation} \label{eq:nqb_integralrecovery}
	\half\sum_{\ell\in \Zstar}\mathds{1}_{L_\ell}(q) \big( \cosh(2K(\ell)) - 1 \big)_{q,q} = n^{\RPA}(q)\;.
\end{equation}
\end{lemma}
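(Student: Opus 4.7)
The strategy is to compute the diagonal matrix element of $\cosh(2K(\ell))$ for a fixed $\ell\in\Zstar$ with $q\in L_\ell$, and show that it reproduces the integrand appearing in~\eqref{eq:nqb}; summing in $\ell$ then yields~\eqref{eq:nqb_integralrecovery}. Throughout I suppress the $\ell$ from the notation, writing $h=h(\ell)$, $P=P(\ell)$, $K=K(\ell)$, $\lambda_p=\lambda_{\ell,p}$, $g=g_\ell$.

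The first step is algebraic. From the definition~\eqref{eq:K} of $K$, we have $e^{-2K}=M$ and $e^{2K}=M^{-1}$ where $M:=h^{-1/2}E^{1/2}h^{-1/2}$ and $E:=h^{1/2}(h+2P)h^{1/2}$. Using $\cosh(2K)-1=2\sinh^2(K)$ together with $\sinh(K)=\tfrac{1}{2}(M^{-1/2}-M^{1/2})$, and observing that $M^{1/2}$ and $M^{-1/2}$ commute, I obtain
\begin{equation}
	\tfrac{1}{2}\bigl(\cosh(2K)-1\bigr)_{q,q} \;=\; \sinh^2(K)_{q,q} \;=\; \tfrac{1}{4}\bigl(M_{q,q}+M^{-1}_{q,q}-2\bigr) \;.
\end{equation}

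The second step reduces the problem to a rank-one computation. Since $P=\lvert v\rangle\langle v\rvert$ with $v_p=g^{1/2}$, setting $u:=h^{1/2}v$, i.e.\ $u_p=(g\lambda_p)^{1/2}$, gives $E=h^2+2\lvert u\rangle\langle u\rvert$, a rank-one perturbation of the diagonal matrix $h^2$. The Sherman--Morrison formula then gives, for any $t>0$,
\begin{equation}
	(E+t^2)^{-1} \;=\; (h^2+t^2)^{-1} \;-\; \frac{2\,(h^2+t^2)^{-1}\lvert u\rangle\langle u\rvert(h^2+t^2)^{-1}}{1+2g\,S(t)} \;,
	\qquad S(t):=\sum_{p\in L_\ell}\frac{\lambda_p}{\lambda_p^2+t^2} \;.
\end{equation}
The denominator $1+2g\,S(t)$ matches exactly the denominator appearing in~\eqref{eq:nqb}.

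The third step uses integral representations of square roots. For $E^{-1/2}$ I use the standard formula $E^{-1/2}=\tfrac{2}{\pi}\int_0^\infty (E+t^2)^{-1}\,dt$; for $E^{1/2}$ I use its convergent variant
\begin{equation}
	E^{1/2}-h \;=\; \tfrac{2}{\pi}\int_0^\infty t^2\Bigl[(h^2+t^2)^{-1}-(E+t^2)^{-1}\Bigr]dt \;,
\end{equation}
which I justify by combining $E^{1/2}=\tfrac{2}{\pi}\int_0^\infty\bigl(1-t^2(E+t^2)^{-1}\bigr)dt$ with the analogous identity for $h$ (both regularized at a common cutoff which then drops out). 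Inserting the Sherman--Morrison expression and evaluating at $(q,q)$ using $(h^2+t^2)^{-1}u$ at entry $p$ equal to $(g\lambda_p)^{1/2}/(\lambda_p^2+t^2)$, together with the elementary $\tfrac{2}{\pi}\int_0^\infty (\lambda_q^2+t^2)^{-1}dt=1/\lambda_q$, yields
\begin{align}
	M^{-1}_{q,q} &\;=\; \lambda_q E^{-1/2}_{q,q} \;=\; 1 \,-\, \frac{4 g\lambda_q^2}{\pi}\int_0^\infty \frac{dt}{(\lambda_q^2+t^2)^2\bigl(1+2g\,S(t)\bigr)} \;, \\
	M_{q,q} &\;=\; \lambda_q^{-1}E^{1/2}_{q,q} \;=\; 1 \,+\, \frac{4 g}{\pi}\int_0^\infty \frac{t^2\,dt}{(\lambda_q^2+t^2)^2\bigl(1+2g\,S(t)\bigr)} \;.
\end{align}

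The final step is to assemble the pieces: the two constant $1$'s cancel the $-2$, and the remaining integrals combine into a single integral whose numerator is $g(t^2-\lambda_q^2)$. This yields
\begin{equation}
	\sinh^2(K)_{q,q} \;=\; \frac{g}{\pi}\int_0^\infty \frac{(t^2-\lambda_q^2)\,dt}{(\lambda_q^2+t^2)^2\bigl(1+2g\,S(t)\bigr)} \;,
\end{equation}
which is precisely the $\ell$-th summand of $n^{\RPA}(q)$ in~\eqref{eq:nqb}. Multiplying by $\mathds{1}_{L_\ell}(q)$ and summing over $\ell\in\Zstar$ completes the proof. The only delicate point is the regularization argument used to define $E^{1/2}-h$ via a convergent integral; this is routine but must be stated carefully because the naive integral for $E^{1/2}$ alone diverges.
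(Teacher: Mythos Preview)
Your proof is correct and takes essentially the same approach as the paper: both compute $\tfrac12(\cosh(2K)-1)_{q,q}=\tfrac14(e^{-2K}+e^{2K}-2)_{q,q}$, express $e^{\pm 2K}$ via $h^{\mp 1/2}E^{\pm 1/2}h^{\mp 1/2}$, apply the integral representations $A^{\pm 1/2}=\tfrac{2}{\pi}\int_0^\infty\bigl(\mathds{1}-\tfrac{t^2}{A+t^2}\bigr)^{\pm 1}dt$ together with Sherman--Morrison for the rank-one perturbation, and combine the resulting integrals. The only cosmetic differences are that you route through $\sinh^2(K)$ and subtract $h$ to regularize the $E^{1/2}$ integral, whereas the paper keeps the integrand in the form $A(A+t^2)^{-1}$, which is already convergent---so your ``delicate point'' is in fact harmless.
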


\begin{proof}
We drop the $ \ell $-dependence of $K(\ell) $, $ h(\ell) $ and $ P(\ell) = |v_\ell \rangle \langle v_\ell| $. Obviously
\begin{equation} \label{eq:coshrewriting}
	\cosh(2K)-1
	= \half\big((e^{-2K}-1)-(1-e^{2K})\big) \;.
\end{equation}
Using the notation $ P_w = |w \rangle \langle w| $, so $ P = P_v $, from \eqref{eq:K} we get
\begin{equation} \label{eq:e-2k}
	e^{-2K} = h^{-\half} \big(h^2 +2P_{h^{\half} v}\big)^{\half} h^{-\half} \;, \qquad
	e^{2K} = h^{\half} \big(h^2 +2P_{h^{\half} v}\big)^{-\half} h^{\half} \;.
\end{equation}
We then express $ (e^{-2K}-1)_{q,q} $ and $ (1-e^{2K})_{q,q} $ using the identities
\begin{equation} \label{eq:intid}
	A^\half = \frac{2}{\pi} \int_0^\infty \left(1- \frac{t^2}{A+t^2}\right) \mathrm{d}t \;,\qquad
	A^{-\half} = \frac{2}{\pi} \int_0^\infty \frac{\mathrm{d}t}{A+t^2} \;,
\end{equation}
for any symmetric, invertible matrix $ A $, as well as the Sherman--Morrison formula
\begin{equation} \label{eq:shermor}
	(A+cP_w)^{-1} = A^{-1} - \frac{c}{1+c\eva{w, A^{-1}w}}P_{A^{-1}w} \;,
\end{equation}
for any $ c \in \C $ and $ w \in \ell^2(L_\ell) $ such that the denominator is non-vanishing. We begin with
\begin{align}
	\big(h^2 +2P_{h^{\half} v}\big)^{\half}
	&= \frac{2}{\pi} \int_0^\infty \Bigg( 1- \frac{t^2}{t^2+h^2} + \frac{2 t^2}{1+ 2 \big\langle h^{\half} v ,(t^2+h^2)^{-1} h^\half v \big\rangle } P_{(t^2+h^2)^{-1}h^{\half} v} \Bigg) \mathrm{d}t \nonumber\\
	&= h + \frac{2}{\pi} \int_0^\infty \frac{2t^2}{1+ 2 \big\langle h^{\half} v ,(t^2+h^2)^{-1} h^\half v \big\rangle }  P_{(t^2+h^2)^{-1}h^{\half} v}\mathrm{d}t \;.
\end{align}
Using the canonical basis vectors $ (e_p)_{p \in L_\ell} $ with $ h e_q = \lambda_{\ell,q} e_q $ and $ g_\ell = \langle e_p,v \rangle^2 $, this implies
\begin{align}
	(e^{-2K}-1)_{q,q}
	&= \eva{e_q, h^{-\half} \big(h^2 +2P_{h^{\half} v}\big)^{\half} h^{-\half} e_q} - 1\nonumber\\
	&= \frac{2}{\pi} \int_0^\infty \frac{2t^2}{1+ 2 \big\langle h^{\half} v ,(t^2+h^2)^{-1} h^\half v \big\rangle } \eva{e_q,h^{-\half} P_{(t^2+h^2)^{-1}h^{\half} v}h^{-\half} e_q}\mathrm{d}t\nonumber\\
	&= \frac{2}{\pi} \int_0^\infty \frac{2g_\ell t^2 (t^2+\lambda^2_{\ell,q})^{-2}}{1+ 2g_\ell\sum_{p \in L_\ell}\lambda_{\ell,p}(t^2+\lambda^2_{\ell,p})^{-1} } \mathrm{d}t \;. \label{eq:e-2k_integral}
\end{align}
Similarly we arrive at
\begin{equation} \label{eq:e2kfin}
	(1-e^{2K})_{q,q}
	= \frac{2}{\pi} \int_0^\infty \frac{2g_\ell \lambda_{\ell,q}^2 (t^2+\lambda^2_{\ell,q})^{-2}}{1+ 2g_\ell\sum_{p \in L_{\ell}}\lambda_{\ell,p}(t^2+\lambda^2_{\ell,p})^{-1} } \mathrm{d}t \;.
\end{equation}
Summing both terms we obtain the claimed result.
\end{proof}

\begin{lemma}[Estimate for the RPA contribution] \label{lem:nqb_bounds}
For any potential $ \hat{V} \in \ell^1(\Zstar) $, there exists some $ C > 0 $
such that for all $ q \in \Z^3 $ we have
\begin{equation} \label{eq:nqb_upperbound}
	n^{\RPA}(q)
	\le C \kF^{-1} e(q)^{-1} \;.
\end{equation}
\end{lemma}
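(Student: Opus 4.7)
The plan is to bypass direct estimation of the oscillatory integral in~\eqref{eq:nqb} and instead exploit the algebraic identity from Lemma~\ref{lem:nqb_integralrecovery}, which rewrites
\[
	n^{\RPA}(q) = \frac{1}{2}\sum_{\ell \in \Zstar} \mathds{1}_{L_\ell}(q)\bigl(\cosh(2K(\ell)) - 1\bigr)_{q,q}.
\]
Since $K(\ell)$ is symmetric, $\cosh(2K(\ell)) - \mathds{1}$ is positive semi-definite, so its diagonal entries are non-negative and equal to the absolutely convergent series
\[
	\bigl(\cosh(2K(\ell)) - 1\bigr)_{q,q} = \sum_{m=1}^\infty \frac{\bigl((2K(\ell))^{2m}\bigr)_{q,q}}{(2m)!}.
\]
Estimating term by term, I would apply the diagonal bound from~\eqref{eq:K_element_bounds} of Lemma~\ref{lem:normsk}, namely $|(K(\ell)^{2m})_{q,q}| \le (C\hat{V}(\ell))^{2m}\kF^{-1}(2\lambda_{\ell,q})^{-1}$, together with $\lambda_{\ell,q} \ge \tfrac{1}{2}e(q)$, to extract the $\kF^{-1} e(q)^{-1}$ factor uniformly in $m$.

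Summing in $m$ then produces the scalar series $\sum_{m \ge 1}(2C\hat{V}(\ell))^{2m}/(2m)! = \cosh(2C\hat{V}(\ell)) - 1$. Because $\hat{V} \in \ell^1(\Zstar)$ is in particular bounded, the elementary inequality $\cosh(x) - 1 \le \tfrac{x^2}{2}\cosh(x)$ applied for $|x| \le 2C\Vert\hat{V}\Vert_\infty$ gives the pointwise bound $\cosh(2C\hat{V}(\ell)) - 1 \le C'\hat{V}(\ell)^2$ uniformly in $\ell$. Finally, summing over $\ell$ and estimating $\sum_\ell \hat{V}(\ell)^2 \le \Vert\hat{V}\Vert_\infty \Vert\hat{V}\Vert_1 < \infty$ yields the desired $n^{\RPA}(q) \le C\kF^{-1}e(q)^{-1}$. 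The identical argument handles $q \in B_{\F}$ via the companion formula~\eqref{eq:inside}, with $q$ replaced by $q+\ell$.

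I do not foresee any substantive obstacle: the key pointwise bound on arbitrary matrix powers of $K(\ell)$ has been established in Lemma~\ref{lem:normsk}, and the remainder is routine manipulation of absolutely convergent scalar series. The only conceptual observation is that $\ell^1$ summability of $\hat{V}$, combined with the automatic boundedness $\ell^1 \subset \ell^\infty$, is exactly the ingredient required to upgrade the per-$\ell$ quadratic Taylor-tail bound $\cosh(2C\hat{V}(\ell)) - 1 \lesssim \hat{V}(\ell)^2$ to a finite sum over momentum transfers.
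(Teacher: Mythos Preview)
Your proof is correct and follows essentially the same approach as the paper: both use the identity~\eqref{eq:nqb_integralrecovery}, expand the $\cosh$, apply the elementwise bound~\eqref{eq:K_element_bounds} together with $\lambda_{\ell,q}\ge \tfrac12 e(q)$, and then sum over $m$ and $\ell$ using $\hat V\in\ell^1$. The only difference is cosmetic---you spell out the intermediate bound $\cosh(2C\hat V(\ell))-1\le C'\hat V(\ell)^2$ explicitly, whereas the paper compresses this into a single displayed chain of inequalities.
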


\begin{proof}
We focus on the case $ q \in B_{\F}^c $; $ q \in B_{\F} $ is treated analogously. We use \eqref{eq:nqb_integralrecovery}, expand the $ \cosh $, and use Lemma~\ref{lem:normsk}, as well as $ \lambda_{\ell,q} \ge \frac{1}{2} e(q) $ to get
\[
	n^{\RPA}(q)
	\le \half \sum_{\ell \in \Zstar} \mathds{1}_{L_\ell}(q) \sum_{m=1}^{\infty} \frac{4^m |(K(\ell)^{2m})_{q,q}|}{(2m)!}
	\le \sum_{\ell \in \Zstar} \frac{\kF^{-1}}{\lambda_{\ell,q}} \sum_{m=1}^{\infty} \frac{C^m \hat{V}(\ell)^{2m}}{(2m)!}
	\le C \frac{\kF^{-1}}{e(q)} \;. \qedhere
\]
\end{proof}

\section{Conclusion of the Proof of Proposition~\ref{prop:main}}
\label{sec:mainthmproof}

\begin{proof}[Proof of Proposition~\ref{prop:main}]
We focus on the case $ q \in B_{\F}^c $, since $ q \in B_{\F} $ is completely analogous. Recall the trial state $ \Psi_N = R e^{-S} \Omega $, so Proposition~\ref{prop:finexpan} yields
\begin{align*}
	\eva{\Omega, e^{S} a_q^* a_q e^{-S} \Omega} 
	&= \half\sum_{\ell\in \Zstar}\mathds{1}_{L_\ell}(q) \sum_{\substack{m=2\\m:\textnormal{ even}}}^n \frac{((2K(\ell))^m)_{q,q}}{m!}
		+ \half \sum_{m=1}^{n-1} \eva{\Omega, E_m(P^q)\Omega}\nonumber\\
	&\quad +\half \int_{\Delta^n} \di^n\underline{\lambda} \;
		\eva{\Omega, e^{\lambda_n S}Q_{\sigma(n)}(\Theta^n_{K}(P^q)) e^{-\lambda_n S} \Omega} \;,
\end{align*}
for any $ n \in \N $. As $ n \to \infty $, the third term vanishes by Proposition~\ref{prop:headerr}, while the first one by Lemma~\ref{eq:nqb_integralrecovery} converges to
\begin{equation*}
	\half\sum_{\ell\in \Zstar}\mathds{1}_{L_\ell}(q) \big( \cosh(2K(\ell)) - 1 \big)_{q,q}
	= n^{\RPA}(q) \;.
\end{equation*}
We estimate the $ E_m(P^q) $-terms by Proposition~\ref{prop:finalEmest} to get
\begin{align} \label{eq:main_errorbound_with_Xi}
	& \abs{\eva{\Omega, e^{S} a_q^* a_q e^{-S} \Omega} - n^{\textnormal{RPA}}(q) - \frac{1}{4} n^{\ex,1}(q) } \nonumber\\
	&\le C_\varepsilon \sum_{m=1}^\infty \frac{C^m}{m!}
		\bigg( e(q)^{-1}\left( \kF^{-\frac 32 + \varepsilon}
		+ \kF^{-1 - \frac{\alpha \gamma}{2}}
		+ \kF^{-1 + \frac{3-\alpha}{2} \gamma} \Xi^\half
		+ \kF^{-1+\varepsilon} \Xi^\half \right) \nonumber\\
	&\hspace{7em} + e(q)^{-\half} \kF^{-1} \sup_{\lambda \in [0,1]} \eva{\Omega, e^{\lambda S} a_q^* a_q e^{-\lambda S} \Omega}^{\half} \bigg) \;.
\end{align}
We observe also that the expectation value $\eva{\Omega, e^{\lambda S} a_q^* a_q e^{-\lambda S} \Omega}$ with $\lambda \in [0,1]$ can be expanded the same way; the only difference is that everywhere the matrices $K(k)$ are replaced by $\lambda K(k)$. In particular, all our estimates remain valid since $\lambda \leq 1$, and therefore, we can easily take the supremum over $\lambda \in [0,1]$ in the all following steps.
\paragraph{Global bootstrap:} To estimate $ \Xi = \sup_{q \in \Z^3} \sup_{\lambda \in [0,1]} \eva{\Omega, e^{\lambda S} a_q^* a_q e^{- \lambda S} \Omega} $, observe that \eqref{eq:main_errorbound_with_Xi} holds uniformly in $ q \in B_{\F}^c $, and it remains true for $ q \in B_{\F} $. Moreover, we have  $ \sup_{\lambda \in [0,1]} \eva{\Omega, e^{\lambda S} a_q^* a_q e^{-\lambda S} \Omega} \le \Xi $. So setting $ \gamma := 0 $ and $\varepsilon := \frac{\alpha}{4}$ in~\eqref{eq:main_errorbound_with_Xi}, using $e(q)^{-1} \leq 2$, taking the supremum over $ q \in \Z^3 $ and $ \lambda \in [0,1] $, and estimating $ n^{\RPA}(q) $ by Lemma~\ref{lem:nqb_bounds} and $n^{\ex,1}(q)$ by Lemma~\ref{lem:estnqex}, we get
\begin{align} \label{eq:Xibound}
	\Xi
	&\le \sup_{q \in \Z^3} n^{\RPA}(q) + \sup_{q \in \Z^3} \frac{n^{\ex,1}(q)}{4}
		+ C_\varepsilon \left( \kF^{-1}
		+ \kF^{-1 + \varepsilon} \Xi^\half \right)
	\le  \left( \kF^{-1} + C \kF^{-1 + \frac{\alpha}{4}} \Xi^{1/2}
	\right) \nonumber
\end{align}
which by the Cauchy--Schwartz inequality implies
\begin{align}
		\Xi \leq C \kF^{-1} + C \kF^{-1+\frac{\alpha}{2}} \Xi \;.
\end{align}
This can be resolved for
\begin{align}
	\Xi
	 \le \frac{C}{1-C \kF^{-1 + \frac{\alpha}{2}}} \kF^{-1} \leq C \kF^{-1}
\end{align}
where in the last step we replaced the fraction by a (larger) constant, which is a valid upper bound for any $\kF^{-1}$ large enough.
\paragraph{Local bootstrap:} Plugging this bound into~\eqref{eq:main_errorbound_with_Xi}, again with $\gamma=0$ and $\varepsilon = \frac{\alpha}{4}$, and taking only the supremum over $ \lambda \in [0,1] $ (not over $q$), gives
\begin{align}
	&\sup_{\lambda \in [0,1]} \eva{\Omega, e^{\lambda S} a_q^* a_q e^{-\lambda S} \Omega} \nonumber\\
	&\le n^{\RPA}(q) + \frac{n^{\ex,1}(q)}{4}
	+ C \kF^{-1} e(q)^{-1}
		+ C \kF^{-1} e(q)^{-\half} \Big(\sup_{\lambda \in [0,1]} \eva{\Omega, e^{\lambda S} a_q^* a_q e^{-\lambda S} \Omega} \Big)^{\half}
\end{align}
which, estimating $ n^{\RPA}(q) $ by Lemma~\ref{lem:nqb_bounds}  and $n^{\ex,1}(q)$ by Lemma~\ref{lem:estnqex}, implies
\begin{displaymath}
	\sup_{\lambda \in [0,1]} \eva{\Omega, e^{\lambda S} a_q^* a_q e^{-\lambda S} \Omega}
	\le
	C \kF^{-1} e(q)^{-1}
% 	+ {C \kF^{-1 - \frac{\alpha}{4}} e(q)^{-1}}
		+ C \kF^{-1} \Big( e(q)^{-1}  + \sup_{\lambda \in [0,1]} \eva{\Omega, e^{\lambda S} a_q^* a_q e^{-\lambda S} \Omega} \Big) \;.
\end{displaymath}
This can be resolved for $\sup_{\lambda \in [0,1]} \eva{\Omega, e^{\lambda S} a_q^* a_q e^{-\lambda S} \Omega}$ to yield
\begin{align}\label{eq:aqaq_bound}
	\sup_{\lambda \in [0,1]} \eva{\Omega, e^{\lambda S} a_q^* a_q e^{-\lambda S} \Omega}
% 	\le \frac{2C}{1-C \kF^{-1}} \kF^{-1} e(q)^{-1}
	\leq C \kF^{-1} e(q)^{-1} \;.
\end{align}

\paragraph{Conclusion of the bootstrap:} Inserting \eqref{eq:Xibound} and~\eqref{eq:aqaq_bound} into~\eqref{eq:main_errorbound_with_Xi} and choosing the optimal value $ \gamma = \frac 13 $ implies that for all $\varepsilon > 0$ there exists $C_\varepsilon$ such that we have
\begin{align}
	\abs{\eva{\Omega, e^{S} a_q^* a_q e^{-S} \Omega} - n^{\textnormal{RPA}}(q) -\frac{n^{\ex,1}}{4}} \le C_\varepsilon
		\kF^{-1} e(q)^{-1}\left( \kF^{-\frac{1}{2} + \varepsilon}
		+ \kF^{- \frac{\alpha}{6}}\right) \;.
\end{align}
This is the final result~\eqref{eq:main_prop_2}.

\bigskip

Under the assumption of \eqref{hyp:ell1}, Proposition~\ref{prop:finalEmest} provides us with
\begin{align} \label{eq:main_errorbound_with_Xi_subcoulomb}
	& \abs{\eva{\Omega, e^{S} a_q^* a_q e^{-S} \Omega} - n^{\textnormal{RPA}}(q)
- \frac{n^{\ex,1}(q)}{4}
	}  \\
	& \le
	C_\varepsilon
		\Big(  e(q)^{-1} \left( \kF^{-2}
		+ \kF^{-\frac{3}{2}} \Xi^\half
		+ \kF^{-1} \Xi^{1-\varepsilon} \right)  + e(q)^{-\half} \kF^{-1} \Xi^{\half - \varepsilon} \sup_{\lambda \in [0,1]} \eva{\Omega, e^{\lambda S} a_q^* a_q e^{-\lambda S} \Omega}^{\half} \Big) \,. \nonumber
\end{align}
Estimating $ \Xi $ and $ \eva{\Omega, e^{\lambda S} a_q^* a_q e^{-\lambda S} \Omega} $ by Lemma~\ref{lem:estnqex},~\eqref{eq:Xibound}, and~\eqref{eq:aqaq_bound}, we obtain the improved error bound~\eqref{eq:main_prop_improvederror}.
\end{proof}

\section*{Acknowledgments}
SL would like to thank Phan Thành Nam for discussions. NB was supported by the European Union through the ERC Starting Grant \textsc{FermiMath}, grant agreement nr.~101040991. SL was partially supported by the ERC Starting Grant \textsc{FermiMath}, grant agreement nr.~101040991, and partially by the ERC Advanced Grant MathBEC, grant agreement nr.~101095820. Views and opinions expressed are those of the authors and do not necessarily reflect those of the European Union or the European Research Council Executive Agency. Neither the European Union nor the granting authority can be held responsible for them. The authors were partially supported by Gruppo Nazionale per la Fisica Matematica in Italy.

\section*{Statements and Declarations}
The authors have no competing interests to declare.

\section*{Data Availability}
As purely mathematical research, there are no datasets related to this article.

\footnotesize
\newcommand{\etalchar}[1]{$^{#1}$}

\end{document}